\tikzset{join/.code=\tikzset{after node path={%
\ifx\tikzchainprevious\pgfutil@empty\else(\tikzchainprevious)%
edge[every join]#1(\tikzchaincurrent)\fi}}}
\tikzset{>=stealth',every on chain/.append style={join},
         every join/.style={->}}
\tikzstyle{labeled}=[execute at begin node=$\scriptstyle,
\numberwithin{equation}{section}
\newtheorem{definition}{Definition}[section]
\newtheorem{proposition}[definition]{Proposition}
\newtheorem{corollary}[definition]{Corollary}
\newtheorem{remarkth}[definition]{Remark}
\newenvironment{remark}{\begin{remarkth}\upshape}{\hfill$\diamond$\end{remarkth}}
\renewcommand{\emph}[1]{{\bfseries\itshape{#1}}}
\newcommand{\lcf}{\lbrack\! \lbrack}
\newcommand{\rcf}{\rbrack\! \rbrack}
\newcommand{\lvec}[1]{\overleftarrow{#1}}
\newcommand{\rvec}[1]{\overrightarrow{#1}}
\newcommand\prol{\@ifstar{\@proldf}{\@prolpf}}  
\def\@prolpf{\@ifnextchar[{\@prolpf@wrt}{\@prolpf@}}
\def\@prolpf@wrt[#1]#2{\@ifnextchar[{\@prolpf@wrt@at{#1}{#2}}{\@prolpf@wrt@{#1}{#2}}}
\def\@prolpf@wrt@at#1#2[#3]{\prolsymbol^{#1}_{#3}#2}
\def\@prolpf@wrt@#1#2{\prolsymbol^{#1}#2}
\def\@prolpf@#1{\@ifnextchar[{\@prolpf@at{#1}}{\@prolpf@@{#1}}}
\def\@prolpf@at#1[#2]{\prolsymbol_{#2}#1}
\def\@prolpf@@#1{\prolsymbol#1}
\def\@proldf{\@ifnextchar[{\@proldf@wrt}{\@proldf@}}
\def\@proldf@wrt[#1]#2{\@ifnextchar[{\@proldf@wrt@at{#1}{#2}}{\@proldf@wrt@{#1}{#2}}}
\def\@proldf@wrt@at#1#2[#3]{\prolsymbol^{*#1}_{#3}#2}
\def\@proldf@wrt@#1#2{\prolsymbol^{*#1}#2}
\def\@proldf@#1{\@ifnextchar[{\@proldf@at{#1}}{\@proldf@@{#1}}}
\def\@proldf@at#1[#2]{\prolsymbol^*_{#2}#1}
\def\@proldf@@#1{\prolsymbol^*#1}
\def\prolsymbol{\mathcal{T}}
\def\lcf{\lbrack\! \lbrack}
\def\rcf{\rbrack\! \rbrack}
\def\lcf{\lbrack\! \lbrack}
\def\rcf{\rbrack\! \rbrack}
\begin{document}

\title[ Invariant Poisson-Nijenhuis structures on Lie groups and classification]{Invariant Poisson-Nijenhuis structures on Lie groups and classification}

\author[Z. Ravanpak]{Zohreh Ravanpak}
\address{Z.\ Ravanpak:
	Azarbaijan Shahid Madani University \\
	Department of Mathematics, Faculty of Science \\
	Tabriz, Iran, 
  and Institute of Mathematics\\ Polish Academy of Sciences\\
Warsaw, Poland}
\email{z.ravanpak@azaruniv.ac.ir \& zravanpak@impan.pl}

\author[A. Rezaei-Aghdam]{Adel Rezaei-Aghdam}
\address{A.\ Rezaei-Aghdam:
	Azarbaijan Shahid Madani University \\
	Department of Physics, Faculty of Science \\
	Tabriz, Iran}
\email{rezaei-a@azaruniv.ac.ir}

\author[Gh. Haghighatdoost]{Ghorbanali Haghighatdoost}
\address{Gh.\ Haghighatdoost:
	Azarbaijan Shahid Madani University \\
	Department of Mathematics, Faculty of Science \\
	Tabriz, Iran}
\email{gorbanali@azaruniv.ac.ir}
\keywords{Poisson-Nijenhuis structures, Lie bialgebras and coboundary Lie bialgebras, Integrable dynamical systems}

\subjclass[2010]{37K05, 37K10, 53D17, 37K30 }
\thanks{This research was supported by research fund No. $217.d.14312$ from Azarbaijan Shahid Madani University. The support from the Warsaw Center for Mathematics and Computer Science is also acknowledged.
}

\begin{abstract}
 We study {\em right-invariant (resp., left-invariant)  Poisson-Nijenhuis structures} on a Lie group $G$ and introduce their infinitesimal counterpart, the so-called {\em r-n structures} on the corresponding Lie algebra $\mathfrak g$. We show that $r$-$n$ structures can be used to find compatible solutions of the classical Yang-Baxter equation. Conversely, two compatible r-matrices from which one is invertible determine an $r$-$n$ structure.
 We classify, up to a natural equivalence, all $r$-matrices and all $r$-$n$ structures with invertible $r$ on four-dimensional symplectic real Lie algebras. The result is applied to show that a number of dynamical systems which can be constructed by $r$-matrices on a phase space whose symmetry group is Lie group $G$, can be specifically determined.
\end{abstract}

\maketitle

\date{\today}

\section{Introduction}
Poisson-Nijenhuis ($P$-$N$) structures on manifolds were introduced by Magri and Morosi \cite{MaMo} (see also \cite{KoMa}), as Poisson and Nijenhuis structures which are compatible in a sense.

In this work we study Poisson-Nijenhuis structures on a Lie group $G$ which are appropriate right-invariant (or resp. left-invariant), so-called {\em right-invariant $P$-$N$ structures} (or resp. {\em left-invariant $P$-$N$ structures}) on $G$; we introduce their infinitesimal counterpart, the objects which we called {\em r-n structures} on the Lie algebra $\mathfrak g$ of $G$.

Actually, $r$-matrices, i.e. solutions of the classical Yang-Baxter equation (CYBE) on $\mathfrak g$, are in one-to-one correspondence with right-invariant (resp. left-invariant) Poisson structures on $G$.

We discus the relation between the infinitesimal version of right-invariant $P$-$N$ structures on the Lie group $G$ and coboundary Lie bialgebras on the Lie algebra $\mathfrak g$ of $G$. We show that $r$-$n$ structures can produce compatible solutions of the CYBE. Conversely, we show that two $r$-matrices under a certain condition are related by $r$-$n$ structures on $\mathfrak g$.

For clarity of our results, we classify all $r$-$n$ structures on a specific class of Lie algebras. We consider four-dimensional symplectic real Lie algebras, given in \cite{GOv}; and classify, up to a natural equivalence, all r-matrices on them; then we classify, all $r$-$n$ structures with invertible $r$ on considered Lie algebras. We did all computations with the use of Maple mathematical software.

In general, one can find a classification of invertible $r$-matrices and then classify all pairs of compatible $r$-matrices with the invertible one. The method is based on two following known observations:

\begin{itemize}
	\item Invariant structures on the Lie group can be identified with their infinitesimal part.
	\item Given an invertible Poisson structure, one can find compatible Poisson structures with the given one using the theory of $P$-$N$ structures.
\end{itemize}
Solutions of CYBE are closely related to the theory of classical integrable systems. A method for constructing integrable models using the classical $r$-matrices has been introduced in \cite{Zh}. Recently, a generalization of the relation between the classical Yang-Baxter equation and the theory of classical integrable system has been considered in \cite{ReSe}; the authors presented a method to construct classical integrable systems using coboundary Jacobi-Lie bialgebras.

 This is a good motivation for discussing the particular class of $P$-$N$ structures on a Lie group $G$, right-invariant $P$-$N$ structures, since the infinitesimal counterpart of them, $r$-$n$ structures, are really good tools to construct the classical integrable systems using the procedure mentioned in \cite{Zh}. An interesting point is that a number of dynamical systems which can be constructed by $r$-matrices on a phase space whose symmetry group is Lie group $G$ can be specifically determined.

The outline of the paper is as follows: In Section \ref{Section2} we briefly recall the notion of $P$-$N$ structures on a manifold, Lie bialgebras and coboundary Lie bialgebras, as needed in the rest of the paper. In Section \ref{Section3} we define right-invariant $P$-$N$ structures on the Lie group $G$ as the main object of study and introduce their infinitesimal counterpart. In Section \ref{Section4} we show that compatible $r$-$n$ structures, under some conditions, are in one-to-one correspondence with compatible $r$-matrices. The classification procedure of right-invariant $P$-$N$ structures is the subject of Section \ref{method}. In Section \ref{cllasification} we list the results of a classification of $r$-$n$ structures on four-dimensional symplectic real Lie algebras; in order to do, in \ref{class-r-matrices} we classify, up to a natural equivalence, all $r$-matrices on those Lie algebras; in \ref{all-r-n} all $r$-$n$ structures with invertible $r$ are given. Finally, in \ref{class-r-n} we classify, up to a natural equivalence, all Nijenhuis structures corresponding to the given $r$-$n$ structures. In Section \ref{app} we give an example to clarify the application of our results in previous sections. We end the paper by a concluding section.

 \section{Antecedents }\label{Section2}
 In this section, we recall the definition of Poisson-Nijenhuis
 structures \cite{KoMa}. We will very briefly review the notion of Lie bialgebra and coboundary Lie bialgebra. 

 \subsection{Poisson-Nijenhuis structure}
 A Poisson-Nijenhuis ($P$-$N$)  structure on a manifold $M$ is a bivector field $\Pi: T^{*}M \times T^{*}M \to \mathbb R $ together with a $(1,1)$-tensor field $N: TM \to TM$  on $M$ satisfying the conditions:
  \begin{enumerate}
  \item $\Pi$ is a Poisson bivector, i.e.
  \[
   [\Pi, \Pi] = 0,
   \]
  \noindent  where $[\cdot, \cdot ]$ is the Schouten-Nijenhuis bracket.
  \item  $N$ is a Nijenhuis operator, i.e.
  \[
  [N,N]\footnote{ The bracket $[\cdot,\cdot]$ is the Fr\"{o}licher-Nijenhuis bracket, that is the bracket of vector-valued differential forms.} (X,Y) = [NX,NY] - N[NX, Y] - N[X,NY] + N^2[X, Y]=0, \quad \forall X, Y \in {\mathfrak X}(M).
  \]
  \end{enumerate}
\noindent Moreover, $\Pi$ and $N$ satisfy these two compatibility conditions:
  \begin{enumerate}
  \item
   \[
   N \circ \Pi^{\sharp} = \Pi^{\sharp} \circ N^t,
   \]
where $\Pi^{\sharp}$ is induced by $\Pi$ as a vector bundle map from the cotangent bundle $T^*M$ of $M$ on the tangent bundle $TM$ of $M$, $\Pi^{\sharp}: T^*M \to TM$, given by
   \[
   \left\langle \alpha, \Pi^{\sharp}(\beta) \right\rangle  = \Pi(\alpha, \beta), \; \; \mbox{for}\;\alpha, \beta \in T_x^*M \; \mbox{and}\; x\in M.
   \]
   \item
\begin{equation}\label{Def-Concomi}
    \begin{array}{rcl}
     C(\Pi,N)(\alpha,\beta)&=& \mathcal{L}_{\Pi^{\sharp}\alpha}(N^{t}\beta)-\mathcal{L}_{\Pi^{\sharp}\beta}(N^{t}\alpha)+N^{t}\mathcal{L}_{\Pi^{\sharp}\beta}\alpha-N^{t}\mathcal{L}_{\Pi^{\sharp}\alpha}\beta \\[4pt]
     &&+d \left\langle \alpha,N\Pi^{\sharp}\beta\right\rangle
     -N^{t}d\left\langle \alpha,\Pi^{\sharp}\beta\right\rangle=0,
     \end{array}
  \end{equation}
     for $\alpha, \beta \in \Omega^1(M)$. Here, $N^{t}:T^{*}M \to T^{*}M$ is the dual $(1,1)$-tensor field to $N$ and $C(\Pi, N)$ is a $(2,1)$-tensor field on $M$, a concomitant of $\Pi$ and $N$.
\end{enumerate}

\noindent Note that $C(\Pi,N)(\alpha,\beta)=\left\lbrace \alpha,\beta\right\rbrace_{N\Pi}- \left\lbrace \alpha,\beta\right\rbrace^{N^{t}}_\Pi\,$ where
	\[
	\begin{array}{rcl}
\left\lbrace \alpha,\beta\right\rbrace^{N^{t}}_\Pi
		&=&\left\lbrace N^{t}\alpha,\beta\right\rbrace_{\Pi}+\left\lbrace \alpha,N^{t}\beta \right\rbrace_{\Pi} -N^{t}\left\lbrace \alpha,\beta \right\rbrace_{\Pi},\\[4pt]
		\end{array}
		\]
and the bracket $\{\cdot,\cdot\}_{\Pi}$ is the bracket of 1-forms which is defined by the Poisson bivector $\Pi$ as follows:
\begin{equation}\label{Poisson-bracket}
\left\lbrace \alpha,\beta\right\rbrace _{\Pi}  =\mathcal L_{\Pi^{\sharp}\alpha}\beta-\mathcal L_{\Pi^{\sharp}\beta}\alpha+d(\Pi(\alpha,\beta)).
\end{equation}	
Similarly, the bracket $\left\lbrace \alpha,\beta\right\rbrace_{N\Pi}$ is the bracket of 1-forms defined by the 2-contravariant tensor $N\circ \Pi$ (for more details see, \cite{KoMa}).

  An important fact \cite{MaMo} is that in the presence of a $P$-$N$ structure $(\Pi, N)$ on $M$ one may produce a hierarchy of Poisson structures $\Pi_k$, $k \in \mathbb{N} \cup \{0\}$, which are compatible, that is,
  \[
  [\Pi_k, \Pi_l] = 0, \; \; \mbox{ for } k, l \in \mathbb{N}.
  \]
  The Poisson structure $\Pi_k$ in this hierarchy is characterized by the condition
  \[
  \Pi_k^{\sharp} = N^k \circ \Pi^{\sharp}, \; \; \mbox{ for } k \in \mathbb{N}, \; \; k \geq 0.
  \]
  This follows from the fact that if $(\Pi,N)$ is a $P$-$N$ structure, then the couple $(\Pi, N^k)$ is also $P$-$N$ structure.

  Note that, Poisson-Nijenhuis structures are closely related with the theory of completely integrable systems (see \cite{KoMa,MaMo}).

   \subsection{Lie bialgebras and Coboundary Lie bialgebras } Before we recall Lie bialgebras, we have to recall a few definitions from the theory of the Lie algebra cohomology. Any Lie algebra $\mathfrak g$ acts on itself by the adjoint representation $ad:X\in\mathfrak g \to ad_X\in\mbox{End}( \mathfrak g)$, defined by $ad_{X}Y=[X,Y]$, for $Y\in \mathfrak g$.
   In general, $\mathfrak g$ acts on the space $\otimes^p\mathfrak g=\mathfrak g\otimes...\otimes\mathfrak g$ of tensor product of $\mathfrak g$ by the {\em generalized adjoint representation} of the Lie algebra $\mathfrak g$, in the following way,
   \[
   \begin{array}{rcl}
  X.(Y_1\otimes...\otimes Y_p)&=& ad_X^{(p)}(Y_1\otimes...\otimes Y_p) \\
  &=& ad_X Y_1\otimes...\otimes Y_p+Y_1\otimes ad_X Y_2\otimes ...\otimes Y_p+...+Y_1\otimes Y_2\otimes...\otimes ad_X Y_p.
  \end{array}
   \]

   \noindent  Similarly, $\mathfrak g$ acts on $\bigwedge^2\mathfrak g$ 
  by
   \begin{equation}\label{ad-rep}
   X.(Y_1\wedge Y_2)=[X,Y_1]\wedge Y_2+Y_1\wedge[X,Y_2].
   \end{equation}
A $k$-linear skew-symmetric map from $\mathfrak g$ to a vector space $V$ which is a {\em $\mathfrak g$-module} (a vector space in which we have a representation of the Lie algebra $\mathfrak g$), is a {\em $k$-cochain} of $\mathfrak g$ with values in $V$, so a linear map from $\mathfrak g$ to $\bigwedge ^2 \mathfrak g$ is a $1$-cochain and every element of $\bigwedge ^2 \mathfrak g$ is a $0$-cochain of $\mathfrak g$.
 A linear map $\partial$ which takes $k$-cochains to $(k+1)$-cochains satisfying $\partial^2=0$, is called a {\em coboundary operator}. Since we shall deal with the cases where $k=0$ and $k=1$, we only consider the definition of Chevalley-Eilenberg coboundary operator of these two cases. For $0$-cochain $u\in \bigwedge ^2 \mathfrak g$, we have $\partial u(X)=X.u$, and for $1$-cochain $\delta:\mathfrak g \to \bigwedge ^2 \mathfrak g$ we have:
 \[
 \partial\delta(X,Y)=X.(\delta(Y))-Y.(\delta(X))-\delta([X,Y]) \quad \mbox{for}\quad  X,Y\in \mathfrak g.
 \]
A $k$-cochain is called a {\em $k$-cocycle} if its coboundary is zero.
Now we can proceed to the definition of the Lie bialgebra.

   A {\em Lie bialgebra} $(\mathfrak g, \mathfrak g^*)$ is a Lie algebra with an additional structure, a linear map $\delta:\mathfrak g\to \mathfrak g\otimes \mathfrak g$ such that:
\begin{enumerate}
	\item[$(i)$] The linear map $\delta : \mathfrak{g}\longrightarrow \mathfrak{g}\otimes\mathfrak{g}$ is a $1$-cocycle, i.e.
	$$ad_X^{(2)}(\delta Y)-ad_Y^{(2)}(\delta X)- \delta[X,Y]=0,\quad \forall X,Y\in \mathfrak{g}. $$
	\item[$(ii)$] The dual map $\delta^{t}: \mathfrak{g}^*\otimes{\mathfrak g}^* \longrightarrow \mathfrak g^{*}$ is a Lie bracket on $\mathfrak g^{*}$.
\end{enumerate}	
\noindent	We denote the Lie bracket on $\mathfrak g^*$ by $[\alpha,\beta]_*:=\delta^t(\alpha\otimes\beta)$ for $\alpha, \beta \in \mathfrak g^*$.

{\em Coboundary Lie bialgebra} is a Lie bialgebra defined by a $1$-cocycle $\partial r$ which is the coboundary of an element $r\in \mathfrak g\otimes \mathfrak g$ (for more details, see \cite{Ko}).

\subsubsection{Classical Yang-Baxter equation}
Let $\mathfrak g$ be finite-dimensional Lie algebra and $\mathfrak g^*$ be its dual vector space with respect to a non-degenerate canonical pairing $\left\langle\cdot,\cdot\right\rangle$, so for basis $\{X_i\}$ and dual basis $\{ X^i\}$ of $\mathfrak g$ and $\mathfrak g^*$, respectively, we have:
\[
\langle  X_i,X_j \rangle= \langle  X^i, X^j \rangle=0,
 \quad  \langle  X^i,X_j \rangle=\delta^i_j.
 \]
 To every element $r=r^{ij}X_i\otimes X_j$ of $\mathfrak g \otimes \mathfrak g$, we can associate the linear map $r^{\sharp}:{\mathfrak g}^* \to \mathfrak g$ defined by $r^{\sharp}({X}^i)({X}^j):=r({X}^i,{X}^j)$. Let $\delta_r:=\partial r$, then
 \[
 \delta_ r(X)=ad^{(2)}_{X}r=r^{ij}(ad_X X_i\otimes X_j+X_i\otimes ad_X X_j), \quad X\in \mathfrak g.
 \]
By definition, $\delta r$ is a $1$-cocycle. We denote the bracket on $\mathfrak g^*$ in this case by $[{X}^i,{X}^j]^r$ instead of $[{X}^i,{X}^j]_*$.

 \noindent To every element $r$ of $\mathfrak g \otimes \mathfrak g$ we can also associate a bilinear map $\left\langle r,r\right\rangle ^{\sharp}:{\mathfrak g}^* \times {\mathfrak g}^*\to \mathfrak g$ defined by
 \begin{equation}\label{con-r-matrix}
 \left\langle r,r\right\rangle ^{\sharp}({X}^i,{X}^j)=[r^{\sharp} {X}^i,r^{\sharp} {X}^j]-r^{\sharp}[{X}^i,{X}^j]^r,
 \end{equation}
which can be identified with an element $\langle r,r\rangle \in \wedge^2 \mathfrak g \otimes \mathfrak g$, such that
 \[
\langle r,r\rangle({X}^i,{X}^j,{X}^k):=\langle {X}^k,\left\langle r,r\right\rangle^{\sharp}({X}^i,{X}^j)\rangle.
 \]

\noindent For a skew-symmetric element $r \in \Lambda^2 {\mathfrak g} $, we have
 \begin{equation}\label{r-bracket}
[{X}^i,{X}^j]^r=ad^{*}_{r^{\sharp}{X}^i}{X}^j-ad^{*}_{r^{\sharp}{X}^j}{X}^i,
 \end{equation}
 where $ad^*_{X_i}=-(ad_{X_i})^t$ is the endomorphism of $\mathfrak g^*$  satisfying
 \begin{equation}
 \label{ad1}
 \langle  X^i, ad_{X_k}X_j\rangle =-\langle ad^{*}_{X_k} X^i,X_j \rangle,
 \end{equation}
 which implies
 \begin{equation}
  \label{ad2}
 \langle ad^{*}_{X_k} X^i, X_j\rangle =-\langle ad^{*}_{ X_j} X^i,X_k \rangle.
  \end{equation}

On the other hand, in the case where $r$ is skew-symmetric, we have $\left\langle  r,r \right\rangle =-\frac{1}{2}\lcf r,r\rcf$ where $\lcf \cdot,\cdot\rcf$ is Schouten-Nijenhuis bracket on the Lie algebra $\mathfrak g$ called the {\em algebraic Schouten bracket}.

For the skew-symmetric element $r\in \Lambda^2 {\mathfrak g}$, the condition $\lcf r,r\rcf=0$ is called the {\em classical Yang-Baxter equation} (CYBE). A solution of the CYBE is called an {\em $r$-matrix}. For any $r$-matrix the bracket (\ref{r-bracket}) is a Lie bracket on $\mathfrak g^*$, called the {\em Sklyanin bracket}. Therefore $r$-matrices can be identified with coboundary Lie bialgebras on the Lie algebra $\mathfrak g$ (for more details see, for instance, \cite{Ko}).

A coboundary Lie bialgebra $(\mathfrak g,\mathfrak g^*)$ is called a {\em bi-r-matrix Lie bialgebra}, if the Lie bialgebra $(\mathfrak g^*,\mathfrak g)$ is coboundary and the Sklyanin bracket defined on $\mathfrak g$ is equivalent to the initial Lie bracket on $\mathfrak g$.

\section{Right-invariant Poisson-Nijenhuis structures} \label{Section3}
In this section we define right-invariant $P$-$N$ structures on the Lie group $G$ and their infinitesimal counterpart on the Lie algebra $\mathfrak g$ of $G$. We also consider the concepts of compatibility and equivalence of those structures.

We are using the following notation. If $s$ is a $k$-vector on $\mathfrak g$ then $\rvec{s}$ (resp. $\lvec{s}$) is the right-invariant (resp. left-invariant) $k$-vector field on $G$ given by
\[
\rvec{s}(g) = (T_{\mathfrak e}R_g)(s), \; \; \mbox{ for } g \in G,
\]
(resp. $\lvec{s}(g) = (T_{\mathfrak e}L_g)(s)$, for $g \in G$) where $R_h: G \to G$ and $L_g: G \to G$ are the right and left translation by $h$ and $g$, respectively.

\begin{definition}
	A $P$-$N$ structure $(\Pi, N)$ on a Lie group $G$ is said to be right-invariant if:
	\begin{enumerate}
		\item
		The Poisson structure $\Pi$ is right-invariant, that is, there exists $r \in \Lambda^2 {\mathfrak g}$ such that $\Pi = \rvec{r}$.
		\item
		The Nijenhuis tensor $N$ also is right-invariant, that is, there exists a linear endomorphism $n: {\mathfrak g} \to {\mathfrak g}$ such that
		$N = \rvec{n}$.
	\end{enumerate}
\end{definition}
This last condition means that
\[
N_{|T_gG} = T_{\mathfrak e}R_g \circ n \circ T_gR_{g^{-1}}, \; \; \; \mbox{ for } g \in G.
\]

For right-invariant $P$-$N$ structures, we may prove the two following results which describe the infinitesimal version of such structures.

\begin{proposition}\label{inft-ver-right-1}
	Let  $(N,\Pi)$ be a right-invariant P-N structure on a Lie group $G$ with Lie algebra $\mathfrak g$ and identity element ${\mathfrak e}\in G$. If $r \in \Lambda^2{\mathfrak g}$ is the value of $\Pi$ at ${\mathfrak e}$ and $n$ is the restriction of $N$ to ${\mathfrak g}$, we have:
	\begin{enumerate}
		\item[$(i)$] $r$ is a solution of the classical Yang-Baxter equation on ${\mathfrak g}$.

		\item[$(ii)$]
		$n$ is a Nijenhuis operator on ${\mathfrak g}$, that is,
		\[
		[n, n](X, Y) : = [nX, nY] - n[nX, Y] - n[X, nY] + n^2[X,Y] = 0, \; \; \forall X, Y \in {\mathfrak g}.
		\]
			\item[$(iii)$] $n:\mathfrak g\longrightarrow \mathfrak g$ satisfies the condition
			\[
			n \circ r^{\sharp} = r^{\sharp} \circ n^t.
			\]
		\item[$(iv)$] The concomitant $C(r, n)$ of $r$ and $n$ in ${\mathfrak g}$ is zero, that is,
		\[
		C(r, n)(\alpha, \beta) : =
		\mathcal L^{\mathfrak g}_{r^{\sharp}\alpha}n^{t}\beta - \mathcal L^{\mathfrak g}_{r^{\sharp}\beta}n^{t}\alpha - n^{t}\mathcal L^{\mathfrak g}_{r^{\sharp}\alpha}\beta  + n^{t}\mathcal L^{\mathfrak g}_{r^{\sharp}\beta}\alpha=0\quad \mbox{ for } \alpha, \beta \in \mathfrak g^{*}.
		\]
		Here, ${\mathcal L}^{\mathfrak g}_{X}:=ad^*_{X}:\mathfrak g^* \to \mathfrak g^*$.
	\end{enumerate}
	
Conversely, let ${\mathfrak g}$ be a real Lie algebra of finite dimension, $r \in \Lambda^2{\mathfrak g}$ be a $2$-vector on ${\mathfrak g}$ and $n: {\mathfrak g} \to {\mathfrak g}$ be a linear endomorphism on ${\mathfrak g}$ which satisfy conditions (i), (ii), (iii) and (iv); so-called r-n structure on the Lie algebra $\mathfrak g$. If $G$ is a Lie group with the Lie algebra ${\mathfrak g}$, then the couple $(\rvec{r}, \rvec{n})$ is a right-invariant P-N structure on $G$.
\end{proposition}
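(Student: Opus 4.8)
The plan is to exploit the fact that a right-invariant tensor field on $G$ is completely determined by its value at the identity $\mathfrak e$, and that the differential-geometric operations entering the definition of a $P$-$N$ structure translate, under the correspondence $s\mapsto\rvec{s}$, into the algebraic operations on $\mathfrak g$ appearing in (i)--(iv). Concretely, I would first record the dictionary: for $X_0,Y_0\in\mathfrak g$ one has $[\rvec{X_0},\rvec{Y_0}]=-\rvec{[X_0,Y_0]}$; for $\alpha\in\mathfrak g^*$ the right-invariant $1$-form $\rvec{\alpha}$ satisfies $\mathcal L_{\rvec{X_0}}\rvec{\alpha}\,|_{\mathfrak e}=-ad^*_{X_0}\alpha=-\mathcal L^{\mathfrak g}_{X_0}\alpha$; the pairing $\langle\rvec{\alpha},\rvec{Y_0}\rangle=\langle\alpha,Y_0\rangle$ of a right-invariant $1$-form with a right-invariant vector field is a \emph{constant} function on $G$; and $\Pi^\sharp\rvec{\alpha}=\rvec{r^{\sharp}\alpha}$, $N\rvec{X_0}=\rvec{nX_0}$, $N^{t}\rvec{\alpha}=\rvec{n^{t}\alpha}$. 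All of $[\Pi,\Pi]$, $[N,N]$, the compatibility $N\Pi^{\sharp}-\Pi^{\sharp}N^{t}$ and the concomitant $C(\Pi,N)$ are tensorial in their vector/covector arguments, so each vanishes on all of $G$ if and only if it vanishes on a frame of right-invariant vector fields (resp.\ $1$-forms), equivalently at $\mathfrak e$ on elements of $\mathfrak g$ (resp.\ $\mathfrak g^*$).

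With this dictionary, conditions (i)--(iii) are almost immediate. For (i), the Schouten--Nijenhuis bracket of right-invariant multivectors is right-invariant with identity value a nonzero multiple of $\lcf r,r\rcf$, so $[\Pi,\Pi]=0$ forces $\lcf r,r\rcf=0$; since $r$ is skew-symmetric this is exactly the CYBE, i.e.\ $r$ is an $r$-matrix. For (ii), evaluating the Fr\"olicher--Nijenhuis torsion on right-invariant vector fields and using $[\rvec{X_0},\rvec{Y_0}]=-\rvec{[X_0,Y_0]}$ gives $[N,N](\rvec{X_0},\rvec{Y_0})=-\rvec{[n,n](X_0,Y_0)}$, whence $[N,N]=0$ is equivalent to the algebraic torsion $[n,n]$ vanishing. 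Condition (iii) is purely pointwise: $N\circ\Pi^{\sharp}=\Pi^{\sharp}\circ N^{t}$ read at $\mathfrak e$ becomes $n\circ r^{\sharp}=r^{\sharp}\circ n^{t}$, and right-invariance propagates it to every $g\in G$.

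The delicate step is (iv), because the manifold concomitant (\ref{Def-Concomi}) carries six terms---four Lie-derivative terms and two of the form $d\langle\cdot,\cdot\rangle$---while the algebraic $C(r,n)$ has only the four terms matching the Lie derivatives. The key observation is that, for right-invariant $\alpha,\beta\in\mathfrak g^*$, the functions $\langle\alpha,N\Pi^{\sharp}\beta\rangle=\langle\rvec{\alpha},\rvec{nr^{\sharp}\beta}\rangle$ and $\langle\alpha,\Pi^{\sharp}\beta\rangle=\langle\rvec{\alpha},\rvec{r^{\sharp}\beta}\rangle$ are constant on $G$, so both $d$-terms vanish identically. The four surviving terms then translate, via $\mathcal L_{\rvec{X_0}}\rvec{\gamma}\,|_{\mathfrak e}=-\mathcal L^{\mathfrak g}_{X_0}\gamma$, into $-C(r,n)(\alpha,\beta)$; here one must track the sign arising from $[\rvec{X_0},\rvec{Y_0}]=-\rvec{[X_0,Y_0]}$, but since it is a uniform overall factor it does not affect vanishing. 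Tensoriality of $C(\Pi,N)$ then lifts the identity from the right-invariant frame to all of $\Omega^1(G)$, giving $C(\Pi,N)=0\Leftrightarrow C(r,n)=0$.

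The converse is obtained by reading every implication in both directions. Given an $r$-$n$ structure $(r,n)$ satisfying (i)--(iv), set $\Pi=\rvec{r}$ and $N=\rvec{n}$, which are right-invariant by construction. The dictionary yields $[\Pi,\Pi]=0$ and $[N,N]=0$ from (i) and (ii), so $\Pi$ is Poisson and $N$ is Nijenhuis, while the compatibility $N\Pi^{\sharp}=\Pi^{\sharp}N^{t}$ and the vanishing of $C(\Pi,N)$ are right-invariant tensor identities whose values at $\mathfrak e$ are precisely (iii) and (iv); hence they hold on all of $G$. Therefore $(\rvec{r},\rvec{n})$ is a right-invariant $P$-$N$ structure, completing the equivalence. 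I expect the main obstacle to be the bookkeeping in step (iv): verifying rigorously that the two $d$-terms drop out for right-invariant arguments and that the Lie-derivative-to-$ad^*$ translation carries a consistent sign throughout, so that the six-term manifold concomitant collapses exactly to (a multiple of) the four-term algebraic one.
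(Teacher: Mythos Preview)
Your proposal is correct and follows essentially the same approach as the paper's proof: both arguments rely on the right-invariant dictionary $[\rvec{X},\rvec{Y}]=-\rvec{[X,Y]}$, $\mathcal L_{\rvec{X}}\rvec{\alpha}=-\rvec{\mathcal L^{\mathfrak g}_X\alpha}$, $\Pi^{\sharp}\rvec{\alpha}=\rvec{r^{\sharp}\alpha}$, $N^{t}\rvec{\alpha}=\rvec{n^{t}\alpha}$, and the key observation for (iv) that the pairings $\langle\rvec{\alpha},N\Pi^{\sharp}\rvec{\beta}\rangle$ and $\langle\rvec{\alpha},\Pi^{\sharp}\rvec{\beta}\rangle$ are constant on $G$, so the two $d$-terms in the manifold concomitant vanish and the remaining four Lie-derivative terms reduce to $-\rvec{C(r,n)(\alpha,\beta)}$. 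Your additional remark about tensoriality (so that vanishing on the right-invariant frame/coframe is equivalent to vanishing everywhere) makes explicit a point the paper leaves implicit, but the substance of the argument is the same.
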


\begin{proof}
Using the following results, the proposition would be proved.

 If $\Pi$ is a right-invariant $2$-vector on $G$, and $N$ a right-invariant $(1, 1)$-tensor field on $G$, that is
$\Pi = \rvec{r}$ and $N = \rvec{n}$, then	
	\begin{enumerate}
		\item
		The Schouten-Nijenhuis brackets $[\Pi, \Pi]$ in $G$ and $[r, r]$ in $\frak{g}$ are related by the formula
		\[
		[\Pi, \Pi] = -\rvec{[r, r]}.
		\]
		\item
		The Nijenhuis torsions $[N, N]$ and $[n, n]$ of $N$ and $n$, are related by the formula
		\[
		[N, N](\rvec{X}, \rvec{Y}) = -\rvec{[n, n](X,Y)}, \; \; \mbox{ for } X, Y \in {\frak g}.
		\]
		\item
Using the relations $
\Pi^{\sharp}(\rvec{\alpha}) = \rvec{r^{\sharp}(\alpha)}$ and $N^t\rvec{\alpha} = \rvec{n^t\alpha}$, for $\alpha \in \frak{g}^*$, we have
		\[
		(N \circ \Pi^{\sharp} - \Pi^{\sharp} \circ N^t)(\rvec{\alpha}) = \rvec{(n \circ r^{\sharp} - r^{\sharp} \circ n^t)(\alpha)}.
		\]
		\item
		If $C(\Pi, N)$ (resp. $C(r, n)$) is the concomitant of $\Pi$ and $N$ in $G$ (resp. $r$ and $n$ in $\frak{g}$) then
		\[
		C(\Pi, N)(\rvec{\alpha}, \rvec{\beta}) = -\rvec{C(r, n)(\alpha, \beta)}, \mbox{ for } \alpha, \beta \in \frak{g}^*.
		\]
To show the above relation, by using (\ref{Def-Concomi}) and the fact that $
		{\mathcal L}_{\rvec{X}}\rvec{\alpha} = -\rvec{\mathcal{L}^{\frak{g}}_{X}\alpha}$ \cite{AZ}, for $X \in {\frak g} $ and $\alpha \in {\frak g}^*$, we obtain
		\[
		C(\Pi, N)(\rvec{\alpha}, \rvec{\beta}) = - \rvec{C(r, n)(\alpha, \beta)} + d \langle \rvec{\alpha}, N \Pi^{\sharp}\rvec{\beta} \rangle
		- N^t d\langle \rvec{\alpha}, \Pi^{\sharp}\rvec{\beta} \rangle.
		\]
	Since
		\[
		\langle \rvec{\alpha}, N \Pi^{\sharp}\rvec{\beta} \rangle = \langle \alpha, n(r^{\sharp}\beta) \rangle, \; \; \; \langle \rvec{\alpha}, \Pi^{\sharp}\rvec{\beta} \rangle = \langle \alpha, r^{\sharp}\beta \rangle,
		\]
		we conclude that $C(\Pi, N)(\rvec{\alpha}, \rvec{\beta}) = -\rvec{C(r, n)(\alpha, \beta)}$.
\end{enumerate}	\end{proof}	
We remark that solutions of the CYBE on $\mathfrak g$ are in one-to-one correspondence with right-invariant (resp. left-invariant) Poisson structures on $G$. In fact, $r \in \Lambda^2 {\mathfrak g}$ is a solution of the CYBE on $\mathfrak g$ if and only if $\rvec{r}$ (resp. $\lvec{r}$) is a right-invariant (resp. left-invariant) Poisson structure on $G$.

All the Poisson structures in the hierarchy associated with a right-invariant $P$-$N$ structure are also right-invariant. 

\begin{corollary}\label{hierarchy-right-inv}
Let $G$ be a Lie group with Lie algebra ${\mathfrak g}$, $r$ be a solution of the CYBE on ${\mathfrak g}$ and $n: {\mathfrak g} \to {\mathfrak g}$ be a linear endomorphism such that the couple $(\rvec{r}, \rvec{n})$ is a right-invariant P-N structure on $G$.

\begin{enumerate}
\item[$(i)$]
If $r_k$, with $k \in \mathbb{N}$, is the $2$-vector on $\mathfrak{g}$ which is characterized by the condition $r_k^{\sharp} = n^k \circ r^{\sharp}$, then $r_k$ is a solution of the CYBE.
																		\item[$(ii)$]
If $\Pi_k$, with $k \in \mathbb{N}$, is the Poisson structure on $G$ which is characterized by the condition $\Pi_k^{\sharp} = {\rvec{n}}^k \circ \rvec{r}^{\sharp}$, then $\Pi_k$ is right-invariant. In fact, $\Pi_k = \rvec{r_k}$.
	\end{enumerate}
\end{corollary}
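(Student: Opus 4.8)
The plan is to treat this as a transport-of-structure statement: the genuinely analytic content (that the members of a Poisson--Nijenhuis hierarchy are Poisson and mutually compatible) is supplied by the Magri--Morosi theory recalled in Section \ref{Section2}, while the corollary only asserts that in the right-invariant situation this hierarchy descends to $\mathfrak g$. Accordingly, the whole argument rests on the dictionary between right-invariant objects on $G$ and their infinitesimal data extracted from the proof of Proposition \ref{inft-ver-right-1}, and it proceeds by establishing (ii) first and deducing (i) from it.

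First I would record the two equivariance identities used throughout: $\rvec{n}(\rvec{X}) = \rvec{nX}$ for $X\in\mathfrak g$ (immediate from $N_{|T_gG} = T_{\mathfrak e}R_g\circ n\circ T_gR_{g^{-1}}$ together with $\rvec{X}(g)=T_{\mathfrak e}R_g(X)$, since $T_gR_{g^{-1}}\circ T_{\mathfrak e}R_g=\id$), and $\rvec{r}^{\sharp}(\rvec{\alpha})=\rvec{r^\sharp\alpha}$ for $\alpha\in\mathfrak g^*$. I would also verify that $r_k$ is well defined as an element of $\Lambda^2\mathfrak g$, i.e. that $r_k^\sharp=n^k\circ r^\sharp$ is skew: iterating condition (iii), $n\circ r^\sharp=r^\sharp\circ n^t$ gives $n^k\circ r^\sharp=r^\sharp\circ (n^t)^k$, and since $(r^\sharp)^t=-r^\sharp$ one gets $(n^k\circ r^\sharp)^t=-\,n^k\circ r^\sharp$.

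For part (ii) I would then compute the sharp map of the hierarchy bivector on right-invariant $1$-forms. Iterating the first identity yields $\rvec{n}^{k}(\rvec{X})=\rvec{n^kX}$, so for every $\alpha\in\mathfrak g^*$,
\[
\Pi_k^\sharp(\rvec{\alpha}) = \rvec{n}^{k}\bigl(\rvec{r}^{\sharp}(\rvec{\alpha})\bigr) = \rvec{n}^{k}\bigl(\rvec{r^\sharp\alpha}\bigr) = \rvec{n^k(r^\sharp\alpha)} = \rvec{r_k^\sharp\alpha} = \rvec{r_k}^{\sharp}(\rvec{\alpha}).
\]
Because the values $\rvec{\alpha}(g)$, as $\alpha$ runs over $\mathfrak g^*=T_{\mathfrak e}^*G$, fill the fibre $T_g^*G$ (the map $T_gR_{g^{-1}}$ being an isomorphism), and both $\Pi_k^\sharp$ and $\rvec{r_k}^{\sharp}$ are $C^\infty(G)$-linear bundle maps, the displayed equality at every point forces $\Pi_k^\sharp=\rvec{r_k}^{\sharp}$ as bundle maps, hence $\Pi_k=\rvec{r_k}$. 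This is exactly the right-invariance asserted in (ii).

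For part (i), since $(\rvec{r},\rvec{n})$ is a $P$-$N$ structure, the Magri--Morosi hierarchy theorem guarantees that each $\Pi_k$ is a Poisson bivector; by (ii) this bivector equals $\rvec{r_k}$, so $\rvec{r_k}$ is a right-invariant Poisson structure on $G$, and the one-to-one correspondence between right-invariant Poisson structures on $G$ and solutions of the CYBE on $\mathfrak g$ yields $\lcf r_k,r_k\rcf=0$. Equivalently, one may invoke the relation $[\Pi_k,\Pi_k]=-\rvec{[r_k,r_k]}$ from the proof of Proposition \ref{inft-ver-right-1} and use that a right-invariant multivector vanishes iff its value at $\mathfrak e$ does. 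The only point requiring care is the equivariance bookkeeping of the previous paragraph --- that powers of $\rvec{n}$ and the sharp operation commute with the right-invariant extension, and that the characterizing condition $\Pi_k^\sharp=\rvec{n}^{k}\circ\rvec{r}^{\sharp}$ pins down $\Pi_k$ uniquely; once this is in place the corollary follows formally, with no further Poisson-geometric input beyond the hierarchy theorem.
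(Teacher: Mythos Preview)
Your proof is correct and follows essentially the same route as the paper: you establish $\Pi_k=\rvec{r_k}$ from the equivariance identity $\rvec{n}^k=\rvec{n^k}$, and then deduce that $r_k$ solves the CYBE by invoking the Magri--Morosi hierarchy to know that $\Pi_k$ is Poisson. Your version is simply more explicit --- you spell out the skew-symmetry of $r_k^\sharp$ and the pointwise argument that $\Pi_k^\sharp=\rvec{r_k}^\sharp$ --- whereas the paper compresses the same steps into two lines.
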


\begin{proof}  Using ${\rvec{n}}^k = \rvec{n^k}$, it follows that $\Pi_k^{\sharp} = \rvec{n^k} \circ \rvec{r}^{\sharp}$; thus $\Pi_k = \rvec{r_k}$. Since $(\rvec{r}, {\rvec{n}}^k)$ is a $P$-$N$ structure, $\rvec{r_k}$ is a right-invariant Poisson structure, which implies that $r_k$ is a solution of the CYBE on ${\mathfrak g}$.
		
\end{proof}

 \subsection{Compatibility of right-invariant Poisson-Nijenhuis structures} \label{compatible}
 Two $P$-$N$ structures $(\Pi', N')$ and $(\Pi, N)$ on a Lie group $G$ are said to be {\em compatible} if the couple $(\Pi'+\Pi, N'+N)$ is a $P$-$N$ structure on $G$.

 Note that the compatibility condition of two Nijejhuis structures has been considered in \cite{Ni} as {\em Nijenhuis concomitant}.
The Nijenhuis concomitant of two $(1,1)$-tensor fields  $n'$ and $n $ on the Lie algebra $\mathfrak g$ with vanishing Nijenhuis torsion is defined by
\begin{equation}\label{Ni-con}
\begin{array}{rcl}
[n',n](X_i,X_j)& =& \left[ n'X_i,nX_j\right]  - n'[nX_i, X_j] - n'[X_i,nX_j] + n'\circ n[X_i,X_j]\\[4pt]
&+&[nX_i,n'X_j] - n[n'X_i, X_j] - n[X_i,n'X_j] + n\circ n'[X_i,X_j],
\end{array}
\end{equation}
for every two elements of basis $\{X_i\}$ of $\mathfrak g$. Two Nijenhuis structures are compatible if the Nijenhuis concomitant of them vanishes \cite{Ni}. Recall that two $r$-matrices are compatible if their sum is still an $r$-matrix.

In the case of the right-invariant $P$-$N$ structures, the compatibility of structures $(\Pi',N')$ and $(\Pi,N)$ reduces to the compatibility of their infinitesimal version.

 If $(r',n')$ and $(r,n)$ be the infinitesimal versions of the mentioned $P$-$N$ structures, then $(r',n')$ and $(r,n)$ are {\em compatible} if the couple $(r'+r,n'+n)$ is an $r$-$n$ structure on the Lie algebra $\mathfrak g$ of $G$. From the Proposition \ref{inft-ver-right-1}, $(r'+r,n'+n)$ would be the infinitesimal version of the right-invariant $P$-$N$ structure $(\rvec {r'+r},\rvec{n'+n})$ on the Lie group $G$.
 \begin{remark}\label{Rem}
 Let $(r',n')$ and $(r,n)$ be two $r$-$n$ structures such that $r'$ and $r$ are compatible $r$-matrices and $n'$ and $n$ are compatible Nijenhuis structures. It is easy to see that if $(r',n)$ and $(r,n')$ are also $r$-$n$ structures, then $(r',n')$ and $(r,n)$ are compatible $r$-$n$ structures.
 
 \end{remark}

  \subsection{Equivalence classes of right-invariant Poisson-Nijenhuis structures}\label{equivalence.}
We say two right-invariant Poisson structures  $\Pi'=\rvec{ r'}$ and $\Pi=\rvec{ r}$ on the Lie group $G$ are {\em equivalent} if the solutions $r',r\in \wedge ^2 \mathfrak g$ of the CYBE are equivalent, i.e. there exist a Lie algebra automorphism $\mathcal A$ such that the diagram $(a)$ below commutes. In the same way, we say two right-invariant Nijenhuis structures $N'=\rvec{ n'}$ and $N=\rvec{ n}$ on the Lie group $G$ are {\em equivalent} if two linear map $n'$ and $n$ are equivalent, i.e. there exist a Lie algebra automorphism $\mathcal A$ such that diagram $(b)$ below commutes. In these cases we will write $r'\sim r$ and $n'\sim n$ ($r'\sim_\mathcal{A}r$ and $n'\sim_\mathcal{A} n$ if we want to indicate the automorphism $\mathcal{A}$).
  \[
  \begin{tikzpicture}
  \centering
  \matrix (m) [matrix of math nodes,row sep=3em,column sep=4em,minimum width=2em]
  {
  	\mathfrak g^* &  \mathfrak g \\
  	\mathfrak g^*& \mathfrak g  \\};
  \path[-stealth]
  (m-1-1) edge node [left] {$(a):\quad  \mathcal A^{-t}$} (m-2-1)
  edge  node [above] {$r'^{\sharp}$} (m-1-2)
  (m-2-1.east|-m-2-2) edge node [below] {$r^{\sharp}$}
  node [above] {$$} (m-2-2)
  (m-1-2) edge node [right] {$\mathcal{A}$} (m-2-2);
  \end{tikzpicture}\qquad%
  \]
  \[
  \begin{tikzpicture}
  \centering
  \matrix (m) [matrix of math nodes,row sep=3em,column sep=4em,minimum width=2em]
  {
  	\mathfrak g &  \mathfrak g \\
  	\mathfrak g& \mathfrak g  \\};
  \path[-stealth]
  (m-1-1) edge node [left] {$(b):\quad \mathcal A$} (m-2-1)
  edge  node [above] {$n'$} (m-1-2)
  (m-2-1.east|-m-2-2) edge node [below] {$n$}
  node [above] {$$} (m-2-2)
  (m-1-2) edge node [right] {$\mathcal{A}$} (m-2-2);
  \end{tikzpicture}
  \]

Therefore we can talk about equivalence classes of right-invariant $P$-$N$ structures as well. Two right-invariant $P$-$N$ structures $(P',N')$ and $(P,N)$ on the Lie group $G$ are {\em equivalent} if two corresponding $r$-$n$ structures $(r',n')$ and $(r,n)$ on the Lie algebra $\mathfrak g$ are equivalent.

\begin{definition}\label{equivalnt}
	Two r-n structures $(r',n')$ and $(r,n)$ are equivalent if there exist a Lie algebra automorphism $\mathcal A$ such that the following diagram commutes.
	\[
	\begin{tikzpicture}
	\matrix (m) [matrix of math nodes, row sep=3em, column sep=4em]
	{  & \mathfrak g^*  & \mathfrak g  & \mathfrak g &  \\
		& \mathfrak g^*  & \mathfrak g & \mathfrak g &  \\ };
	{ [start chain] 
		\chainin (m-1-2);
		{ [start branch=A] \chainin (m-2-2)
			[join={node[left,labeled] {\mathcal A ^{-t}}}];}
		\chainin (m-1-3) [join={node[above,labeled] {r'^{\sharp}}}];
		{ [start branch=B] \chainin (m-2-3)
			[join={node[right,labeled] {\mathcal A}}];}
		\chainin (m-1-4) [join={node[above,labeled] {n'}}];
		{ [start branch=C] \chainin (m-2-4)
			[join={node[right,labeled] {\mathcal A}}];}
	}
	{ [start chain] 
		\chainin (m-2-2);
		\chainin (m-2-3) [join={node[below,labeled] {r^{\sharp}}}];
		\chainin (m-2-4) [join={node[below,labeled] {n}}];}
		\end{tikzpicture}
		\]

 In this case we will write $(r',n')\sim (r,n)$ ($(r',n')\sim_\mathcal{A}(r,n)$ if we want to indicate $\mathcal{A}$).
	\end{definition}

\section{Compatible $r$-matrices and $r$-$n$ structures}\label{Section4}
As we saw in Corollary \ref{hierarchy-right-inv}, having an $r$-$n$ structure $(r,n)$ on the Lie algebra $\mathfrak g$ we get a new solution of the CYBE $(r')^{\sharp}=n \circ r^{\sharp}$. In this section we will show that two r-matrices, under a certain condition, are related to $r$-$n$ structures on $\mathfrak g$.

Throughout this section we denote the basis $\{X_i\}$ and the dual basis $\{ X^j\}$ for Lie algebras $\mathfrak g$ and $\mathfrak g^*$, respectively.

Let $r\in \Lambda^2 {\mathfrak g}$ be an invertible solution of the CYBE, i.e. $r^{\sharp}:{\mathfrak g}^* \to \mathfrak g$ is an isomorphism map, then $(r^{-1})^{\sharp}:{\mathfrak g} \to \mathfrak g^*$, which corresponds to an element $r^{-1} \in  \Lambda^2 g^*$ defined by
	\[
	(r^{-1})^{\sharp}({X}_i)({X}_j):=(r^{-1})({X}_i,{X}_j),\quad X_i \in \mathfrak g,
	\]
	is a solution of the CYBE $\lcf r^{-1},r^{-1}\rcf^{r}=0$, where $\lcf \cdot,\cdot \rcf^{r}$ is the algebraic Schouten-Nijenhuis bracket on $\mathfrak g^*$. The relation between two maps $r^{\sharp}$ and $(r^{-1}) ^{\sharp}$ is given by
 \[
 \langle (r^{-1})^{\sharp}(X_i) , r^{\sharp}( X^j) \rangle =-\langle  X^j ,X_i\rangle.
 \]
\noindent We set $r=r^{ij}X_i\otimes X_j$ and $r^{-1}=(r^{-1})_{ij} X^i\otimes  X^j$; it is obvious that $(r^{\sharp})^{ij}=r^{ij}$ and $(r^{-1})^{\sharp}_{ij}=(r^{-1})_{ij}$, where $(r^{\sharp})(X^i)=(r^{\sharp})^{ij}X_j$ and $(r^{-1})^{\sharp}( X_i)=(r^{-1})^{\sharp}_{ij} X^j$ for all $ X^i \in \mathfrak g^*$ and $X_i \in \mathfrak g$.

If there is not risk of confusion, we will use the same notation $r$ for the 2-vector $r$ and the linear map $r^{\sharp}$.
\begin{proposition}
	Let $r\in \Lambda^2 {\mathfrak g}$ be an invertible solution of the CYBE and $[\cdot,\cdot]^{r^{-1}}$ be the Sklynin bracket on $\mathfrak g$ defined by $r^{-1}$ on the dual Lie algebra $(\mathfrak g^*,[\cdot,\cdot]^r)$, then $[\cdot,\cdot]^{r^{-1}}=[\cdot,\cdot]$.
\end{proposition}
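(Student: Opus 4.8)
The plan is to unwind both Sklyanin brackets explicitly and to reduce the desired equality to the cocycle condition carried by $r^{-1}$. First I would record the one place where invertibility is genuinely used: because $r\in\Lambda^2\mathfrak g$ is skew-symmetric one has $\langle X^k, r^\sharp\eta\rangle=-\langle\eta, r^\sharp X^k\rangle$, so the stated relation $\langle (r^{-1})^\sharp(X_i), r^\sharp(X^j)\rangle=-\langle X^j,X_i\rangle$ immediately gives
\[
r^\sharp\circ (r^{-1})^\sharp=\mathrm{id}_{\mathfrak g},\qquad\text{equivalently}\qquad (r^{-1})^\sharp=(r^\sharp)^{-1}.
\]
This identifies $(r^{-1})^\sharp:\mathfrak g\to\mathfrak g^*$ with the inverse symplectic isomorphism attached to the $2$-form $\omega:=r^{-1}$, and it is the fact that makes the whole computation collapse at the end.

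Next I would write out the Sklyanin bracket $[\cdot,\cdot]^{r^{-1}}$ on $\mathfrak g=(\mathfrak g^*)^*$ by applying the analogue of (\ref{r-bracket}) to the base Lie algebra $(\mathfrak g^*,[\cdot,\cdot]^r)$:
\[
[X_i,X_j]^{r^{-1}}=(\mathrm{ad}^r)^*_{(r^{-1})^\sharp X_i}X_j-(\mathrm{ad}^r)^*_{(r^{-1})^\sharp X_j}X_i,
\]
where $(\mathrm{ad}^r)^*$ denotes the coadjoint representation of $(\mathfrak g^*,[\cdot,\cdot]^r)$ on $\mathfrak g$, characterized by $\langle (\mathrm{ad}^r)^*_\xi X,Y\rangle=-\langle X,[\xi,Y]^r\rangle$ for $Y\in\mathfrak g^*$. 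I would pair both sides against an arbitrary $Y\in\mathfrak g^*$, expand the coadjoint action by that defining property, substitute the inner Sklyanin bracket $[\xi,Y]^r=\mathrm{ad}^*_{r^\sharp\xi}Y-\mathrm{ad}^*_{r^\sharp Y}\xi$ from (\ref{r-bracket}), and finally convert every $\mathrm{ad}^*$ back into the ordinary bracket of $\mathfrak g$ via (\ref{ad1}).

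At that stage the identity $r^\sharp\circ(r^{-1})^\sharp=\mathrm{id}$ turns the terms containing $r^\sharp(r^{-1})^\sharp X_i$ into $\langle Y,[X_i,X_j]\rangle$, and writing $Z:=r^\sharp Y$ (which ranges over all of $\mathfrak g$ since $r^\sharp$ is onto) the remaining difference becomes, up to the sign convention for the Chevalley--Eilenberg differential $d$ of $\mathfrak g$,
\[
\langle [X_i,X_j]^{r^{-1}},Y\rangle-\langle [X_i,X_j],Y\rangle=-\,(d\,r^{-1})(Z,X_i,X_j).
\]
I would conclude by invoking the classical equivalence, valid for non-degenerate $r$, between $r$ solving the CYBE $\lcf r,r\rcf=0$ on $\mathfrak g$ and the closedness $d\,r^{-1}=0$ of $r^{-1}$ viewed as a $2$-form on $\mathfrak g$ (this is exactly the content already used in the excerpt when asserting $\lcf r^{-1},r^{-1}\rcf^{r}=0$). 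Hence the right-hand side vanishes for all $Y$, and non-degeneracy of the pairing yields $[\cdot,\cdot]^{r^{-1}}=[\cdot,\cdot]$.

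The main obstacle I anticipate is the bookkeeping in the middle step: there are two nested layers of duality — the coadjoint action of $\mathfrak g^*$ on $\mathfrak g$ wrapped around the Sklyanin bracket $[\cdot,\cdot]^r$, which is itself built from the coadjoint action of $\mathfrak g$ — and keeping the transposes and signs straight so that the surviving terms genuinely reassemble into a Chevalley--Eilenberg coboundary is where sign errors are most likely. The concrete checkpoint that the reduction has been done correctly is that the leftover expression must be totally (cyclically) antisymmetric in $(Z,X_i,X_j)$, not merely antisymmetric in $X_i,X_j$; recovering that extra symmetry is precisely what signals that the difference equals $d\,r^{-1}$ and therefore vanishes.
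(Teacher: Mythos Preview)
Your plan is correct, but it takes a genuinely different and considerably longer route than the paper's proof.

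The paper observes that the vanishing of $\langle r,r\rangle^\sharp$ in (\ref{con-r-matrix}) says precisely that $r^\sharp:(\mathfrak g^*,[\cdot,\cdot]^r)\to(\mathfrak g,[\cdot,\cdot])$ is a Lie algebra homomorphism, i.e.\ $r[X^i,X^j]^r=[rX^i,rX^j]$. The same reasoning applied to the $r$-matrix $r^{-1}$ on $(\mathfrak g^*,[\cdot,\cdot]^r)$ gives $r^{-1}[X_i,X_j]^{r^{-1}}=[r^{-1}X_i,r^{-1}X_j]^r$. Composing these two identities and cancelling $r^{-1}$ yields the claim in one line:
\[
r^{-1}[X_i,X_j]^{r^{-1}}=[r^{-1}X_i,r^{-1}X_j]^r=r^{-1}[X_i,X_j].
\]

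By contrast, you unfold both layers of the Sklyanin construction down to the original bracket of $\mathfrak g$ and identify the residual terms as the Chevalley--Eilenberg coboundary $d(r^{-1})$, then appeal to the classical equivalence (for non-degenerate $r$) between the CYBE for $r$ and the closedness of $r^{-1}$ as a $2$-form. That works, and your checkpoint about recovering full cyclic antisymmetry in $(Z,X_i,X_j)$ is exactly the right diagnostic. What the paper's argument buys is brevity: it packages the entire computation you propose into the single statement ``$r^\sharp$ is a Lie homomorphism,'' so no explicit expansion of $(\mathrm{ad}^r)^*$ or sign-tracking is needed. What your approach buys is a more transparent link to the symplectic geometry of $\mathfrak g$, since the identity $[\cdot,\cdot]^{r^{-1}}=[\cdot,\cdot]$ is seen to be literally equivalent to $d\omega=0$ for $\omega=r^{-1}$.
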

\begin{proof}
	Since $r$ is an $r$-matrix, we have $
	r[ X^i,X^j]^r=[r X^i,r X^j]$. On the other hand $r^{-1}$ is also an r-matrix which implies
 \begin{equation}\label{relation2}
 r^{-1}\left[ X_i,X_j\right] ^{r^{-1}}=\left[r^{-1}X_i ,r^{-1} X_j\right] ^r=r^{-1}[X_i,X_j],
 \end{equation}
and the proof is complete.

\end{proof}
\noindent Therefore, the Lie bialgebras identified by invertible $r$-matrices are bi-$r$-matrix Lie bialgebras. We consider the following notation for the Sklaynin brackets on $\mathfrak g^*$ and $\mathfrak g$ defined by $r$-matrices $r$ and $r^{-1}$:
\[
[ X^i,  X^j]^r:=[ X^i,  X^j]_*={ \tilde f}^{ij}_{k} X^k, \quad \left[ X_i,X_j\right] ^{r^{-1}}:=[X_i,X_j]={f_{ij}}^{k} X_k,
\]

\noindent where ${\tilde  f}^{ij}$ and $f_{ij}$ are structure constants of $\mathfrak g^*$ and $\mathfrak g$, respectively.

\begin{proposition}\label{P-N}
	Let $2$-vector $r\in \Lambda^2 {\mathfrak g}$ be an invertible solution of the CYBE and let $r'\in \Lambda^2 {\mathfrak g}$  be another solution of the CYBE (not necessary invertible) which is compatible with $r$, then the endomorphism $r'\circ r^{-1}: \mathfrak g \to \mathfrak g$,
	\begin{enumerate}
		\item[$(i)$] is a Nijenhuis operator on $\mathfrak g$ with respect to the Lie bracket $[\cdot,\cdot]^{r^{-1}}=[\cdot,\cdot]$.

		\item[$(ii)$] is compatible with $r$ that means $( r,r'\circ r^{-1})$ is an r-n structure on $\mathfrak g$.
	\end{enumerate}	
\end{proposition}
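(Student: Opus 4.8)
The plan is to reduce both assertions to a single nondegenerate datum, $r$, together with the clean relation $N\circ r^\sharp=r'^\sharp$, where throughout I write $N=r'\circ r^{-1}$. Three algebraic facts will drive the argument. Since $r^\sharp$ is invertible, every $X\in\mathfrak g$ is $r^\sharp\alpha$ for a unique $\alpha\in\mathfrak g^*$, and then $N(r^\sharp\alpha)=r'^\sharp\alpha$. Because $r$, $r'$ and, by compatibility, $r+r'$ are all $r$-matrices, the maps $r^\sharp$, $r'^\sharp$ and $(r+r')^\sharp$ intertwine the respective Sklyanin brackets with $[\cdot,\cdot]$; expanding the $r$-matrix relation for $r+r'$ and using that \eqref{r-bracket} is additive in the sharp map (so $[\cdot,\cdot]^{r+r'}=[\cdot,\cdot]^{r}+[\cdot,\cdot]^{r'}$) yields the mixed identity
\[
[r^\sharp\alpha,\,r'^\sharp\beta]+[r'^\sharp\alpha,\,r^\sharp\beta]=r^\sharp[\alpha,\beta]^{r'}+r'^\sharp[\alpha,\beta]^{r}.
\]
Finally, the skew-symmetry of $r$ and $r'$ gives $N^t=r^{-1}\circ r'^\sharp$.

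For (i), since $r^\sharp$ is onto it suffices to evaluate the Nijenhuis torsion on $X=r^\sharp\alpha$ and $Y=r^\sharp\beta$. Substituting $NX=r'^\sharp\alpha$, $NY=r'^\sharp\beta$ and using $[r^\sharp\alpha,r^\sharp\beta]=r^\sharp[\alpha,\beta]^r$, $[r'^\sharp\alpha,r'^\sharp\beta]=r'^\sharp[\alpha,\beta]^{r'}$, the relation $N^2r^\sharp=Nr'^\sharp$, and the mixed identity, I expect the $[\cdot,\cdot]^{r'}$-contribution and the $[\cdot,\cdot]^{r}$-contribution to cancel separately, leaving $[N,N](X,Y)=0$. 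The bracket here is $[\cdot,\cdot]$, which, as shown in the proposition preceding the present one, coincides with $[\cdot,\cdot]^{r^{-1}}$; this is exactly the bracket appearing in the definition of an $r$-$n$ structure, so (i) follows.

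For (ii) I must check the four conditions of an $r$-$n$ structure in Proposition \ref{inft-ver-right-1}. Condition $(i)$ is the hypothesis that $r$ is an $r$-matrix, condition $(ii)$ is part (i) above, and condition $(iii)$, namely $N\circ r^\sharp=r^\sharp\circ N^t$, is immediate since both sides equal $r'^\sharp$. The substantial point — and what I expect to be the main obstacle — is condition $(iv)$, the vanishing of the concomitant $C(r,N)$. I see two routes. The conceptual one is to lift: set $\Pi=\rvec r$, which is symplectic because $r^\sharp$ is an isomorphism, and $\Pi'=\rvec{r'}$; compatibility of $r$ and $r'$ makes $\Pi,\Pi'$ compatible Poisson structures on $G$, so by the classical Magri--Morosi theorem \cite{MaMo,KoMa} the tensor $\Pi'^\sharp\circ(\Pi^\sharp)^{-1}=\rvec{N}$ is Nijenhuis and $(\Pi,\rvec N)$ is a $P$-$N$ structure; being right-invariant, it returns $(r,N)$ as an $r$-$n$ structure by Proposition \ref{inft-ver-right-1}, which delivers $(iv)$ at once. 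The hands-on alternative is to pair $C(r,N)(\alpha,\beta)$ with an arbitrary $Z=r^\sharp\zeta$, rewrite each $ad^*$ through $\langle ad^*_W\xi,U\rangle=-\langle\xi,[W,U]\rangle$, and reduce what remains to the mixed identity and the Jacobi identity of $\mathfrak g$.

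The torsion (i) and the intertwining (iii) are formal once the three driving identities are in hand; the real work sits in the concomitant. On the geometric route the one thing to verify carefully is that every construction preserves right-invariance, so that the descent through Proposition \ref{inft-ver-right-1} is legitimate — this holds because $\rvec{\cdot}$ is compatible with $\sharp$, with composition, and with inversion, whence $\Pi'^\sharp\circ(\Pi^\sharp)^{-1}=\rvec{r'\circ r^{-1}}$. On the direct route no new idea beyond the mixed identity is needed; the only real hazard is sign and transpose bookkeeping for $ad^*$ and $N^t$.
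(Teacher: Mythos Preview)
Your plan is correct and, for part $(i)$ and conditions $(i)$--$(iii)$ of the $r$-$n$ structure, coincides with the paper's argument: the paper also derives the mixed identity
\[
r[\alpha,\beta]^{r'}+r'[\alpha,\beta]^{r}=[r\alpha,r'\beta]+[r'\alpha,r\beta]
\]
from the compatibility of $r$ and $r'$, and then unwinds $(r'\circ r^{-1})^2[X_i,X_j]$ exactly as you outline; likewise it dispatches $N\circ r^\sharp=r^\sharp\circ N^t$ via the skew-symmetry of $r$ and $r'$.

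Where you diverge is on the concomitant. The paper takes only your ``hands-on'' route: it pairs $C(r,N)(X^i,X^j)$ with $rX^k$, expands each term through $\langle ad^*_W\xi,U\rangle=-\langle\xi,[W,U]\rangle$ and the identity $r^{-1}[X_i,X_j]=[r^{-1}X_i,r^{-1}X_j]^r$, and reduces everything to the mixed identity. Your geometric route---lift to $G$, observe that $\rvec r$ is symplectic, invoke Magri--Morosi for a compatible pair $(\rvec r,\rvec{r'})$ with the first nondegenerate, and then descend via Proposition~\ref{inft-ver-right-1}---is a genuine alternative that the paper does not use. It is shorter and more conceptual, and it explains \emph{why} the concomitant vanishes rather than verifying it term by term; its cost is an appeal to an external result and the (easy) check that $\Pi'^\sharp\circ(\Pi^\sharp)^{-1}$ is right-invariant. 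The paper's direct computation, by contrast, stays entirely within the Lie-algebraic setup and is self-contained, at the price of the sign/transpose bookkeeping you flag.
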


\begin{proof}
\noindent $(i)$
Since $r$ and $r'$ are compatible, we have
\[
(r+r')[X^i,  X^j] ^{r+r^{'}}=[ (r+r')( X^i), (r+r')( X^j)].
\]
By using (\ref{r-bracket}) and the fact that $r$ and $r'$ are  solutions of the CYBE, we get
\begin{equation}\label{first-relation}
r[  X^i, X^j] ^{r^{'}}+r^{'}[  X^i,  X^j] ^{r}=[ rX^i,r^{'} X^j] +[r^{'} X^i,r X^j].
\end{equation}
Now consider
\[
(r'\circ r^{-1})^2[ X_i,X_j] =(r'\circ r^{-1})r'[ r^{-1}X_i,r^{-1}X_j] ^r.
\]

\noindent By applying (\ref{first-relation}) in the above relation, we get
\[
\begin{array}{rclcrcl}
(r'\circ r^{-1})^2[X_i,X_j]&=& (r'\circ r^{-1}) \left[ r (r^{-1}X_i),r'(r^{-1}X_j)\right]+(r'\circ r^{-1})\left[ r'(r^{-1}X_i),r(r^{-1}X_j)\right] \\[6pt]
 &&-(r'\circ r^{-1})r\left[ r^{-1}X_i,r^{-1}X_j\right] ^{r'},
\end{array}
\]
which implies
\[
(r'\circ r^{-1})^2[X_i,X_j]= r'\circ r^{-1} \left[ X_i,(r'\circ r^{-1})X_j\right] +r'\circ r^{-1}\left[ (r'\circ r^{-1})X_i,X_j\right] -\left[ (r'\circ r^{-1})X_i,(r'\circ r^{-1})X_j\right],
 \]
and the proof is complete.

\noindent $(ii)$
It is easy to check that the first compatibility condition holds, since $r$ and $r'$ are skew-symmetry we have $(r^{-1})^{t}=-r^{-1}$ and $(r')^{t}=-r'$.

To show the second compatibility condition (the concomitant), observe that since $r$ is an isomorphism, for every element $X_i\in \mathfrak g$ there exist an element $ X^k \in \mathfrak g^*$ such that $X_i:=r{X^k}$.

 We will prove
 \begin{equation}\label{concomitant}
\begin{array}{rcl}
\langle  \mathcal L^{\mathfrak g}_{r X^i} (r^{-1}\circ r') X^j, r X^k\rangle &=& \langle  \mathcal L^{\mathfrak g}_{r X^j}(r^{-1}\circ r') X^i, r X^k\rangle+\langle (r^{-1}\circ r')\mathcal L^{\mathfrak g}_{r X^i} X^j, r X^k\rangle\\[7pt]
      &-&\langle (r^{-1}\circ r')\mathcal L^{\mathfrak g}_{r X^j} X^i, r X^k\rangle.\\[6pt]
    \end{array}
    \end{equation}
Using relations
\[
\langle \mathcal L^{\mathfrak g}_{X_i} X^j,X_k\rangle =-\langle  X^j,\mathcal L^{\mathfrak g}_{X_i}X_k\rangle,\quad
\langle (r^{-1}\circ r') X^i,X_j\rangle =\langle X^i,(r'\circ r^{-1})X_j\rangle,
\]
and (\ref{relation2}), we get
 \[
\begin{array}{rcl}
\langle  \mathcal L^{\mathfrak g}_{r X^i} (r^{-1}\circ r') X^j, r X^k\rangle
  &=& -\langle  X^j , r'[ X^i,  X^k]^r\rangle.  \\[5pt]
  \end{array}
  \]
  Using (\ref{first-relation}), the above relation gives
  \begin{equation}\label{1}
  =-\langle  X^j,[ r X^i,r' X^k]\rangle
 -\langle  X^j,[r' X^i,r X^k]\rangle
  +\langle  X^j, r[  X^i, X^k] ^{r'} \rangle.
\end{equation}

 \noindent Similarly, we see that
 \begin{equation}\label{2}
\langle (r^{-1}\circ r')\mathcal L^{\mathfrak g}_{r X^i} X^j, r X^k\rangle=-\langle  X^j,[ r X^i,r' X^k]\rangle.
 \end{equation}
Now, by using (\ref{r-bracket}), (\ref{ad1}), (\ref{ad2}) and $\langle X^i, r X^j\rangle=-\langle X^j, r X^i\rangle$, we have:
\[
\begin{array}{rcl}
&& -\langle  X^j,[ r' X^i,r X^k]\rangle
  +\langle  X^j, r[  X^i, X^k] ^{r'} \rangle \\[8pt]
  &&=-\langle ad^{*}_{r X^k} X^j, r' X^i \rangle+\langle ad^{*}_{r X^j}X^k, r' X^i \rangle+\langle ad^{*}_{r' X^k} X^i, r X^j \rangle\\[8pt]
  &&=\langle [ X^j,X^k] ^r,r' X^i\rangle -\langle  X^i, [r' X^k,r X^j] \rangle.
  \end{array}
\]
On the other hand, in the same way we observe that
\begin{equation}\label{3}
\begin{array}{rcl}
\langle  \mathcal L^{\mathfrak g}_{r X^j}(r^{-1}\circ r') X^i, r X^k\rangle&=&\langle  X^i , r'[  X^j, X^k] ^r\rangle, \\[8pt]
\langle (r^{-1}\circ r')\mathcal L^{\mathfrak g}_{r X^j} X^i, r X^k\rangle&=& \langle  X^i,[ r' X^k,r X^j] \rangle,
\end{array}
\end{equation}
so, from (\ref{1}), (\ref{2}) and (\ref{3}), we get (\ref{concomitant}) and the proof is complete.

\end{proof}

\begin{corollary}\label{corollary}
	Let $r$, $r'$ and $r''$ be skew-symmetric solutions of the CYBE and let $r$ be an invertible solution which is compatible with $r'$ and $r''$, then  $n':=r'\circ r^{-1}$ and $n'':=r''\circ r^{-1}$ are compatible if and only if
	$r'$ and $r''$ are compatible.
	
\end{corollary}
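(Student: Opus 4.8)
The plan is to express both compatibilities through a single identity that relates the Nijenhuis concomitant of $n'$ and $n''$ to the obstruction of $r'+r''$ to being an $r$-matrix, transported by the isomorphism $r^{-1}$. First I would record the polarization identity for the Nijenhuis torsion: writing $T_n(X,Y):=[nX,nY]-n[nX,Y]-n[X,nY]+n^2[X,Y]$ for the torsion of an endomorphism $n$ of $\mathfrak g$, a direct expansion of the brackets shows that for any endomorphisms $a,b$
\[
T_{a+b}=T_a+T_b+[a,b],
\]
where $[a,b]$ is precisely the Nijenhuis concomitant (\ref{Ni-con}). Applying this with $a=n'$ and $b=n''$, and using that $n'$ and $n''$ are Nijenhuis operators by Proposition \ref{P-N}(i) (so $T_{n'}=T_{n''}=0$) together with $n'+n''=(r'+r'')\circ r^{-1}$, I obtain
\[
[n',n'']=T_{n'+n''}=T_{(r'+r'')\circ r^{-1}}.
\]
Hence compatibility of $n'$ and $n''$ reduces to the vanishing of the Nijenhuis torsion of $N_s:=s\circ r^{-1}$ with $s:=r'+r''$.

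The key step is then a torsion formula valid for an \emph{arbitrary} skew-symmetric $s\in\Lambda^2\mathfrak g$ satisfying $\lcf r,s\rcf=0$, namely
\[
T_{s\circ r^{-1}}(X_i,X_j)=\langle s,s\rangle^{\sharp}(r^{-1}X_i,r^{-1}X_j),
\]
with $\langle s,s\rangle^{\sharp}$ the concomitant (\ref{con-r-matrix}) measuring the failure of the CYBE. I would prove this by rerunning the computation from the proof of Proposition \ref{P-N}(i) with $r'$ replaced by $s$: starting from $N_s^2[X_i,X_j]=N_s(s[r^{-1}X_i,r^{-1}X_j]^r)$ via (\ref{relation2}), I substitute the polarized compatibility relation $r[\alpha,\beta]^s+s[\alpha,\beta]^r=[r\alpha,s\beta]+[s\alpha,r\beta]$ (the pair-$(r,s)$ analogue of (\ref{first-relation})), use $N_s\circ r=s$, and finally rewrite $s[\cdot,\cdot]^s$ through the definition of $\langle s,s\rangle^{\sharp}$. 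Everything cancels except the term $\langle s,s\rangle^{\sharp}(r^{-1}X_i,r^{-1}X_j)$.

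Finally, I would assemble the corollary. Compatibility of $r$ with $r'$ and with $r''$ gives $\lcf r,r'\rcf=\lcf r,r''\rcf=0$, hence $\lcf r,s\rcf=0$ for $s=r'+r''$, so the torsion formula applies and, combined with the first step, yields $[n',n'']=\langle s,s\rangle^{\sharp}\circ(r^{-1}\times r^{-1})$. Since $r^{-1}\colon\mathfrak g\to\mathfrak g^*$ is an isomorphism, $[n',n'']=0$ if and only if $\langle s,s\rangle^{\sharp}=0$, i.e. (using $\langle s,s\rangle=-\frac{1}{2}\lcf s,s\rcf$) if and only if $s=r'+r''$ solves the CYBE, which is exactly the definition of $r'$ and $r''$ being compatible. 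I expect the main obstacle to be the middle step: one must verify that the polarized compatibility identity and the ensuing cancellations remain valid for a general $s$ that is \emph{not} assumed to solve the CYBE, since it is precisely this generality that lets a single formula deliver both implications at once. The bilinear identity $r[\alpha,\beta]^s+s[\alpha,\beta]^r=[r\alpha,s\beta]+[s\alpha,r\beta]$ is equivalent to $\lcf r,s\rcf=0$ and does not require $s$ to be an $r$-matrix, which is what makes the argument go through in both directions.
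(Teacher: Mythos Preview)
Your argument is correct and takes a genuinely different route from the paper's. The paper proves the two implications separately and at a higher structural level: for the forward direction it invokes Remark~\ref{Rem} to conclude that $(r,n'+n'')$ is an $r$-$n$ structure, whence $(n'+n'')\circ r=r'+r''$ is an $r$-matrix by Corollary~\ref{hierarchy-right-inv}; the converse is dismissed as ``similar'' (it follows from Proposition~\ref{P-N}(i) applied to $r'+r''$, together with the polarization identity you state). Your proof instead isolates a single explicit formula, $T_{s\circ r^{-1}}=\langle s,s\rangle^{\sharp}\circ(r^{-1}\times r^{-1})$ for any skew $s$ with $\lcf r,s\rcf=0$, and reads off both implications simultaneously. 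This is more self-contained and yields a sharper intermediate statement (a quantitative torsion identity rather than a vanishing result), at the cost of reproducing a computation already implicit in Proposition~\ref{P-N}(i). The paper's version is shorter because it reuses the $r$-$n$ machinery and Remark~\ref{Rem}; yours makes the equivalence transparently bilinear and avoids appealing to the compatibility conditions (iii)--(iv), since only the Nijenhuis torsion is at stake.
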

\begin{proof}
From Proposition \ref{P-N}, $(r,n')$ and $(r,n'')$ are r-n structures. If $n'$ and $n''$ are compatible, then from Remark \ref{Rem}, $(r,n')$ and $(r,n'')$ are compatible $r$-$n$ structures which means that  $(r,n'+n'')$ is an $r$-$n$ structure and therefore $(n'+n'')\circ r$ is an $r$-matrix, so
\[
(n'+n'')\circ r=n'\circ r+n''\circ r=r'+r'',
\]
 simply implies $r'$ and $r''$ are compatible. One proves the converse in a similar way.
 
\end{proof}	
\rm
Note that, if $(r,n')$ and $(r,n'')$ are compatible $r$-$n$ structures, then $n' \circ r$ and $n''\circ r$ are compatible solutions of the CYBE.
\begin{remark}\label{Rem2}
	If $r$ is an invertible solution of the CYBE, then there is a one-to-one correspondence between $r$-matrices $r^l$ and Nijenhuis operators $n^l$ such that $r^l:=n^l\circ r$; or equivalently, there is a one-to-one correspondence between $r$-$n$ structures $(r,n^l)$ and compatible $r$-matrices $(r,r^l)$.
	
\end{remark}

\begin{corollary}\label{Theorem2}
	Let $r$ be an invertible solution of the CYBE on the Lie algebra $\mathfrak g$. Then, r-n structures $(r,n)$ and $(r',n')$ are equivalent if and only if the couple of compatible r-matrices $(r,r_0)$ and $(r',r'_0)$ are equivalent, where $r_0=n\circ r$ and $r'_0=n'\circ r'$.	
\end{corollary}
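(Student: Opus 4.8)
The plan is to reduce both equivalences to explicit commuting-diagram conditions for one and the same Lie algebra automorphism $\mathcal A$, and then to see that they differ only by a factor that can be cancelled. First I would unpack the two definitions. By Definition \ref{equivalnt}, $(r,n)\sim_{\mathcal A}(r',n')$ means that $\mathcal A$ is an automorphism of $\mathfrak g$ satisfying the two squares
\[
\mathcal A\circ r'^{\sharp}=r^{\sharp}\circ \mathcal A^{-t}, \qquad \mathcal A\circ n'=n\circ \mathcal A .
\]
On the other hand, equivalence of the compatible pairs $(r,r_0)$ and $(r',r'_0)$ means that a single $\mathcal A$ realizes both $r'\sim_{\mathcal A} r$ and $r'_0\sim_{\mathcal A} r_0$, i.e.
\[
\mathcal A\circ r'^{\sharp}=r^{\sharp}\circ \mathcal A^{-t}, \qquad \mathcal A\circ (r'_0)^{\sharp}=r_0^{\sharp}\circ \mathcal A^{-t}.
\]
The structural point I would emphasise is that the first square is literally the same condition in both definitions. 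Hence the whole corollary reduces to proving that, once this shared square holds, the Nijenhuis square $\mathcal A\circ n'=n\circ \mathcal A$ is equivalent to the second r-matrix square $\mathcal A\circ (r'_0)^{\sharp}=r_0^{\sharp}\circ \mathcal A^{-t}$.

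Next I would translate the defining relations $r_0=n\circ r$ and $r'_0=n'\circ r'$ into sharp-map form, using the paper's identification of a $2$-vector with its induced map: $r_0^{\sharp}=n\circ r^{\sharp}$ and $(r'_0)^{\sharp}=n'\circ r'^{\sharp}$. Substituting these, the second r-matrix square becomes $\mathcal A\circ n'\circ r'^{\sharp}=n\circ r^{\sharp}\circ \mathcal A^{-t}$. I would then feed in the shared first square, rewritten as $r^{\sharp}\circ \mathcal A^{-t}=\mathcal A\circ r'^{\sharp}$, to replace the right-hand side by $n\circ \mathcal A\circ r'^{\sharp}$, so that the condition reads $\mathcal A\circ n'\circ r'^{\sharp}=n\circ \mathcal A\circ r'^{\sharp}$.

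The crux is then a cancellation. Because $r$ is invertible and the shared square gives $r'^{\sharp}=\mathcal A^{-1}\circ r^{\sharp}\circ \mathcal A^{-t}$, the map $r'^{\sharp}\colon\mathfrak g^{*}\to\mathfrak g$ is itself an isomorphism; hence I may cancel $r'^{\sharp}$ on the right to obtain $\mathcal A\circ n'=n\circ \mathcal A$, and conversely, composing the Nijenhuis square with $r'^{\sharp}$ on the right and reinserting the shared square recovers the r-matrix square. This yields the equivalence of the two remaining conditions, and together with the common first square it proves the corollary in both directions. The one delicate point I expect is precisely this invertibility of $r'$: it is not assumed in the statement but is forced by the first square together with the invertibility of $r$, and it is exactly what licenses the cancellation. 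The conceptual bridge underlying all of this is Remark \ref{Rem2}, which already pairs each $r$-$n$ structure $(r,n^{l})$ with the compatible $r$-matrix $r^{l}=n^{l}\circ r$; the present corollary simply verifies that this bijection respects the two notions of equivalence.
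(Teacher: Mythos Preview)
Your proof is correct and follows essentially the same route as the paper's: both directions hinge on the observation that the $r$-square is common to the two equivalences, and that under this shared square the $n$-square $\mathcal A\circ n'=n\circ\mathcal A$ and the $r_0$-square $\mathcal A\circ (r'_0)^{\sharp}=r_0^{\sharp}\circ\mathcal A^{-t}$ are interchangeable via the substitution $n=r_0\circ r^{-1}$, $n'=r'_0\circ (r')^{-1}$. Your explicit remark that $r'^{\sharp}$ is forced to be invertible (so that the cancellation is legitimate) is a point the paper uses tacitly when it writes $(r')^{-1}$; otherwise the arguments are the same computation read in slightly different order.
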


\begin{proof}
According to Definition \ref{equivalnt}, $(r,n)\sim(r',n')$ if for some automorphism  $\mathcal A$ of the Lie algebra $\mathcal A\circ r=r'\circ {\mathcal A}^{-t}$and $\mathcal A\circ n=n'\circ \mathcal A$, which implies $
\mathcal A \circ n \circ r=n' \circ r' \circ \mathcal A^{-t}$, so $r_0=n\circ r \sim_\mathcal{A} r'_0=n' \circ r'$ and $(r,r_0) \sim_\mathcal{A}  (r',r'_0)$. Conversely, if $(r,r_0) \sim_\mathcal{A}  (r',r'_0)$, then $\mathcal A\circ r=r'\circ {\mathcal A}^{-t}$ and $\mathcal A\circ r_0=r'_0\circ {\mathcal A}^{-t}$. Hence,
$${\mathcal A}n_0{\mathcal A}^{-1}={\mathcal A}r_0\circ r^{-1}{\mathcal A}^{-1}=\mathcal A r_0{\mathcal A}^t\circ{\mathcal A}^{-t}r^{-1}{\mathcal A}^{-1}=r'_0\circ(r')^{-1}=n'$$
and $(r,n)\sim_{\mathcal A}(r',n')$.

\end{proof}

 \section{classification procedure}\label{method}
 In this section we describe the classification procedure of $r$-$n$ structures on $\mathfrak g$, equivalently right-invariant $P$-$N$ structures on $G$. For this purpose we rewrite the four conditions of Proposition \ref{inft-ver-right-1} in terms of coordinates.

First, we write the structural constants $f^{k}_{ij}$ of the Lie algebra $\mathfrak g$, in terms of adjoint representation ${\mathcal {X}_{i}}$, and antisymmetric matrices ${\mathcal Y^i}$, as
 \begin{equation}\label{Matrix}
 f^{k}_{ij}=-(\mathcal{Y}^{k})_{ij},\quad f^{k}_{ij}=-(\mathcal {X}_{i})_{j}^{~k}.
 \end{equation}

\noindent{\bf Condition (i)} Consider the tensor notation of the CYBE, (see \cite{Ko})
\[
\lcf r,r \rcf =[r_{12},r_{13}]+[r_{12},r_{23}]+[r_{13},r_{23}]=0,
\]
 where $
r_{12}=r^{ij}X_i\otimes X_j\otimes 1$, $r_{13}=r^{ij}X_i\otimes 1\otimes X_j$ and $r_{23}=r^{ij} 1\otimes X_i\otimes X_j$. Applying the above condition in the base elements, we get (for more details see \cite{Ko})
\[
\lcf r,r \rcf(X^i,X^j,X^k)=\langle X^i, [r^tX^j,r^tX^k]\rangle+\langle X^j, [rX^i,r^tX^k]\rangle+\langle X^k, [rX^i,rX^j]\rangle=0,
\]
which implies
\[
 r^{il}r^{m j}f^{k}_{lm}+r^{il}r^{km}f^{j}_{lm}+r^{kl}r^{j m}f^{i}_{lm}=0.
\]

\noindent Finally, by (\ref{Matrix}), it can be rewritten in the matrix form
\begin{equation}\label{Poisson}
  r{\mathcal Y}^ {i}r-r{\mathcal X}_{l}r^{il}-r^{il}{\mathcal X}_{l}^{t}r=0,\quad i:=1,...,dim \mathfrak g.
\end{equation}
  \noindent{\bf Condition (ii)}  We rewrite Nijenhuis torsion for two base elements ${X_i}$ and ${X_j}$ in $\mathfrak g$ and we get
 \[
 f^{k}_{ij}n^{l}_{~k}n^{m}_{~l}-n^{k}_{~i}n^{m}_{~l}f^{l}_{kj}-n^{k}_{~j}f^{l}_{ik}n^{m}_{~l}+n^{k}_{~i}n^{l}_{~j}f^{m}_{kl}=0.
 \]
 By using (\ref{Matrix}) it can be rewritten in the matrix form
 \begin{equation}\label{Nijenhuis}
 {\mathcal X} _in^{t}n^{t}+n^l
 _{~i}n^{t}{\mathcal X}_{l}-n^{t}{\mathcal X}_{i}n^{t}-n^{l}_{~i}{\mathcal X}_{l}n^{t}=0,\quad i:=1,...,dim \mathfrak g.
 \end{equation}
  \noindent{\bf Condition (iii)} For every element $X^{i}$ in $\mathfrak g^*$, $n\circ r^{\sharp}(X^i)=r^{\sharp}\circ n^t(X^i)$ implies
  \begin{equation}\label{Con1}
  (nr)^i_{~j}=(rn^t)^i_{~j}, \quad i,j=1,...,dim(\mathfrak g),
  \end{equation}
  where $n$ and $r$ are the corresponding matrices to the linear operator $n$ and the map $r^{\sharp}$ and $nX_i=n_i^jX_j$ for $n_i^j\in \mathbb R$.

 \noindent{\bf Condition (iv)} By applying $ X^{i}$ and $X^{j}$ in the concomitant, we get
  \[
r^{l i}n^{m}_{~k}f^{j}_{lm}+r^{jl}n^{m}_{~k}f^{i}_{l m}+r^{jl}n^{i}_{~m}f^{m}_{kl}+r^{l i}n^{j}_{~m}f^{m}_{kl}=0,
 \]

\noindent and then, using (\ref{Matrix}), we get the matrix relation
 \begin{equation}\label{Con2}
 r{\mathcal X}_{j}n^{j}_{~i}+{\mathcal X}^{t}_{j}n^{j}_{~i}r-r{\mathcal X}_{i}n^{t}-n{\mathcal X}^{t}_{i}r=0,\quad i:=1,...,dim \mathfrak g.
 \end{equation}
Given a Lie algebra $\mathfrak g$, by applying matrices ${\mathcal X}_i$ and ${\mathcal Y}^i$ using (\ref{Matrix}) in four equations (\ref{Poisson}), (\ref{Con1}), (\ref{Nijenhuis}), (\ref{Con2}) and solving them by help of mathematical softwares, one can find all $r$-$n$ structures on $\mathfrak g$ and so, all right-invariant $P$-$N$ structures on the Lie group $G$.
 \section{Classification of all $r$-matrices and all $r$-$n$ structures with invertible $r$ \\ on four-dimensional symplectic real Lie algebras}\label{cllasification}

In this section we exemplify results of Sections \ref{Section4} and \ref{method} in classifying, up to an equivalence, all $r$-matrices, and all $r$-$n$ structures with invertible $r$-matrices on four-dimensional symplectic real Lie algebras.

Recall that a symplectic structure on a $2n$-dimensional Lie algebra $\mathfrak g$ is defined as a closed $2$-form $\omega$ which has maximal rank, that is $\omega ^n$ is a volume form on the corresponding Lie group. The list of four-dimensional real Lie algebras with symplectic structure is given in \cite{GOv}.
Four dimensional real symplectic real Lie bialgebras have been classified in \cite{Abedi}.

The strategy is as follows. First, using (\ref{Poisson}) we find all $r$-matrices on four-dimensional symplectic real Lie algebra $\mathfrak g$ and classify them up to equivalence 
\[
r\sim r'\quad \Leftrightarrow \quad\exists \mathcal{A}\in Aut(\mathfrak{g}) \quad \mathcal{A}r\mathcal{A}^t=r'.
\]

\noindent Then, we take a representative of invertible one, and find all $r$-$n$ structures $(r,n)$ on $\mathfrak g$ by solving equations  (\ref{Nijenhuis}), (\ref{Con1}) and (\ref{Con2}) which give nine equations on four-dimensions. We did all computations using Maple. Finally, we classify all obtained pairs $(r,n)$ up to equivalence 
	\begin{equation}\label{e1}(r,n')\sim_0 (r,n) \quad \Leftrightarrow \quad\exists \mathcal{A}\in Aut(\mathfrak{g}) \quad\mathcal{A}r\mathcal{A}^t=r \ \&\ \mathcal{A}n\mathcal{A}^{-1}=n'\, ,
		\end{equation}
	where $\sim _{0}$ indicates the equivalence for $r$-$n$ structures with the same $r$. 

	\begin{proposition} \label{Pro}
		If $\{r_\alpha\}_\Lambda$ is a set of all representatives of the equivalence relation $r\sim r'$ and $\{(r_\alpha,n_\beta)\}_{(\alpha,\beta)\in\Gamma}$is a set of all representatives of the equivalence relations
		$(r_\alpha,n)\sim_0(r_\alpha,n')$, then $\{(r_\alpha,n_\beta)\}_{(\alpha,\beta)\in\Gamma}$ is a set of representatives of the equivalence relation $(r,n)\sim(r',n')$ (c.f. Definition \ref{equivalnt}).
	\end{proposition}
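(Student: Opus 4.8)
The plan is to recognize the whole situation as an equivalence coming from a single group action, and to show that the two–stage choice of representatives is just the standard way of picking orbit representatives fibered over a quotient. Concretely, $\operatorname{Aut}(\mathfrak g)$ acts on the set of $r$-$n$ structures by $\mathcal A\cdot(r,n)=(\mathcal A r\mathcal A^t,\,\mathcal A n\mathcal A^{-1})$, and by Definition \ref{equivalnt} (unwinding the commuting diagram exactly as in the proof of Corollary \ref{Theorem2}) two structures are equivalent, $(r,n)\sim(r',n')$, precisely when they lie in the same orbit, i.e. $r'=\mathcal A r\mathcal A^t$ and $n'=\mathcal A n\mathcal A^{-1}$ for one automorphism $\mathcal A$. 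I would first record two elementary structural facts. First, the projection $(r,n)\mapsto r$ intertwines $\sim$ with the $r$-matrix equivalence: $(r,n)\sim(r',n')$ forces $r\sim r'$. Second, $\sim_0$ is exactly the restriction of $\sim$ to a fiber of this projection over a fixed representative: if two structures share first component $r_\alpha$, then any automorphism realizing $(r_\alpha,n)\sim(r_\alpha,n')$ must satisfy $\mathcal A r_\alpha\mathcal A^t=r_\alpha$, so it stabilizes $r_\alpha$, and this is precisely the condition (\ref{e1}) defining $\sim_0$. (The inverse transpose appearing in the diagram is harmless here, and one checks the two conventions agree.)

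Next I would prove completeness. Given an arbitrary $r$-$n$ structure $(r,n)$, use that $\{r_\alpha\}$ exhausts the $r$-matrix classes to find $\alpha$ and $\mathcal B\in\operatorname{Aut}(\mathfrak g)$ with $\mathcal B r\mathcal B^t=r_\alpha$. Push the whole structure forward: $\mathcal B\cdot(r,n)=(r_\alpha,\tilde n)$ with $\tilde n=\mathcal B n\mathcal B^{-1}$. The point needing justification is that $(r_\alpha,\tilde n)$ is again an $r$-$n$ structure; this is the naturality of the four defining conditions of Proposition \ref{inft-ver-right-1} under Lie-algebra isomorphisms (equivalently, $\mathcal B$ integrates to an automorphism of $G$ carrying the right-invariant $P$-$N$ structure $(\rvec r,\rvec n)$ to another one). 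Now $(r_\alpha,\tilde n)$ lies in the fiber over $r_\alpha$, so by completeness of $\{n_\beta\}$ for $\sim_0$ there is $\beta$ with $(r_\alpha,\tilde n)\sim_0(r_\alpha,n_\beta)$. Since every $\sim_0$-equivalence is in particular a $\sim$-equivalence, chaining gives $(r,n)\sim(r_\alpha,\tilde n)\sim(r_\alpha,n_\beta)$, so every structure is $\sim$-equivalent to some listed element.

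Finally I would prove irredundancy. Suppose $(r_\alpha,n_\beta)\sim(r_{\alpha'},n_{\beta'})$ via $\mathcal A$. By the first structural fact $r_\alpha\sim r_{\alpha'}$; as $\{r_\alpha\}$ has one representative per class, this forces $\alpha=\alpha'$. With $\alpha=\alpha'$ now fixed, $\mathcal A$ satisfies $\mathcal A r_\alpha\mathcal A^t=r_\alpha$ and $\mathcal A n_\beta\mathcal A^{-1}=n_{\beta'}$, which by the second structural fact says $(r_\alpha,n_\beta)\sim_0(r_\alpha,n_{\beta'})$; since $\{n_\beta\}$ is a set of $\sim_0$-representatives, $\beta=\beta'$. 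Hence the listed elements are pairwise $\sim$-inequivalent, and together with completeness this proves that $\{(r_\alpha,n_\beta)\}_{(\alpha,\beta)\in\Gamma}$ is a set of $\sim$-representatives.

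The only step that is not pure bookkeeping is the naturality invoked in completeness, namely that the class of $r$-$n$ structures is closed under the $\operatorname{Aut}(\mathfrak g)$-action; I expect this to be the main (though mild) obstacle. I would handle it either by noting that conditions (i)--(iv) are tensorial and hence preserved by any Lie-algebra isomorphism, or by transporting the statement to $G$ through the correspondence of Proposition \ref{inft-ver-right-1}. Everything else is the formal assertion that orbit representatives may be selected fiberwise over the quotient by the first coordinate.
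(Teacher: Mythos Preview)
Your proof is correct and follows essentially the same two-step argument as the paper: first show every $(r,n)$ is $\sim$-equivalent to some $(r_\alpha,n_\beta)$ by transporting along an automorphism carrying $r$ to $r_\alpha$ and then selecting a $\sim_0$-representative, then show distinct listed pairs are $\sim$-inequivalent by projecting to the $r$-coordinate to force $\alpha=\alpha'$ and then invoking the stabilizer condition to force $\beta=\beta'$. Your version is more explicit than the paper's in two respects: you frame the argument as orbit-selection for the $\operatorname{Aut}(\mathfrak g)$-action $\mathcal A\cdot(r,n)=(\mathcal A r\mathcal A^t,\mathcal A n\mathcal A^{-1})$, and you flag the naturality step (that $(r_\alpha,\mathcal B n\mathcal B^{-1})$ is again an $r$-$n$ structure) which the paper silently assumes when it writes ``take the representative $(r_\alpha,n_\beta)$ of $(r_\alpha,\mathcal A n\mathcal A^{-1})$.''
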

	\begin{proof}
		Consider an r-n structure $(r,n)$. There exist $r_\alpha$ and $\mathcal{A}$ such that $r_\alpha=\mathcal{A}r\mathcal{A}t$. Take the representative $(r_\alpha,n_\beta)$ of $(r_\alpha, \mathcal{A}n\mathcal{A}^{-1})$. Then, it is easy to see that $(r_\alpha,n_\beta)$ represents the class of $(r,n)$ 
		under $\sim$. 
		
		Moreover, different elements of $\{(r_\alpha,n_\beta)\}_{(\alpha,\beta)\in\Gamma}$ represent different elements of $\sim$. Indeed, if $(r_\alpha,n_\beta)\sim(r_{\alpha'},n_{\beta'})$, then there is $\mathcal{A}$ such that
		$r_\alpha=\mathcal{A}r_{\alpha'}\mathcal{A}^t$ and $n_\beta=\mathcal{A}n_{\beta'}\mathcal{A}^{-1}$.
		But then, by definition, $\alpha=\alpha'$ and hence $(r_\alpha,n_\beta)\sim_0(r_\alpha,n_{\beta'})$, thus also $\beta=\beta'$.
		
	\end{proof}
\noindent In the following example we clarify the above procedure by describing the details for a Lie algebra.

{\bf Example}.
For the Lie algebra $A_{4,1}$ \footnote{We use the notations of the four-dimensional real Lie algebras denoted in \cite{Abedi}, (see also \cite{Christ}).} with non-zero commutators $[X_2,X_4]=X_1$ and $[X_3,X_4]=X_2$, from (\ref{Matrix}) we have the following matrices $\mathcal{X}_i$ and $\mathcal{Y}^i$

\[
   {\footnotesize	\mathcal{X}_1=\left(\begin{array}{cccc}
 	0& 0& 0 & 0\\ 0& 0& 0 & 0\\ 0& 0& 0 & 0\\ 0& 0& 0 & 0
 	\end{array} \right),\quad 	\mathcal{X}_2=\left(\begin{array}{cccc}
 0& 0& 0 & 0\\ 0& 0& 0 & 0 \\ 0& 0& 0 & 0\\ -1& 0& 0 & 0
 	\end{array} \right),\quad 	\mathcal{X}_3=\left(\begin{array}{cccc}
 	0& 0& 0 & 0\\ 0& 0& 0 & 0 \\ 0& 0& 0 & 0\\ 0& -1& 0 & 0
 	\end{array} \right),\quad 	\mathcal{X}_4=\left(\begin{array}{cccc}
 	0& 0& 0 & 0\\ 1& 0& 0 & 0 \\ 0& 1& 0 & 0\\ 0& 0& 0 & 0
 	\end{array}\right)},
 	\]
 	\[
   {\footnotesize	\mathcal{Y}^1=\left(\begin{array}{cccc}
 	0& 0& 0 & 0\\ 0& 0& 0 & -1\\ 0& 0& 0 & 0\\ 0& 1& 0 & 0
 	\end{array} \right),\quad
 		\mathcal{Y}^2=\left(\begin{array}{cccc}
 	0& 0& 0 & 0\\ 0& 0& 0 & 0 \\ 0& 0& 0 & -1\\ 0& 0& 1 & 0
 	\end{array} \right),\quad 	
 	\mathcal{Y}^3=\left(\begin{array}{cccc}
 	0& 0& 0 & 0\\ 0& 0& 0 & 0 \\ 0& 0& 0 & 0\\ 0& 0& 0 & 0
 	\end{array} \right),\quad
 		\mathcal{Y}^4=\left(\begin{array}{cccc}
 	0& 0& 0 & 0\\ 0& 0& 0 & 0 \\ 0& 0& 0 & 0\\ 0& 0& 0 & 0
 	\end{array} \right).}
 	\]
 	
\subsection {\bf  Classification of $r$-matrices}	 \label{class-r-matrices}
 
Take a generic 2-vector $r=\sum_{i,j=1}^{4}r^{ij}X_i\wedge X_j$. Inserting matrices $\mathcal{X}_i$ and $\mathcal{Y}^i$, and the matrix form of $r$ in (\ref{Poisson}), one can obtain all possible $r$-matrices on the Lie algebra $A_{4,1}$ as
\[
r=r^{12}X_1\wedge X_2 +r^{13}X_1\wedge X_3+r^{14}X_1\wedge X_4+r^{23}X_2\wedge X_3.
\]
\noindent We use the following automorphism group element (classified in \cite{Christ}, see also \cite{ReSe1}) of this algebra
 \[
 \footnotesize{
 	\mathcal A=\left(\begin{array}{cccc}
 	a_{11} a_{16}^2 & a_7 a_{16} & a_3 & a_4\\ 0 & a_{11} a_{16}& a_7 & a_8\\0 & 0 & a_{11} & a_{12}\\ 0 & 0 & 0 & a_{16}
 	\end{array} \right).}
 \]
\noindent We list the automorphism groups of four-dimensional symplectic real Lie algebra in the table 4 in Appendix for self containing of the paper.
\noindent Inserting the above $\mathcal A$ in the relation $\mathcal A\circ r-r'\circ \mathcal A^{-t}=0$, where $r'=r'^{12}X_1\wedge X_2 +r'^{13}X_1\wedge X_3+r'^{14}X_1\wedge X_4+r'^{23}X_2\wedge X_3$, we get
\[
r'^{14}=a_{11}a_{16}^3r^{14}, \quad 
r'^{13}=(a_{11}r^{13}+a_{12}r^{14})a_{11}a_{16}^2+a_7a_{11}a_{16}r^{23}, 
\]
\[
r'^{23}=a_{11}^2a_{16}r^{23},\quad r'^{12}=(a_{11}a_{16}r^{12}+a_7r^{13}+a_8r^{14})a_{11}a_{16}^2+a_7^2a_{16}r^{23}-a_3a_{11}a_{16}r^{23}.
\]
Looking at the equations, there is no free parameter and parameters $r^{12}$, $r^{13}$, $r^{14}$ and $r^{23}$ need to be determined. Since $det \mathcal A=a_{11}^3a_{16}^4$, parameters $a_{11},a_{16}$ have to be non-zero. Therefore, from the first and the third equations, $r'^{14}=0$ if and only if $r^{14}=0$, and $r'^{23}=0$ if and only if $r^{23}=0$; in this case $r'^{13}=(a_{11}a_{16})^2r^{13}$ which means $r'^{13}$ has to be the same sign as $r^{13}$, and $r'^{12}=(a_{11}a_{16}r^{12}+a_7r^{13})a_{11}a_{16}^2$, which means $r'^{12}$ can be any arbitrary constant. So, we can consider the following equivalence classes 
\[
\begin{array}{rcl}
&& c^{12}X_1\wedge X_2 +c^{13}X_1\wedge X_3, \quad c^{12}\in \mathbb R, \quad c^{13}\in \mathbb R^{+}-\{0\},\\[2pt]
&& c^{12}X_1\wedge X_2 +c^{13}X_1\wedge X_3, \quad c^{12}\in \mathbb R, \quad c^{13}\in \mathbb R^{-}-\{0\},\\[2pt]
&&  c^{12}X_1\wedge X_2, \quad c^{12}\in \mathbb R.\\
\end{array}
\]
It is obvious that invertible r-matrices can be equivalent to the invertible ones. It is also easy to see from the equations that, if $r^{14}$ and $r^{23}$ are non-zero, then $r'^{14}$ and $r'^{23}$ are non-zero. In this case, $r'^{12}$ and $r'^{13}$ can be arbitrary constants, and equivalence class is
\[
c^{12}X_1\wedge X_2 +c^{13}X_1\wedge X_3+c^{14}X_1\wedge X_4+c^{23}X_2\wedge X_3, \quad c^{12},c^{13}\in \mathbb R, \quad c^{14},c^{23}\in \mathbb R-\{0\}.
\]

\noindent For two other possibilities, $r^{14}=0$ when $r^{23}\neq 0$, and $r^{14}\neq 0$ when $r^{23}= 0$ for which the corresponding $r$-matrices are non-equivalent, we see that $r^{12}$ and $r^{13}$ can be arbitrary constants, so the equivalence classes in this case are
\[
\begin{array}{rcl}
	&& c^{12}X_1\wedge X_2 +c^{13}X_1\wedge X_3+c^{14}X_1\wedge X_4, \quad c^{12},c^{13}\in \mathbb R, \quad c^{14}\in \mathbb R-\{0\},\\[2pt]
&& c^{12}X_1\wedge X_2 +c^{13}X_1\wedge X_3+c^{23}X_2\wedge X_3, \quad c^{12},c^{13}\in \mathbb R, \quad c^{23}\in \mathbb R-\{0\}.
\end{array}
\]

A classification of all $r$-matrices on four-dimensional symplectic real Lie algebras are given in table $1$. The first column gives the names of the Lie algebras according to \cite{Abedi}. The second column gives the non-vanishing structure constants of the Lie algebra $\mathfrak g$. Column three exhibits the equivalence classes of $r$-matrices on  $\mathfrak g$. Column four and five indicate the conditions on some coefficients for each class. The equivalence classes of invertible $r$-matrices are indicated by $\ast$.  

Note that there are two different types of coefficients in this table, free parameters $r^{ij}$ and arbitrary constants $c^{ij}$. We mean by free parameters, the parameters for which different values get non-equivalent $r$-matrices belonging to different equivalence classes; and by arbitrary constants, the parameters such that for every different values of them, the corresponding $r$-matrices are equivalent belonging to the same class.

  {\footnotesize   \bf{Table 1.}} \label{TT}
  {\footnotesize Classification of $r$-matrices on four-dimensional symplectic real Lie algebras. }\\   \begin{tabular}{l l | l l l lp{40mm} }
  	
  	\hline\hline
  	{\scriptsize $\mathfrak g$ }&{\scriptsize  $C_{ij}^k$}   &{\scriptsize  Equivalence classes of $r$-matrices}
  	& {\scriptsize  $$}& {\scriptsize  $$}
  	\smallskip\\
  	\hline
  	\smallskip
  		{\scriptsize $A_{4,1}$}&{\scriptsize  $f_{24}^1=1$}& 
  	{\scriptsize  $\ast \; c^{12}X_1\wedge X_2+c^{13}X_1 \wedge X_3+c^{14}X_1\wedge X_4+c^{23}X_2 \wedge X_3$} &  {\scriptsize$c^{12},c^{13}\in \mathbb R, $} &  {\scriptsize$c^{14},c^{23}\in \mathbb R-\{0\}.$}\\[1pt]
  	
  	{\scriptsize  $$}& {\scriptsize  $f_{34}^2=1$} &{\scriptsize  $c^{12}X_1\wedge X_2+c^{13}X_1 \wedge X_3+c^{23}X_2\wedge X_3$}&   
  	{\scriptsize  $c^{12},c^{13}\in \mathbb R,  $}&
  	{\scriptsize  $c^{23}\in \mathbb R- \{0\}.$}\\[1pt]
  	
  	{\scriptsize  $$}&{\scriptsize  $$}	&
  	{\scriptsize  $c^{12}X_1\wedge X_2+c^{13}X_1 \wedge X_3+c^{14}X_1\wedge X_4 $} &
  	{\scriptsize  $c^{12},c^{13}\in \mathbb R,  $}&
  	{\scriptsize  $c^{14}\in \mathbb R- \{0\}.$}\\[1pt]
  	
  	{\scriptsize  $$}& {\scriptsize  $$} &{\scriptsize  $c^{12}X_1\wedge X_2+c^{13}X_1 \wedge X_3$}&  {\scriptsize  $c^{12}\in \mathbb R,$}  &
  	{\scriptsize  $c^{13}\in  \mathbb R^{+}-\{0\}.$} \\[1pt]
  	
  	{\scriptsize  $$}& {\scriptsize  $$}  &{\scriptsize  $c^{12}X_1\wedge X_2+c^{13}X_1 \wedge X_3$}&  {\scriptsize  $c^{12}\in \mathbb R,$}  &
  	{\scriptsize  $c^{13}\in  \mathbb R^{-}-\{0\}.$} \\[1pt]
  	
  	{\scriptsize  $$}& {\scriptsize  $$} &{\scriptsize  $c^{12}X_1\wedge X_2$} &
  	{\scriptsize  $$}&  {\scriptsize  $c^{12}\in \mathbb R-\{0\}.$}  \\

  	\hline
  	
  	
  	{\scriptsize $A_{4,2}^{-1}$}&{\scriptsize  $f_{14}^1=-1$}& 	 
  	{\scriptsize  $\ast \;c^{12}X_1\wedge X_2+c^{13}X_1 \wedge X_3+c^{23}X_2 \wedge X_3+c^{24}X_2\wedge X_4$} &  {\scriptsize$c^{12}\in \mathbb R, $} &  {\scriptsize$c^{13},c^{24}\in \mathbb R-\{0\},\quad  c^{23}\in \mathbb R^{+}-\{0\}.$}\\[1pt]
  	
  	{\scriptsize $$}&{\scriptsize  $f_{24}^2=1$}& 	 
  	{\scriptsize  $\ast \;c^{12}X_1\wedge X_2+c^{13}X_1 \wedge X_3+c^{23}X_2 \wedge X_3+c^{24}X_2\wedge X_4$} &  {\scriptsize$c^{12}\in \mathbb R, $} &  {\scriptsize$c^{13},c^{24}\in \mathbb R-\{0\},\quad  c^{23}\in \mathbb R^{-}-\{0\}.$}\\[1pt]
  	
  	{\scriptsize $$}&{\scriptsize  $f_{34}^2=1$}& 	
  	{\scriptsize  $c^{12}X_1\wedge X_2+c^{13}X_1 \wedge X_3+c^{23}X_2 \wedge X_3$} &  {\scriptsize$ $} &  {\scriptsize$c^{12},c^{13}\in \mathbb R-\{0\},\quad  c^{23}\in \mathbb R^{+}-\{0\}.$}\\[1pt]
  	
  	{\scriptsize  $$}& {\scriptsize  $f_{34}^3=1$}  &{\scriptsize  $c^{12}X_1\wedge X_2+c^{13}X_1 \wedge X_3+c^{23}X_2\wedge X_3$}&  {\scriptsize$ $} & 
  	{\scriptsize  $c^{12},c^{13}\in \mathbb R-\{0\}, \quad c^{23}\in \mathbb R^{-}- \{0\}.$}\\[1pt]
  	
  	{\scriptsize  $$}& {\scriptsize  $$}  &{\scriptsize  $c^{12}X_1\wedge X_2+c^{23}X_2 \wedge X_3+c^{24}X_2\wedge X_4$}&   
  	{\scriptsize  $c^{12}\in \mathbb R,  $}&
  	{\scriptsize  $c^{24}\in \mathbb R- \{0\},\quad c^{23}\in \mathbb R^{+}- \{0\}.$}\\[1pt]
  	
  	{\scriptsize  $$}& {\scriptsize  $$}  &{\scriptsize  $c^{12}X_1\wedge X_2+c^{23}X_2 \wedge X_3+c^{24}X_2\wedge X_4$}&   
  	{\scriptsize  $c^{12}\in \mathbb R,  $}&
  	{\scriptsize  $c^{24}\in \mathbb R- \{0\},\quad c^{23}\in \mathbb R^{-}- \{0\}.$}\\[1pt]
  	
  	{\scriptsize  $$}&{\scriptsize  $$}	&
  	{\scriptsize  $c^{12}X_1\wedge X_2+c^{13}X_1 \wedge X_3+c^{24}X_2\wedge X_4 $} &
  	{\scriptsize  $c^{12}\in \mathbb R,  $}&
  	{\scriptsize  $c^{13},c^{24}\in \mathbb R- \{0\}.$}\\[1pt]
  	
  	{\scriptsize  $$}&{\scriptsize  $$}	& 
  	{\scriptsize  $c^{12}X_1\wedge X_2+c^{13}X_1\wedge X_3+c^{14}X_1\wedge X_4 $} &
  	{\scriptsize  $c^{12}\in \mathbb R,  $}&
  	{\scriptsize  $c^{13},c^{14}\in \mathbb R- \{0\}.$}\\[1pt]
  	
  	{\scriptsize  $$}& {\scriptsize  $$}  &{\scriptsize  $c^{12}X_1\wedge X_2+c^{13}X_1 \wedge X_3$}&  {\scriptsize  $$}  &
  	{\scriptsize  $c^{12},c^{13}\in  \mathbb R-\{0\}.$} \\[1pt]
  	
  	{\scriptsize  $$}& {\scriptsize  $$}  &{\scriptsize  $c^{13}X_1\wedge X_3+c^{23}X_2 \wedge X_3$}&  {\scriptsize  $$}  &
  	{\scriptsize  $c^{13}\in  \mathbb R-\{0\}, \quad c^{23}\in  \mathbb R^{+}-\{0\}.$} \\[1pt]
  	
  	{\scriptsize  $$}& {\scriptsize  $$}  &{\scriptsize  $c^{13}X_1\wedge X_3+c^{23}X_2 \wedge X_3$}&  {\scriptsize  $$}  &
  	{\scriptsize  $c^{13}\in  \mathbb R-\{0\}, \quad c^{23}\in  \mathbb R^{-}-\{0\}.$} \\[1pt]
  	
  	{\scriptsize  $$}& {\scriptsize  $$}  &{\scriptsize  $c^{12}X_1\wedge X_2+c^{24}X_2 \wedge X_4$}&  {\scriptsize  $c^{12}\in \mathbb R,$}  &
  	{\scriptsize  $c^{24}\in  \mathbb R-\{0\}.$} \\ [1pt]	
  	
  	{\scriptsize  $$}& {\scriptsize  $$}  &{\scriptsize  $c^{12}X_1\wedge X_2+c^{23}X_2 \wedge X_3$}&  {\scriptsize  $$}  &
  	{\scriptsize  $c^{12}\in  \mathbb R-\{0\},\quad c^{23}\in  \mathbb R^{+}-\{0\}.$} \\ 	[1pt]
  	
  	{\scriptsize  $$}& {\scriptsize  $$}  &{\scriptsize  $c^{12}X_1\wedge X_2+c^{23}X_2 \wedge X_3$}&  {\scriptsize  $$}  &
  	{\scriptsize  $c^{12}\in  \mathbb R-\{0\},\quad c^{23}\in  \mathbb R^{-}-\{0\}.$} \\ [1pt]		
  	
  	{\scriptsize  $$}& {\scriptsize  $$}  &{\scriptsize  $c^{12}X_1\wedge X_2+c^{14}X_1 \wedge X_4$}&  {\scriptsize  $c^{12}\in  \mathbb R$}  &
  	{\scriptsize  $c^{14}\in  \mathbb R-\{0\}.$} \\ [1pt]
  	
  	{\scriptsize  $$}& {\scriptsize  $$}  &{\scriptsize  $c^{12}X_1\wedge X_2$}&  {\scriptsize  $$}  &
  	{\scriptsize  $c^{12}\in  \mathbb R-\{0\}.$} \\ 	[1pt]
  	
  	{\scriptsize  $$}& {\scriptsize  $$}  &{\scriptsize  $c^{13}X_1\wedge X_3$}&  {\scriptsize  $$}  &
  	{\scriptsize  $c^{13}\in  \mathbb R-\{0\}.$} \\ [1pt]
  	
  	{\scriptsize  $$}& {\scriptsize  $$}  &{\scriptsize  $c^{23}X_2\wedge X_3$}&  {\scriptsize  $$}  &
  	{\scriptsize  $c^{23}\in  \mathbb R^{+}-\{0\}.$} \\ [1pt]
  	
  	{\scriptsize  $$}& {\scriptsize  $$}  &{\scriptsize  $c^{23}X_2\wedge X_3$}&  {\scriptsize  $$}  &
  	{\scriptsize  $c^{23}\in  \mathbb R^{-}-\{0\}.$} \\ 		
  	\hline
  	
  	{\scriptsize $A_{4,3}$}&{\scriptsize  $f_{14}^1=1$}& 	 
  	{\scriptsize  $\ast \;c^{12}X_1\wedge X_2+c^{13}X_1 \wedge X_3+c^{14}X_1\wedge X_4+c^{23}X_2 \wedge X_3$} &  {\scriptsize$c^{12},c^{13}\in \mathbb R, $} &  {\scriptsize$c^{14},\in \mathbb R-\{0\},\quad  c^{23}\in \mathbb R^{+}-\{0\}.$}\\[1pt]
  	
  	{\scriptsize $$}&{\scriptsize  $f_{34}^2=1$}& 
  	{\scriptsize  $\ast \;c^{12}X_1\wedge X_2+c^{13}X_1 \wedge X_3+c^{14}X_1\wedge X_4+c^{23}X_2 \wedge X_3$} &  {\scriptsize$c^{12},c^{13}\in \mathbb R, $} &  {\scriptsize$c^{14},\in \mathbb R-\{0\},\quad  c^{23}\in \mathbb R^{-}-\{0\}.$}\\[1pt]
  	
  	{\scriptsize  $$}& {\scriptsize  $$}  &{\scriptsize  $c^{12}X_1\wedge X_2+c^{13}X_1 \wedge X_3+c^{23}X_2\wedge X_3$}&   
  	{\scriptsize  $c^{12},c^{13}\in \mathbb R,  $}&
  	{\scriptsize  $c^{23}\in \mathbb R^{+}- \{0\}.$}\\[1pt]
  	
  	{\scriptsize  $$}& {\scriptsize  $$}  &{\scriptsize  $c^{12}X_1\wedge X_2+c^{13}X_1 \wedge X_3+c^{23}X_2\wedge X_3$}&   
  	{\scriptsize  $c^{12},c^{13}\in \mathbb R,  $}&
  	{\scriptsize  $c^{23}\in \mathbb R^{-}- \{0\}.$}\\[1pt]
  	
  	{\scriptsize  $$}&{\scriptsize  $$}	&
  	{\scriptsize  $c^{12}X_1\wedge X_2+c^{13}X_1 \wedge X_3+c^{14}X_1\wedge X_4 $} &
  	{\scriptsize  $c^{12},c^{13}\in \mathbb R,  $}&
  	{\scriptsize  $c^{14}\in \mathbb R- \{0\}.$}\\[1pt]
  	
  	{\scriptsize  $$}&{\scriptsize  $$}	& 
  	{\scriptsize  $c^{12}X_1\wedge X_2+c^{23}X_2 \wedge X_3+c^{24}X_2\wedge X_4 $} &
  	{\scriptsize  $c^{12},c^{23}\in \mathbb R,  $}&
  	{\scriptsize  $c^{24}\in \mathbb R- \{0\}.$}\\[1pt]
  	
  	{\scriptsize  $$}& {\scriptsize  $$}  &{\scriptsize  $c^{12}X_1\wedge X_2+c^{13}X_1 \wedge X_3$}&  {\scriptsize  $c^{12}\in \mathbb R,$}  &
  	{\scriptsize  $c^{13}\in  \mathbb R-\{0\}.$} \\[1pt]
  	
  	{\scriptsize  $$}& {\scriptsize  $$}  &{\scriptsize  $c^{12}X_1\wedge X_2$}&   
  	{\scriptsize  $$}& {\scriptsize  $c^{12}\in \mathbb R-\{0\}.$} \\
  	\hline
  	
  	{\scriptsize $A_{4,6}^{a,0}$}&{\scriptsize  $f_{14}^1=a$}& 	 
  	{\scriptsize  $\ast \;c^{12}X_1\wedge X_2+c^{13}X_1 \wedge X_3+c^{14}X_1\wedge X_4+c^{23}X_2 \wedge X_3$} &  {\scriptsize$c^{12},c^{13}\in \mathbb R, $} &  {\scriptsize$c^{14},\in \mathbb R-\{0\},\quad  c^{23}\in \mathbb R^{+}-\{0\}.$}\\[1pt]
  	
  	{\scriptsize $$}&{\scriptsize  $f_{24}^3=-1$}& 
  	{\scriptsize  $\ast \;c^{12}X_1\wedge X_2+c^{13}X_1 \wedge X_3+c^{14}X_1\wedge X_4+c^{23}X_2 \wedge X_3$} &  {\scriptsize$c^{12},c^{13}\in \mathbb R, $} &  {\scriptsize$c^{14},\in \mathbb R-\{0\},\quad  c^{23}\in \mathbb R^{-}-\{0\}.$}\\[1pt]
  	
  	{\scriptsize  $$}& {\scriptsize  $f_{34}^2=1$}  &{\scriptsize  $c^{12}X_1\wedge X_2+c^{13}X_1 \wedge X_3+c^{23}X_2\wedge X_3$}&   
  	{\scriptsize  $c^{12},c^{13}\in \mathbb R,  $}&
  	{\scriptsize  $c^{23}\in \mathbb R^{+}- \{0\}.$}\\[1pt]
  	
  	{\scriptsize  $$}& {\scriptsize  $a\neq 0$}  &{\scriptsize  $c^{12}X_1\wedge X_2+c^{13}X_1 \wedge X_3+c^{23}X_2\wedge X_3$}&   
  	{\scriptsize  $c^{12},c^{13}\in \mathbb R,  $}&
  	{\scriptsize  $c^{23}\in \mathbb R^{-}- \{0\}.$}\\[1pt]
  	
  	{\scriptsize  $$}&{\scriptsize  $$}	&
  	{\scriptsize  $c^{12}X_1\wedge X_2+c^{13}X_1 \wedge X_3+c^{14}X_1\wedge X_4 $} &
  	{\scriptsize  $c^{12},c^{13}\in \mathbb R,  $}&
  	{\scriptsize  $c^{14}\in \mathbb R- \{0\}.$}\\[1pt]

  	{\scriptsize  $$}& {\scriptsize  $$}  &{\scriptsize  $c^{12}X_1\wedge X_2+c^{13}X_1 \wedge X_3$}&  {\scriptsize  $c^{12},c^{13}\in \mathbb R,$}  &
  	{\scriptsize  $ c^{12} \;\mbox{or} \; c^{13}\neq 0.$} \\[1pt]
 
  	\hline

  	\end{tabular}
  	
  	{\footnotesize   \bf{Table 1.}} \label{TT}
  	{\footnotesize (Continued.) }\\   \begin{tabular}{l l | l l l lp{40mm} }
  	
  	\hline\hline
  	{\scriptsize $\mathfrak g$ }&{\scriptsize  $C_{ij}^k$}   &{\scriptsize  Equivalence classes of $r$-matrices}
  	& {\scriptsize  $$}& {\scriptsize  $$}
  	\smallskip\\
  	\hline
  	\smallskip
  	
  	{\scriptsize $A_{4,7}$}&{\scriptsize  $f_{14}^1=2$}& 	 
  	{\scriptsize  $c^{12}X_1\wedge X_2-c^{23}(\displaystyle \frac{c^{23}}{c^{24}}X_1 \wedge X_3+X_1\wedge X_4-X_2 \wedge X_3)+c^{24}X_2\wedge X_4$} &  {\scriptsize$c^{12},c^{23}\in \mathbb R, $} &  {\scriptsize$c^{24},\in \mathbb R-\{0\}.$}\\[1pt]
  	
  	{\scriptsize $$}&{\scriptsize  $f_{23}^1=1$}& 	 
  	{\scriptsize  $\ast \;c^{12}X_1\wedge X_2+c^{13}X_1 \wedge X_3+c^{23}( \displaystyle\frac{1}{2}X_1\wedge X_4+X_2 \wedge X_3)$} &  {\scriptsize$c^{12},c^{13}\in \mathbb R, $} &  {\scriptsize$c^{23},\in \mathbb R^{+}-\{0\}.$}\\[3pt]
  	
  	{\scriptsize $$}&{\scriptsize  $f_{24}^2=1$}& 
  	{\scriptsize  $c^{12}X_1\wedge X_2+c^{13}X_1 \wedge X_3+c^{23}( \displaystyle\frac{1}{2}X_1\wedge X_4+X_2 \wedge X_3)$} &  {\scriptsize$c^{12},c^{13}\in \mathbb R, $} &  {\scriptsize$c^{23},\in \mathbb R^{-}-\{0\}.$}\\[1pt]
  	
  	{\scriptsize  $$}& {\scriptsize  $f_{34}^2=1$}  &{\scriptsize  $c^{12}X_1\wedge X_2+c^{13}X_1 \wedge X_3+c^{14}X_1\wedge X_4$}&   
  	{\scriptsize  $c^{12},c^{13}\in \mathbb R,  $}&
  	{\scriptsize  $c^{14}\in \mathbb R^{+}- \{0\}.$}\\[1pt]
  	
  	{\scriptsize  $$}& {\scriptsize  $f_{34}^3=1$}  &{\scriptsize  $c^{12}X_1\wedge X_2+c^{13}X_1 \wedge X_3+c^{14}X_1\wedge X_4$}&   
  	{\scriptsize  $c^{12},c^{13}\in \mathbb R,  $}&
  	{\scriptsize  $c^{14}\in \mathbb R^{-}- \{0\}.$}\\[1pt]
  	
  	{\scriptsize  $$}&{\scriptsize  $$}	&
  	{\scriptsize  $c^{12}X_1\wedge X_2+c^{13}X_1 \wedge X_3 $} &
  	{\scriptsize  $c^{12}\in \mathbb R,  $}&
  	{\scriptsize  $c^{13}\in \mathbb R- \{0\}.$}\\[1pt]

  	{\scriptsize  $$}& {\scriptsize  $$}  &{\scriptsize  $c^{12}X_1\wedge X_2$}&  {\scriptsize  $$}  &
  	{\scriptsize  $c^{12}\in \mathbb R- \{0\}.$} \\[1pt]
  	
  	\hline
  	
  	{\scriptsize $A_{4,9}^{-\frac{1}{2}}$}&{\scriptsize  $f_{14}^1=\frac{1}{2}$}& 	 
  	{\scriptsize  $c^{13}X_1\wedge X_3+c^{23}( X_2 \wedge X_3-X_1\wedge X_4+\displaystyle \frac{c^{23}}{c^{34}}X_1 \wedge X_2)+c^{34}X_3\wedge X_4$} &  {\scriptsize$c^{13},c^{23}\in \mathbb R, $} &  {\scriptsize$c^{34},\in \mathbb R-\{0\}.$}\\[1pt]

  	{\scriptsize $$}&{\scriptsize  $f_{23}^1=1$}& 	 
  	{\scriptsize  $\ast \;c^{12}X_1\wedge X_2+c^{13} X_1 \wedge X_3+c^{23}(X_2\wedge X_3+2X_1 \wedge X_4)$} &  {\scriptsize$c^{12},c^{13}\in \mathbb R, $} &  {\scriptsize$c^{23},\in \mathbb R-\{0\}.$}\\[1pt]
  	
  	{\scriptsize $$}&{\scriptsize  $f_{24}^2=1$}& 	 
  	{\scriptsize  $\ast \;c^{12}X_1\wedge X_2+c^{13} X_1 \wedge X_3+c^{24}X_2 \wedge X_4$} &  {\scriptsize$c^{12}\in \mathbb R, $} &  {\scriptsize$c^{13},c^{24} \in \mathbb R-\{0\}.$}\\[1pt]
  	
  	{\scriptsize $$}&{\scriptsize  $f_{34}^3=-\frac{1}{2}$}& 	 
  	{\scriptsize  $c^{12}X_1\wedge X_2+c^{13} X_1 \wedge X_3+c^{14}X_1 \wedge X_4$} &  {\scriptsize$c^{12},c^{13}\in \mathbb R, $} &  {\scriptsize$c^{14} \in \mathbb R-\{0\}.$}\\[1pt]
  	
  	{\scriptsize $$}&{\scriptsize  $$}& 
  	{\scriptsize  $c^{12}X_1\wedge X_2+c^{24}X_2 \wedge X_4$} &  {\scriptsize$c^{12}\in \mathbb R, $} &  {\scriptsize$c^{24} \in \mathbb R-\{0\}.$}\\[1pt]
  	
  	{\scriptsize  $$}& {\scriptsize  $$}  &{\scriptsize  $c^{12}X_1\wedge X_2+c^{13}X_1 \wedge X_3$}&   
  	{\scriptsize  $$}&
  	{\scriptsize  $c^{12},c^{13}\in \mathbb R- \{0\}.$}\\[1pt]
  	
  	{\scriptsize  $$}& {\scriptsize  $$}  &{\scriptsize  $c^{12}X_1\wedge X_2$}&   
  	{\scriptsize  $$}&
  	{\scriptsize  $c^{12}\in \mathbb R- \{0\}.$}\\[1pt]
  	
  	{\scriptsize  $$}& {\scriptsize  $$}  &{\scriptsize  $c^{12}X_1\wedge X_2$}&   
  	{\scriptsize  $$}&
  	{\scriptsize  $c^{12}\in \mathbb R- \{0\}.$}\\[1pt]
  	\hline
  	
  	{\scriptsize $A_{4,9}^{1}$}&{\scriptsize  $f_{14}^1=2$}& 	 
  	{\scriptsize  $\ast \;c^{12}X_1\wedge X_2+c^{13} X_1 \wedge X_3+c^{23}(X_2\wedge X_3+\displaystyle \frac{1}{2}X_1 \wedge X_4)$} &  {\scriptsize$c^{12},c^{13}\in \mathbb R, $} &  {\scriptsize$c^{23} \in \mathbb R-\{0\}.$}\\[2pt]
  	
  	{\scriptsize $$}&{\scriptsize  $f_{23}^1=1$}& 	 
  	{\scriptsize  $c^{13}X_1\wedge X_3+c^{23}(X_2\wedge X_3-X_1\wedge X_4+\displaystyle \frac{c^{23}}{c^{34}}X_1 \wedge X_2)+c^{34}X_3\wedge X_4$} &  {\scriptsize$c^{13},c^{23}\in \mathbb R, $} &  {\scriptsize$c^{34}\in \mathbb R-\{0\}.$}\\[2pt]
  	
  	{\scriptsize $$}&{\scriptsize  $f_{24}^2=1$}& 	 
  	{\scriptsize  $c^{12}X_1\wedge X_2+c^{13}X_1\wedge X_3+c^{14}X_1\wedge X_4$} &  {\scriptsize$c^{12},c^{13}\in \mathbb R, $} &  {\scriptsize$c^{14}\in \mathbb R-\{0\}.$}\\[1pt]
  	
  	{\scriptsize $$}&{\scriptsize  $f_{34}^3=1$}& 	 
  	{\scriptsize  $c^{12}X_1\wedge X_2+c^{13}X_1\wedge X_3$} &  {\scriptsize$c^{12},c^{13} \in \mathbb R$} &  {\scriptsize$c^{12} \;\mbox{or} \; c^{13}\neq 0.$}\\[1pt]	
  	
  	\hline
  	
  	{\scriptsize $A_{4,9}^{0}$}&{\scriptsize  $f_{14}^1=1$}& 	 
  	{\scriptsize  $\ast \; c^{12}X_1\wedge X_2+c^{13}X_1\wedge X_3+c^{23}( X_2 \wedge X_3+X_1\wedge X_4)$} &  {\scriptsize$c^{13}\in \mathbb R, $} &  {\scriptsize$c^{12},c^{23}\in \mathbb R-\{0\}.$}\\[1pt]
  	
  	{\scriptsize $$}&{\scriptsize  $f_{23}^1=1$}& 	 
  	{\scriptsize  $c^{13}X_1\wedge X_3+c^{23}( X_2 \wedge X_3+X_1\wedge X_4)$} &  {\scriptsize$c^{13}\in \mathbb R, $} &  {\scriptsize$c^{23}\in \mathbb R-\{0\}.$}\\[1pt]
  	
  	{\scriptsize $$}&{\scriptsize  $f_{24}^2=1$}& 	 
  	{\scriptsize  $c^{13}X_1\wedge X_3+c^{23}( X_2 \wedge X_3-X_1\wedge X_4+\displaystyle \frac{c^{23}}{c^{34}}X_1\wedge X_2)+c^{34}X_3\wedge X_4$} &  {\scriptsize$c^{13},c^{23}\in \mathbb R, $} &  {\scriptsize$c^{34}\in \mathbb R-\{0\}.$}\\[1pt]
  	
  	{\scriptsize $$}&{\scriptsize  $$}& 
  	{\scriptsize  $c^{12}X_1\wedge X_2+c^{14}X_1\wedge X_4+c^{24}X_2 \wedge X_4$} &  {\scriptsize$c^{12},c^{14}\in \mathbb R, $} &  {\scriptsize$c^{24} \in \mathbb R-\{0\}.$}\\[1pt]
  	
  	{\scriptsize  $$}& {\scriptsize  $$}  &{\scriptsize  $c^{12}X_1\wedge X_2+c^{13}X_1 \wedge X_3+c^{14}X_1\wedge X_4$}&   
  	{\scriptsize  $c^{12}\in \mathbb R,$}&
  	{\scriptsize  $c^{13},c^{14}\in \mathbb R- \{0\}.$}\\[1pt]
  	
  	{\scriptsize  $$}& {\scriptsize  $$}  &{\scriptsize  $c^{12}X_1\wedge X_2+c^{14}X_1\wedge X_4$}&   
  	{\scriptsize  $c^{12}\in \mathbb R,$}&
  	{\scriptsize  $c^{14}\in \mathbb R- \{0\}.$}\\[1pt]
  	
  	{\scriptsize  $$}& {\scriptsize  $$}  &{\scriptsize  $c^{12}X_1\wedge X_2+c^{13}X_1\wedge X_3$}&   
  	{\scriptsize  $$}&
  	{\scriptsize  $c^{12},c^{13}\in \mathbb R- \{0\}.$}\\[1pt]

  	{\scriptsize  $$}& {\scriptsize  $$}  &{\scriptsize  $c^{12}X_1\wedge X_2$}&   
  	{\scriptsize  $$}&
  	{\scriptsize  $c^{12}\in \mathbb R- \{0\}.$}\\[1pt]
  	
  	{\scriptsize  $$}& {\scriptsize  $$}  &{\scriptsize  $c^{13}X_1\wedge X_3$}&   
  	{\scriptsize  $$}&
  	{\scriptsize  $c^{13}\in \mathbb R- \{0\}.$}\\[1pt]
  	\hline
  	
  	
  	{\scriptsize $A_{4,9}^{b}$}&{\scriptsize  $f_{14}^1=1+b$}& 	 
  	{\scriptsize  $\ast \; c^{12}X_1\wedge X_2+c^{13}X_1\wedge X_3+c^{14}( X_1 \wedge X_4+(1+b)X_2\wedge X_3)$} &  {\scriptsize$c^{12},c^{13}\in \mathbb R, $} &  {\scriptsize$c^{14}\in \mathbb R-\{0\}.$}\\[1pt]
  	
  	{\scriptsize $$}&{\scriptsize  $f_{23}^1=1$}& 	 
  	{\scriptsize  $c^{12}X_1\wedge X_2+c^{14}( X_1 \wedge X_4-bX_2\wedge X_3- b\displaystyle\frac{c^{14}}{c^{24}}X_1\wedge X_3)+c^{24}X_2\wedge X_4$} &  {\scriptsize$c^{12},c^{14}\in \mathbb R, $} &  {\scriptsize$c^{24}\in \mathbb R-\{0\}.$}\\[2pt]
  	
  	{\scriptsize $$}&{\scriptsize  $f_{24}^2=1$}& 	 
  	{\scriptsize  $c^{13}X_1\wedge X_3+c^{23}( X_2 \wedge X_3+\displaystyle \frac{c^{23}}{c^{34}}X_1\wedge X_2-X_1\wedge X_4)+c^{34}X_3\wedge X_4$} &  {\scriptsize$c^{13},c^{23}\in \mathbb R, $} &  {\scriptsize$c^{34}\in \mathbb R-\{0\}.$}\\[2pt]
  	
  	{\scriptsize $$}&{\scriptsize  $f_{34}^3=b$}& 	 
  	{\scriptsize  $c^{12}X_1\wedge X_2+c^{13} X_1 \wedge X_3+c^{14} X_1 \wedge X_4$} &  {\scriptsize$c^{12},c^{13}\in \mathbb R $} &  {\scriptsize$c^{14}\in \mathbb R-\{0\}.$}\\[1pt]
  	
  	{\scriptsize $$}&{\scriptsize  $0<\rvert b \lvert <1$}& 	 
  	{\scriptsize  $c^{12}X_1\wedge X_2+c^{13} X_1 \wedge X_3$} &  {\scriptsize$ $} &  {\scriptsize$c^{12},c^{13}\in \mathbb R-\{0\}.$}\\[1pt]
  	
  	{\scriptsize $$}&{\scriptsize  $$}& 	 
  	{\scriptsize  $c^{12}X_1\wedge X_2$} &  {\scriptsize$ $} &  {\scriptsize$c^{12}\in \mathbb R-\{0\}.$}\\[1pt]
  	
  	{\scriptsize $$}&{\scriptsize  $$}& 	 
  	{\scriptsize  $c^{13} X_1 \wedge X_3$} &  {\scriptsize$ $} &  {\scriptsize$c^{13}\in \mathbb R-\{0\}.$}\\[1pt]
  	\hline
  	
  	{\scriptsize $A_{4,11}^{b}$}&{\scriptsize  $f_{14}^1=2b$}& 	 
  	{\scriptsize  $\ast \; c^{12}X_1\wedge X_2+c^{13}X_1\wedge X_3+c^{14}( X_1 \wedge X_4+2bX_2\wedge X_3)$} &  {\scriptsize$c^{12},c^{13}\in \mathbb R, $} &  {\scriptsize$c^{14}\in \mathbb R-\{0\}.$}\\[1pt]
  	
  	{\scriptsize $$}&{\scriptsize  $f_{23}^1=f_{34}^2=1$}& 	 
  	{\scriptsize  $ c^{12}X_1\wedge X_2+c^{13}X_1\wedge X_3+c^{14} X_1 \wedge X_4$} &  {\scriptsize$c^{12},c^{13}\in \mathbb R, $} &  {\scriptsize$c^{14}\in \mathbb R-\{0\}.$}\\[1pt]

  	{\scriptsize $$}&{\scriptsize  $f_{24}^2=f_{34}^3=b$}& 	 
  	{\scriptsize  $ c^{12}X_1\wedge X_2+c^{13}X_1\wedge X_3$} &  {\scriptsize$c^{12},c^{13}\in \mathbb R, $} &  {\scriptsize$c^{12}\;\mbox{or}\; c^{13}\neq 0.$}\\[1pt]
  	
  	{\scriptsize $$}&{\scriptsize  $f_{24}^3=-1$}\\
  	{\scriptsize $$}&{\scriptsize  $b>0$}\\

  	\hline
  	
  	
  \end{tabular}
  
  {\footnotesize   \bf{Table 1.}} \label{TT}
  {\footnotesize (Continued.) }\\   \begin{tabular}{l l | l l l lp{40mm} }
  	
  	\hline\hline
  	{\scriptsize $\mathfrak g$ }&{\scriptsize  $C_{ij}^k$}   &{\scriptsize  Equivalence classes of $r$-matrices}
  	& {\scriptsize  $$}& {\scriptsize  $$}
  	\smallskip\\
  	\hline
  	\smallskip
  	

  	{\scriptsize $A_{4,12}$}&{\scriptsize  $f_{14}^2=-1$}& 	 
  	{\scriptsize  $\ast \; c^{23}( X_2 \wedge X_3-X_1\wedge X_4)+c^{24}( X_2 \wedge X_4+X_1\wedge X_3)+$} &  {\scriptsize$c^{23},c^{24}\in \mathbb R, $} &  {\scriptsize$c^{23}\;\mbox{or}\; c^{24}\neq 0,$}\\[1pt]

  	{\scriptsize $$}&{\scriptsize  $f_{23}^2=1$}& 	 
  	{\scriptsize  $ \quad r^{12}X_1\wedge X_2$} &  {\scriptsize$ $} &  {\scriptsize$r^{12}\neq 0.$}\\[1pt]
  	
  	{\scriptsize $$}&{\scriptsize  $f_{13}^1=1$}& 	 
  	{\scriptsize  $\ast \; c^{23}( X_2 \wedge X_3-X_1\wedge X_4)+c^{24}( X_2 \wedge X_4+X_1\wedge X_3)$} &  {\scriptsize$c^{23},c^{24}\in \mathbb R, $} &  {\scriptsize$c^{23}\;\mbox{or}\; c^{24}\neq 0.$}\\[1pt]

  	{\scriptsize $$}&{\scriptsize  $f_{24}^1=1$}& 	 
  	{\scriptsize  $ c^{13}( X_1 \wedge X_3-X_2\wedge X_4)+c^{23}( X_2 \wedge X_3+X_1\wedge X_4)-$} &  {\scriptsize$c^{13},c^{23}\in \mathbb R, $} &  {\scriptsize$r^{34}\neq 0.$}\\[2pt]

  	{\scriptsize $$}&{\scriptsize  $$}& 	 
  	{\scriptsize  $r^{34}X_3\wedge X_4 -\displaystyle \frac{(c^{13})^2+(c^{23})^2}{r^{34}} X_1\wedge X_2$} &  {\scriptsize$ $} &  {\scriptsize$$}\\[1pt]
  	{\scriptsize $$}&{\scriptsize  $$}& 	 
  	{\scriptsize  $ c^{13} X_1\wedge X_3+c^{23} X_2\wedge X_3$} &  {\scriptsize$ c^{13},c^{23}\in \mathbb R,$} &  {\scriptsize$c^{13}\;\mbox{or}\; c^{23}\neq 0.$}\\[1pt] 
  	
  	{\scriptsize $$}&{\scriptsize  $$}& 	 
  	{\scriptsize  $ c^{12} X_1\wedge X_2$}&  {\scriptsize$$} &  {\scriptsize$ c^{12}\in \mathbb R^{+}-\{0\}.$} \\[1pt]								
  	
  	{\scriptsize $$}&{\scriptsize  $$}& 	 
  	{\scriptsize  $ c^{12} X_1\wedge X_2$} &  {\scriptsize$$}&  {\scriptsize$ c^{12}\in \mathbb R^{-}-\{0\}.$} \\[1pt]						
  	\hline
  	
  	{\scriptsize $A_2\oplus A_2$}&{\scriptsize  $f_{12}^2=1$}& 	 
  	{\scriptsize $ r^{12}(X_1\wedge X_2-X_2\wedge X_3)+c^{14}(X_1 \wedge X_4+X_3 \wedge X_4)+$} & {\scriptsize$ c^{24}\in \mathbb R,$} &  {\scriptsize$c^{14},\in \mathbb R-\{0\}.\quad $}\\
  	
  	{\scriptsize $$}&{\scriptsize  $f_{34}^4=1$}& 	 
  	{\scriptsize $c^{24}X_2 \wedge X_4$} & {\scriptsize$$} &  {\scriptsize$$}\\
  	
  	{\scriptsize  $$}& {\scriptsize  $$}  &{\scriptsize $r^{13}X_1\wedge X_3+c^{23}X_2 \wedge X_3+c^{14}X_1\wedge X_4+\displaystyle \frac {c^{23}c^{14}}{r^{13}}X_2\wedge X_4$}&   
  	{\scriptsize  $ c^{14},c^{23}\in \mathbb R,$}&
  	{\scriptsize  $r^{13} \neq 0.$}\\
  	
  	{\scriptsize $$}&{\scriptsize  $$}& 
  	{\scriptsize  $r^{12}(X_1\wedge X_2-X_2\wedge X_3)+c^{24}X_2\wedge X_4$} &  {\scriptsize$c^{24}\in \mathbb R. $} &  {\scriptsize$$}\\[1pt]

  	{\scriptsize  $$}& {\scriptsize  $$}  &{\scriptsize $\ast \;c^{12}X_1\wedge X_2+c^{24}X_2 \wedge X_4+c^{34}X_3\wedge X_4$}&  {\scriptsize  $ $}&
  	{\scriptsize  $c^{12},c^{24},c^{34}\in \mathbb R- \{0\}.$}\\[1pt]
  	
  	{\scriptsize  $$}& {\scriptsize  $$}  &{\scriptsize $c^{12}X_1\wedge X_2+c^{23}X_2 \wedge X_3+c^{24}X_2\wedge X_4$}&  {\scriptsize  $c^{24}\in \mathbb R, $}& {\scriptsize  $c^{12},c^{23}\in \mathbb R- \{0\}.$}\\[1pt]			
  	
  	{\scriptsize  $$}& {\scriptsize  $$}  &{\scriptsize $c^{14}X_1\wedge X_4+c^{24}X_2 \wedge X_4+c^{34}X_3\wedge X_4$}&  {\scriptsize  $c^{24}\in \mathbb R, $}&
  	{\scriptsize  $c^{14},c^{34}\in \mathbb R- \{0\}.$}\\[1pt] 
  	
  	{\scriptsize  $$}& {\scriptsize  $$}  &{\scriptsize $c^{12}X_1\wedge X_2+c^{24}X_2 \wedge X_4$}&  {\scriptsize  $ $}&
  	{\scriptsize  $c^{12},c^{24}\in \mathbb R- \{0\}.$}\\[1pt]
  	
  	{\scriptsize  $$}& {\scriptsize  $$}  &{\scriptsize $c^{24}X_2\wedge X_4+c^{34}X_3 \wedge X_4$}&  {\scriptsize  $ $}&
  	{\scriptsize  $c^{24},c^{34}\in \mathbb R- \{0\}.$}\\[1pt]	
  	
  	{\scriptsize  $$}& {\scriptsize  $$}  &{\scriptsize $\ast \;c^{12}X_1\wedge X_2+c^{34}X_3 \wedge X_4$}&  {\scriptsize  $ $}&
  	{\scriptsize  $c^{12},c^{34}\in \mathbb R- \{0\}.$}\\[1pt]		
  	
  	{\scriptsize  $$}& {\scriptsize  $$}  &{\scriptsize $c^{23}X_2\wedge X_3+c^{24}X_2 \wedge X_4$}&  {\scriptsize  $c^{24}\in \mathbb R, $}&
  	{\scriptsize  $c^{23}\in \mathbb R- \{0\}.$}\\[1pt]		
  	
  	{\scriptsize  $$}& {\scriptsize  $$}  &{\scriptsize $c^{14}X_1\wedge X_4+c^{24}X_2 \wedge X_4$}&  {\scriptsize  $c^{24}\in \mathbb R, $}&
  	{\scriptsize  $c^{14}\in \mathbb R- \{0\}.$}\\[1pt]	
  	
  	{\scriptsize  $$}& {\scriptsize  $$}  &{\scriptsize $c^{12}X_1\wedge X_2$}&  {\scriptsize  $$}&
  	{\scriptsize  $c^{12}\in \mathbb R- \{0\}.$}\\[1pt]	
  	
  	{\scriptsize  $$}& {\scriptsize  $$}  &{\scriptsize $c^{24}X_2\wedge X_4$}&  {\scriptsize  $$}&
  	{\scriptsize  $c^{24}\in \mathbb R- \{0\}.$}\\[1pt]
  	
  	{\scriptsize  $$}& {\scriptsize  $$}  &{\scriptsize $c^{34}X_3\wedge X_4$}&  {\scriptsize  $$}&
  	{\scriptsize  $c^{34}\in \mathbb R- \{0\}.$}\\[1pt]			
  	\hline
  	
  	{\scriptsize $A_{4,5}^{-1,-1}$}&{\scriptsize  $f_{14}^1=1$}& 	 
  	{\scriptsize  $\ast \;c^{12} X_1 \wedge X_2+ c^{13}X_1\wedge X_3+ c^{23}X_2\wedge X_3+ $} & {\scriptsize$c^{12}\;\mbox{or}\;c^{34}=0,$} &  {\scriptsize$ c^{24},c^{13}\in\mathbb R-\{0\},$}\\[1pt]
  	
  	{\scriptsize $$}&{\scriptsize  $f_{24}^2=-1$}& 	 
  	{\scriptsize  $\quad   c^{24}X_2\wedge X_4+ c^{34}X_3\wedge X_4$} & {\scriptsize$(\mbox{or},\;c^{13}\;\mbox{or}\;c^{24}=0,$} &  {\scriptsize$c^{12},c^{34}\in\mathbb R-\{0\}).$}\\[1pt]

  	{\scriptsize $$}&{\scriptsize  $f_{34}^3=-1$}& 	 
  	{\scriptsize  $c^{12} X_1 \wedge X_2+ c^{13}X_1\wedge X_3+ c^{23}X_2\wedge X_3$} & {\scriptsize$c^{23}\in \mathbb R-\{0\},$} &  {\scriptsize$ c^{12}\; \mbox{or}\;c^{13}\neq 0.$}\\[1pt]
  	
  	{\scriptsize $$}&{\scriptsize  $$}& 	 
  	{\scriptsize  $ c^{23} X_2 \wedge X_3+ c^{24}X_2\wedge X_4+c^{34} X_3\wedge X_4$} & {\scriptsize$c^{23},c^{24},c^{34}\in \mathbb R, $} &  {\scriptsize$c^{24}\; \mbox{or}\;c^{34}\neq 0.$}\\[1pt]
  	
  	{\scriptsize $$}&{\scriptsize  $$}& 	 
  	{\scriptsize  $ c^{12} X_1 \wedge X_2+ c^{13}X_1\wedge X_3+c^{14} X_1\wedge X_4$} & {\scriptsize$c^{12},c^{13}\in \mathbb R, $} &  {\scriptsize$c^{14}\in \mathbb R-\{0\}.$}\\[1pt]
  	
  	{\scriptsize $$}&{\scriptsize  $$}& 	 
  	{\scriptsize  $ c^{12} X_1 \wedge X_2+ c^{13}X_1\wedge X_3$} & {\scriptsize$c^{12},c^{13}\in \mathbb R, $} &  {\scriptsize$c^{12}\; \mbox{or}\;c^{13}\neq 0.$}\\[1pt]

  	{\scriptsize $$}&{\scriptsize  $$}& 	 
  	{\scriptsize  $  c^{23}X_2\wedge X_3$} & {\scriptsize$ $} &  {\scriptsize$c^{23}\in \mathbb R-\{0\}.$}\\[1pt]		
  	\hline
  	
  	{\scriptsize $A_{4,5}^{a,-a}$}&{\scriptsize  $f_{14}^1=1$}& 	 
  	{\scriptsize  $\ast \; c^{12} X_1 \wedge X_2+ c^{13}X_1\wedge X_3+c^{14} X_1\wedge X_4+ c^{23}X_2\wedge X_3$} & {\scriptsize$c^{12},c^{13}\in \mathbb R, $} &  {\scriptsize$c^{14},c^{23}\in \mathbb R-\{0\}.$}\\[1pt]
  	
  	{\scriptsize $$}&{\scriptsize  $f_{24}^2=a$}& 	 
  	{\scriptsize  $ c^{12} X_1 \wedge X_2+ c^{13}X_1\wedge X_3+c^{14} X_1\wedge X_4$} & {\scriptsize$c^{12},c^{13}\in \mathbb R, $} &  {\scriptsize$c^{14}\in \mathbb R-\{0\}.$}\\[1pt]
  	
  	{\scriptsize $$}&{\scriptsize  $f_{34}^3=-a$}& 	 
  	{\scriptsize  $ c^{12} X_1 \wedge X_2+ c^{23}X_2\wedge X_3+c^{24} X_2\wedge X_4$} & {\scriptsize$c^{12},c^{23}\in \mathbb R, $} &  {\scriptsize$c^{24}\in \mathbb R-\{0\}.$}\\[1pt]	
  	
  	{\scriptsize $$}&{\scriptsize  $-1< a<1$}& 	 
  	{\scriptsize  $ c^{13} X_1 \wedge X_3+ c^{23}X_2\wedge X_3+c^{34} X_3\wedge X_4$} & {\scriptsize$c^{13},c^{23}\in \mathbb R, $} &  {\scriptsize$c^{34}\in \mathbb R-\{0\}.$}\\[1pt]
  	
  	{\scriptsize $$}&{\scriptsize  $a\neq 0$}& 	 
  	{\scriptsize  $ c^{12} X_1 \wedge X_2+ c^{13}X_1\wedge X_3+c^{23} X_2\wedge X_3$} & {\scriptsize$ $} &  {\scriptsize$c^{12},c^{13},c^{23}\in \mathbb R-\{0\}.$}\\[1pt]
  	
  	{\scriptsize $$}&{\scriptsize  $$}& 	 
  	{\scriptsize  $ c^{12} X_1 \wedge X_2+ c^{23}X_2\wedge X_3$} & {\scriptsize$ $} &  {\scriptsize$c^{12},c^{23}\in \mathbb R-\{0\}.$}\\[1pt]	
  	
  	{\scriptsize $$}&{\scriptsize  $$}& 	 
  	{\scriptsize  $ c^{13} X_1 \wedge X_3+ c^{23}X_2\wedge X_3$} & {\scriptsize$ $} &  {\scriptsize$c^{13},c^{23}\in \mathbb R-\{0\}.$}\\[1pt]
  	
  	{\scriptsize $$}&{\scriptsize  $$}& 	 
  	{\scriptsize  $ c^{12} X_1 \wedge X_2+ c^{13}X_1\wedge X_3$} & {\scriptsize$ $} &  {\scriptsize$c^{12},c^{13}\in \mathbb R-\{0\}.$}\\[1pt]

  	{\scriptsize $$}&{\scriptsize  $$}& 	 
  	{\scriptsize  $  c^{23}X_2\wedge X_3$} & {\scriptsize$ $} &  {\scriptsize$c^{23}\in \mathbb R-\{0\}.$}\\[1pt]	
  	
  	{\scriptsize $$}&{\scriptsize  $$}& 	 
  	{\scriptsize  $  c^{12}X_1\wedge X_2$} & {\scriptsize$ $} &  {\scriptsize$c^{12}\in \mathbb R-\{0\}.$}\\[1pt]	
  	
  	{\scriptsize $$}&{\scriptsize  $$}& 	 
  	{\scriptsize  $  c^{13}X_1\wedge X_3$} & {\scriptsize$ $} &  {\scriptsize$c^{13}\in \mathbb R-\{0\}.$}\\[1pt]		
  	\hline	
  	
  	\end{tabular}
  	{\footnotesize   \bf{Table 1.}} \label{TT}
  	{\footnotesize (Continued.)}\\   \begin{tabular}{l l | l l l lp{40mm} }
  	
  	\hline\hline
  	{\scriptsize $\mathfrak g$ }&{\scriptsize  $C_{ij}^k$}   &{\scriptsize  Equivalence classes of $r$-matrices}
  	& {\scriptsize  $$}& {\scriptsize  $$}
  	\smallskip\\
  	\hline
  	\smallskip
  	
   	{\scriptsize $A_{4,5}^{-1,a}$}&{\scriptsize  $f_{14}^1=1$}& 	 
   	{\scriptsize  $\ast \; c^{12} X_1 \wedge X_2+ c^{13}X_1\wedge X_3+ c^{23}X_2\wedge X_3+c^{34} X_3\wedge X_4$} & {\scriptsize$c^{12},c^{23}\in \mathbb R, $} &  {\scriptsize$c^{12},c^{34}\in \mathbb R-\{0\}.$}\\[1pt]
   	
   	{\scriptsize $$}&{\scriptsize  $f_{24}^2=-1$}& 	 
   	{\scriptsize  $ c^{12} X_1 \wedge X_2+ c^{13}X_1\wedge X_3+c^{14} X_1\wedge X_4$} & {\scriptsize$c^{12},c^{13}\in \mathbb R, $} &  {\scriptsize$c^{14}\in \mathbb R-\{0\}.$}\\[1pt]
   	
   	{\scriptsize $$}&{\scriptsize  $f_{34}^3=a$}& 	 
   	{\scriptsize  $ c^{12} X_1 \wedge X_2+ c^{23}X_2\wedge X_3+c^{24} X_2\wedge X_4$} & {\scriptsize$c^{12},c^{23}\in \mathbb R, $} &  {\scriptsize$c^{24}\in \mathbb R-\{0\}.$}\\[1pt]	
   	
   	{\scriptsize $$}&{\scriptsize  $-1< a<1$}& 	 
   	{\scriptsize  $ c^{13} X_1 \wedge X_3+ c^{23}X_2\wedge X_3+c^{34} X_3\wedge X_4$} & {\scriptsize$c^{13},c^{23}\in \mathbb R, $} &  {\scriptsize$c^{34}\in \mathbb R-\{0\}.$}\\[1pt]
   	
   	{\scriptsize $$}&{\scriptsize  $a\neq 0$}& 	 
   	{\scriptsize  $ c^{12} X_1 \wedge X_2+ c^{13}X_1\wedge X_3+c^{23} X_2\wedge X_3$} & {\scriptsize$ $} &  {\scriptsize$c^{12},c^{13},c^{23}\in \mathbb R-\{0\}.$}\\[1pt]
   	
   	{\scriptsize $$}&{\scriptsize  $$}& 	 
   	{\scriptsize  $ c^{12} X_1 \wedge X_2+ c^{23}X_2\wedge X_3$} & {\scriptsize$ $} &  {\scriptsize$c^{12},c^{23}\in \mathbb R-\{0\}.$}\\[1pt]	
   	
   	{\scriptsize $$}&{\scriptsize  $$}& 	 
   	{\scriptsize  $ c^{13} X_1 \wedge X_3+ c^{23}X_2\wedge X_3$} & {\scriptsize$ $} &  {\scriptsize$c^{13},c^{23}\in \mathbb R-\{0\}.$}\\[1pt]
   	
   	{\scriptsize $$}&{\scriptsize  $$}& 	 
   	{\scriptsize  $ c^{12} X_1 \wedge X_2+ c^{13}X_1\wedge X_3$} & {\scriptsize$ $} &  {\scriptsize$c^{12},c^{13}\in \mathbb R-\{0\}.$}\\[1pt]

   	{\scriptsize $$}&{\scriptsize  $$}& 	 
   	{\scriptsize  $  c^{23}X_2\wedge X_3$} & {\scriptsize$ $} &  {\scriptsize$c^{23}\in \mathbb R-\{0\}.$}\\[1pt]	
   	
   	{\scriptsize $$}&{\scriptsize  $$}& 	 
   	{\scriptsize  $  c^{12}X_1\wedge X_2$} & {\scriptsize$ $} &  {\scriptsize$c^{12}\in \mathbb R-\{0\}.$}\\[1pt]	
   	
   	{\scriptsize $$}&{\scriptsize  $$}& 	 
   	{\scriptsize  $  c^{13}X_1\wedge X_3$} & {\scriptsize$ $} &  {\scriptsize$c^{13}\in \mathbb R-\{0\}.$}\\[1pt]		
   	\hline	
 {\scriptsize $II\oplus\mathbb R$}&{\scriptsize  $f_{23}^1=1$}& 	 
 {\scriptsize  $\ast \; c^{12} X_1 \wedge X_2+ c^{13}X_1\wedge X_3+ c^{14}X_1\wedge X_4+$} & {\scriptsize$c^{13},c^{14},c^{24}\in \mathbb R, $} &  {\scriptsize$c^{12},c^{34}\in \mathbb R-\{0\},$}\\[1pt]
 
 	{\scriptsize $$}&{\scriptsize  $$}& 	 
 	{\scriptsize  $\quad \quad  c^{24}X_2\wedge X_4+c^{34} X_3\wedge X_4$} & {\scriptsize$(or \; c^{12},c^{14},c^{34}\in \mathbb R, $} &  {\scriptsize$c^{13},c^{24}\in \mathbb R-\{0\}).$}\\[1pt]
 	
 	 {\scriptsize $$}&{\scriptsize  $$}& 	 
 	 {\scriptsize  $ c^{12} X_1 \wedge X_2+ c^{13}X_1\wedge X_3+ c^{14}X_1\wedge X_4$} & {\scriptsize$c^{12},c^{13},c^{14}\in \mathbb R, $} &  {\scriptsize$c^{12} \;\mbox{or}\;c^{13}\neq 0.$}\\[1pt]
 	 
 	  {\scriptsize $$}&{\scriptsize  $$}& 	 
 	  {\scriptsize  $ c^{12} X_1 \wedge X_2+ c^{14}X_1\wedge X_4+ c^{24}X_2\wedge X_4$} & {\scriptsize$c^{12},c^{14} \in \mathbb R, $} &  {\scriptsize$c^{24}\in \mathbb R-\{0\}.$}\\[1pt]
 	  
 	    {\scriptsize $$}&{\scriptsize  $$}& 	 
 	    {\scriptsize  $  c^{14}X_1\wedge X_4$} & {\scriptsize$ $} &  {\scriptsize$c^{14}\in \mathbb R-\{0\}.$}\\[1pt]
 \hline
  {\scriptsize $VI_0\oplus \mathbb R$}&{\scriptsize  $f_{13}^1=1$}& 	 
  {\scriptsize  $\ast \; c^{12} X_1 \wedge X_2+ c^{14}X_1\wedge X_4+c^{24}X_2\wedge X_4+$} & {\scriptsize$c^{14},c^{24}\in \mathbb R, $} &  {\scriptsize$c^{12},c^{34}\in \mathbb R-\{0\}.$}\\[1pt]
  
  {\scriptsize $$}&{\scriptsize  $f_{23}^2=-1$}& 	 
  {\scriptsize  $\quad \quad  c^{34}X_3\wedge X_4$} & {\scriptsize$ $} &  {\scriptsize$$}\\[1pt]
  
  {\scriptsize $$}&{\scriptsize  $$}& 	 
  {\scriptsize  $ c^{12} X_1 \wedge X_2+ c^{13}X_1\wedge X_3+ c^{14}X_1\wedge X_4$} & {\scriptsize$c^{12},c^{14}\in \mathbb R, $} &  {\scriptsize$c^{13}\in \mathbb R-\{0\}.$}\\[1pt]
  
  {\scriptsize $$}&{\scriptsize  $$}& 	 
  {\scriptsize  $ c^{12} X_1 \wedge X_2+ c^{14}X_1\wedge X_4+ c^{24}X_2\wedge X_4$} & {\scriptsize$$} &  {\scriptsize$c^{12},c^{14} ,c^{24}\in \mathbb R-\{0\}.$}\\[1pt]
  
  {\scriptsize $$}&{\scriptsize  $$}& 	 
  {\scriptsize  $ c^{12} X_1 \wedge X_2+ c^{23}X_2\wedge X_3+ c^{24}X_2\wedge X_4$} & {\scriptsize$c^{12},c^{24}\in \mathbb R$} &  {\scriptsize$c^{23}\in \mathbb R-\{0\}.$}\\[1pt]
  
  {\scriptsize $$}&{\scriptsize  $$}& 	 
  {\scriptsize  $ c^{14} X_1 \wedge X_4+ c^{24}X_2\wedge X_4+ c^{34}X_3\wedge X_4$} & {\scriptsize$c^{24},c^{34}\in \mathbb R$} &  {\scriptsize$c^{34}\in \mathbb R-\{0\}.$}\\[1pt]
  
    {\scriptsize $$}&{\scriptsize  $$}& 	 
    {\scriptsize  $ c^{12} X_1 \wedge X_2+ c^{14}X_1\wedge X_4$} & {\scriptsize$$} &  {\scriptsize$c^{12},c^{14}\in \mathbb R-\{0\}.$}\\[1pt]

     {\scriptsize $$}&{\scriptsize  $$}& 	 
     {\scriptsize  $ c^{14} X_1 \wedge X_4+ c^{24}X_2\wedge X_4$} & {\scriptsize$$} &  {\scriptsize$c^{14},c^{24}\in \mathbb R-\{0\}.$}\\[1pt]
     
       {\scriptsize $$}&{\scriptsize  $$}& 	 
       {\scriptsize  $ c^{12} X_1 \wedge X_2+ c^{24}X_2\wedge X_4$} & {\scriptsize$$} &  {\scriptsize$c^{12},c^{24}\in \mathbb R-\{0\}.$}\\[1pt]
       
         {\scriptsize $$}&{\scriptsize  $$}& 	 
         {\scriptsize  $ c^{12} X_1 \wedge X_2$} & {\scriptsize$$} &  {\scriptsize$c^{12}\in \mathbb R-\{0\}.$}\\[1pt]
         
          {\scriptsize $$}&{\scriptsize  $$}& 	 
          {\scriptsize  $ c^{14} X_1 \wedge X_4$} & {\scriptsize$$} &  {\scriptsize$c^{14}\in \mathbb R-\{0\}.$}\\[1pt]
          
           {\scriptsize $$}&{\scriptsize  $$}& 	 
           {\scriptsize  $ c^{24} X_2 \wedge X_4$} & {\scriptsize$$} &  {\scriptsize$c^{24}\in \mathbb R-\{0\}.$}\\[1pt]
            \hline
            {\scriptsize $VII_0\oplus \mathbb R$}&{\scriptsize  $f_{13}^2=-1$}& 	 
            {\scriptsize  $\ast \; c^{12} X_1 \wedge X_2+ c^{14}X_1\wedge X_4+ c^{24}X_2\wedge X_4+$} & {\scriptsize$c^{14},c^{24}\in \mathbb R, $} &  {\scriptsize$c^{12}\in \mathbb R^{+}-\{0\},$}\\[1pt]
            
             {\scriptsize $$}&{\scriptsize  $f_{23}^1=1$}& 	 
             {\scriptsize  $\quad c^{34} X_3 \wedge X_4$} & {\scriptsize$ $} &  {\scriptsize$c^{34}\in \mathbb R-\{0\}.$}\\[1pt]
             
  {\scriptsize $$}&{\scriptsize  $$}& 	 
  {\scriptsize  $\ast \; c^{12} X_1 \wedge X_2+ c^{14}X_1\wedge X_4+ c^{24}X_2\wedge X_4+$} & {\scriptsize$c^{14},c^{24}\in \mathbb R, $} &  {\scriptsize$c^{12}\in \mathbb R^{-}-\{0\},$}\\[1pt]
  
  {\scriptsize $$}&{\scriptsize  $$}& 	 
  {\scriptsize  $\quad c^{34} X_3 \wedge X_4$} & {\scriptsize$ $} &  {\scriptsize$c^{34}\in \mathbb R-\{0\}.$}\\[1pt]

 {\scriptsize $$}&{\scriptsize  $$}& 	 
             {\scriptsize  $ c^{12} X_1 \wedge X_2+ c^{14}X_1\wedge X_4+ c^{24}X_2\wedge X_4$} & {\scriptsize$c^{14}\;\mbox{or}\;c^{24}\neq 0, $} &  {\scriptsize$c^{12}\in \mathbb R^{+}-\{0\},$}\\[1pt]      
             
              {\scriptsize $$}&{\scriptsize  $$}& 	 
              {\scriptsize  $ c^{12} X_1 \wedge X_2+ c^{14}X_1\wedge X_4+ c^{24}X_2\wedge X_4$} & {\scriptsize$c^{14}\;\mbox{or}\;c^{24}\neq 0, $} &  {\scriptsize$c^{12}\in \mathbb R^{-}-\{0\},$}\\[1pt] 
              
               {\scriptsize $$}&{\scriptsize  $$}& 	 
               {\scriptsize  $  c^{14}X_1\wedge X_4+ c^{24}X_2\wedge X_4+c^{34} X_3 \wedge X_4$} & {\scriptsize$c^{14},c^{24}\in \mathbb R, $} &  {\scriptsize$c^{34}\in \mathbb R-\{0\}.$}\\[1pt]
               
   {\scriptsize $$}&{\scriptsize  $$}& 	 
   {\scriptsize  $  c^{14}X_1\wedge X_4+ c^{24}X_2\wedge X_4$} & {\scriptsize$c^{14},c^{24}\in \mathbb R,$} &  {\scriptsize$c^{14}\;\mbox{or}\;c^{24}\neq 0. $}\\[1pt]   
   
   {\scriptsize $$}&{\scriptsize  $$}& 	 
   {\scriptsize  $  c^{12}X_1\wedge X_2$} & {\scriptsize$ $} &  {\scriptsize$c^{12}\in \mathbb R^{+}-\{0\}.$}\\[1pt] 
    {\scriptsize $$}&{\scriptsize  $$}& 	 
    {\scriptsize  $  c^{12}X_1\wedge X_2$} & {\scriptsize$ $} &  {\scriptsize$c^{12}\in \mathbb R^{-}-\{0\}.$}\\[1pt]        
  \hline  
  \hline
   	
   	\end{tabular} 

\subsection {\bf  $r$-$n$ structures with invertible $r$-matrices}\label{all-r-n}
 
 We take the representative  $r=X_1\wedge X_4-X_2\wedge X_3$ of equivalence class of the invertible $r$-matrices on Lie algebra $A_{4,1}$. By inserting $\mathcal{X}_i$ and $\mathcal{Y}^i$, the matrix forms of a generic $(1,1)$-tensor field $n=\sum_{i,j=1}^{4}n^i_{j}X_i\otimes X^j$ and $r$-matrix $r$ in the relations  (\ref{Nijenhuis}), (\ref{Con1}) and (\ref{Con2}), we find all possible Nijenhuis structures which are compatible with $r$.
 
 All $r$-$n$ structures $(r,n)$ on this Lie algebra corresponding to the $r$-matrix $r$ are obtained as the following matrix form
  \begin{equation}\label{Ex}
 \footnotesize{
 	r=\left(\begin{array}{cccc}
 	0& 0& 0 & 1\\ 0& 0& -1 & 0\\ 0& 1& 0 & 0\\ -1& 0& 0 & 0
 	\end{array} \right),\quad
 	n=\left(\begin{array}{cccc}
 	n_1& -n_2& n_4& 0\\ 0& n_3& 0 & n_4\\ 0& 0& n_3 & n_2\\ 0& 0& 0 & n_1
 	\end{array} \right).}
 \end{equation}
 For simplicity, in this case we take $n_1:=n^1_1,n_2:=n^3_4,n_3:=n^2_2,n_4:=n^2_4$.

 A list of $r$-$n$ structures with invertible $r$ on four-dimensional symplectic real Lie algebras are given in table $2$. The first column gives the names of the Lie bialgebras \footnote{We use the notations of the four-dimensional symplectice real Lie bialgebras denoted in \cite{Abedi}.} which are identified by the corresponding $r$-matrices in column three. The second column gives the non-vanishing structure constants of the Sklyanin bracket on the dual Lie algebra $\mathfrak g^*$. Column three exhibits the representatives of the equivalence class of invertible $r$-matrices on  $\mathfrak g$. In column four, the Nijenhuis structures compatible with the corresponding $r$-matrices are presented.
 
 {\footnotesize   \bf{Table 2.}} \label{TT}
 {\footnotesize $r$-$n$ structures with invertible $r$-matrix on four-dimensional symplectic real Lie algebras. }\\   \begin{tabular}{l | l | l l p{40mm} }
 
 \hline\hline
 {\scriptsize $\mathfrak g$ }&{\scriptsize  $ \mbox{Non-zero structure }$}&{\scriptsize  $\mbox{Invertible $r$-matrix $r$}$}
 & {\scriptsize  $\mbox{Nijenhuis structures compatible with $r$}$}\\
 {\scriptsize $\mathfrak g^*$ }&{\scriptsize  $\mbox{constants of $\mathfrak g^*$}$}&{\scriptsize  $$}
 & {\scriptsize  $$}
 \smallskip\\
 \hline
 \smallskip
 
 {\scriptsize $A_{4,1}$}& {\scriptsize ${ \tilde f}^{12}_3={\tilde f}^{13}_4=1$}&
 {\scriptsize  $X_1 \wedge X_4-X_2 \wedge X_3$} &  {\scriptsize$n(X_1)=n_1X_1,\quad n(X_2)=-n_2X_1+n_3X_2,$} \\
 
 {\scriptsize  $A_{4,1}.iii$}  &{\scriptsize  $$}& {\scriptsize  $ $}  &   {\scriptsize  $n(X_3)=n_4X_1+n_3X_3,\quad n(X_4)=n_4X_2+n_2X_3+n_1X_4.$}  \\
 
 \hline
 
 {\scriptsize $A_{4,2}^{-1}$}&{\scriptsize  ${\tilde f}^{12}_1={ \tilde f}^{23}_3=-1 $}&
 {\scriptsize  $X_1\wedge X_3-X_2 \wedge X_4$} & {\scriptsize$n(X_1)=n_1X_1+n_2X_2 ,\quad n(X_2)=n_3X_2,$} \\
 
 {\scriptsize  $A_{4,2}^{-1}.i$}&{\scriptsize  ${ \tilde f}^{24}_4={\tilde  f}^{21}_4=1 $} &  {\scriptsize  $$}&  {\scriptsize  $ n(X_3)=n_4X_2+n_1X_3,\quad n(X_4)=n_4X_1-n_2X_3+n_3X_4.$}\\
 \hline
 {\scriptsize $A_{4,3}$}&{\scriptsize  ${\tilde f}^{12}_3=-1 $}&
 {\scriptsize  $-X_1 \wedge X_4+X_2 \wedge X_3$} & {\scriptsize$n(X_1)=n_1X_1,\quad \quad  n(X_2)=n_2X_1+n_4X_2,$} \\
 
 {\scriptsize  $A_{4,3}.ii$}& {\scriptsize  ${\tilde f}^{14}_4=1 $} & {\scriptsize  $ $} & {\scriptsize  $ n(X_3)=n_3X_1+n_4X_3,\quad n(X_4)=n_3X_2-n_2X_3+n_1X_4.$}  \\
 
 \hline	
 
 {\scriptsize $A_{4,6}^{a,0}$}&{\scriptsize  ${\tilde f}^{21}_3={\tilde f}^{13}_2=-1 $}&
 {\scriptsize  $X_1 \wedge X_4+X_2 \wedge X_3$} & {\scriptsize$n(X_1)=n_1X_1,\quad n(X_2)=n_2X_1+n_4X_2,$} \\
 
 {\scriptsize  $A_{4,6}^{a,0}.i$}& {\scriptsize  ${\tilde f}^{14}_4=-a$} & {\scriptsize  $$} &  {\scriptsize  $ n(X_3)=n_3X_1+n_4X_3 ,\quad n(X_4)=-n_3X_2+n_2X_3+n_1X_4.$}   \\
 
 \hline
 
 {\scriptsize $A_{4,7}$}&{\scriptsize  ${\tilde f}^{12}_2={\tilde f}^{13}_3={\tilde f}^{21}_3=\displaystyle \frac{1}{2} $}&
 {\scriptsize  $\displaystyle -\frac{1}{2}X_1 \wedge X_4-X_2 \wedge X_3$} & {\scriptsize$n(X_1)=n_1X_1,\quad n(X_2)=n_2X_1+n_1X_2,$} \\
 
 {\scriptsize  $A_{4,7}.i$}& {\scriptsize  $ {\tilde f}^{14}_4=1,\;{\tilde f}^{23}_4=2 $} & {\scriptsize  $$}&  {\scriptsize  $ n(X_3)=n_3X_1+n_1X_3,\quad n(X_4)=-2n_3X_2+2n_2X_3+n_1X_4.$}   \\
 
 \hline
 
 {\scriptsize $A_{4,9}^{-\frac{1}{2}}$}&{\scriptsize  ${\tilde f}^{12}_2=-2 ,\; {\tilde f}^{13}_3=4$}&
 {\scriptsize  $-4X_1 \wedge X_4-2X_2\wedge X_3$} & {\scriptsize$\displaystyle n(X_1)=n_2X_1+n_1X_2,\quad n(X_3)=-2n_3X_1+n_2X_3+2n_1X_4,$}  \\
 
 {\scriptsize  $A_{4,9}^{-\frac{1}{2}}.iii$}& {\scriptsize  ${\tilde f}^{14}_4=2,\;{\tilde f}^{23}_4=1$} & {\scriptsize  $ $} &{\scriptsize  $n(X_2)=\displaystyle   2n_4X_1+n_2X_2,\quad n(X_4)=\displaystyle n_3X_2+n_4X_3+n_2X_4.$}   \\
 
 \hline
 {\scriptsize $A_{4,9}^1$}&{\scriptsize  ${\tilde f}^{12}_2={\tilde f}^{13}_3=-\displaystyle \frac{1}{4}$} &
 {\scriptsize  $\displaystyle \frac{1}{4}X_1 \wedge X_4+ \frac{1}{2}X_2 \wedge X_3$} & {\scriptsize$\displaystyle n(X_1)=n_1X_1,\quad n(X_2)=n_2X_1+n_1X_2,$} \\
 
 {\scriptsize  $A_{4,9}^1.i$}& {\scriptsize  ${\tilde f}^{23}_4=-1,\;{\tilde f}^{14}_4=-\displaystyle \frac{1}{2}$} & {\scriptsize  $$} &  {\scriptsize  $\displaystyle  n(X_3)=n_3X_1+n_1X_3,\quad n(X_4)=-2n_3X_2+2n_2X_3+n_1X_4.$}  \\
 \hline
  {\scriptsize $A_{4,9}^0$}&{\scriptsize  ${\tilde f}^{12}_4={\tilde f}^{14}_4=1$} &
  {\scriptsize  $\displaystyle -\frac{1}{2}X_1 \wedge X_2-X_1 \wedge X_4-$} & {\scriptsize$\displaystyle n(X_1)=n_1X_1,\quad  n(X_4)=n_2X_1+n_3X_2-2n_2X_3+n_1X_4,$} \\
  
  {\scriptsize  $A_{4,9}^0.iv$}& {\scriptsize  ${\tilde f}^{23}_4={\tilde f}^{13}_3=1$} & {\scriptsize  $\quad X_2 \wedge X_3$} &  {\scriptsize  $\displaystyle  n(X_2)=-2n_2X_1+n_1X_2,\quad n(X_3)=-n_3X_1+n_1X_3.$}  \\
  \hline
 {\scriptsize $A_{4,9}^b$}&{\scriptsize  ${\tilde f}^{14}_4=1+b$}&
 {\scriptsize  $-X_1 \wedge X_4-(1+b)X_2 \wedge X_3$} & {\scriptsize$\displaystyle n(X_1)=n_1X_1 ,\quad n(X_2)=n_2X_1+n_1X_2,$} \\
 
 {\scriptsize  $A_{4,9}^b.i$}& {\scriptsize  ${\tilde f}^{23}_4=(1+b)^2$} & {\scriptsize  $$} &  {\scriptsize  $\displaystyle  n(X_3)=n_3X_1+n_1X_3,$}    \\
 
 {\scriptsize  $$}& {\scriptsize  ${\tilde f}^{12}_2=b,\;{\tilde f}^{13}_3=1$} & {\scriptsize  $$} &  {\scriptsize  $  n(X_4)=(1+b)(n_2X_3-n_3X_2)+n_1X_4.$}    \\
 
 \hline
 {\scriptsize $A_{4,11}^b$}&{\scriptsize  ${\tilde f}^{12}_2={\tilde f}^{13}_3=\displaystyle \frac {1}{2}$}&
 {\scriptsize  $\displaystyle -\frac {1}{2b}X_1 \wedge X_4-X_2 \wedge X_3$} & {\scriptsize$\displaystyle n(X_1)=n_1X_1 ,\quad n(X_2)=n_2X_1+n_1X_2,$} \\

 {\scriptsize  $A_{4,11}^b.i$}& {\scriptsize  ${\tilde f}^{23}_4=2b,\;{\tilde f}^{14}_4=1 $} & {\scriptsize  $$} &  {\scriptsize  $\displaystyle  n(X_3)=n_3X_1+n_1X_3, $}   \\

 {\scriptsize  $$}& {\scriptsize  ${\tilde f}^{21}_3={\tilde f}^{13}_2=\displaystyle \frac {1}{2b} $} & {\scriptsize  $$} & {\scriptsize  $ n(X_4)=-2bn_3X_2+2bn_2X_3+n_1X_4.$}\\
 
 \hline
 {\scriptsize $A_{4,12}$}&{\scriptsize  ${\tilde f}^{23}_3={\tilde f}^{24}_4=-1$}&
 {\scriptsize  $\displaystyle -\frac {b}{2}X_1 \wedge X_2-X_1 \wedge X_4$} & {\scriptsize$\displaystyle n(X_1)=n_1X_1-n_2X_2,\quad n(X_2)=n_2X_1+n_1X_2,$} \\
 
 {\scriptsize  $A_{4,12}.ii$}& {\scriptsize  ${\tilde f}^{14}_3={\tilde f}^{31}_1=1 $} & {\scriptsize  $+X_2 \wedge X_3$} &    {\scriptsize  $\displaystyle  n(X_3)=\displaystyle n_3X_1+\frac{1}{2}bn_2X_2+ n_1X_3+n_2X_4, $}\\
 
 {\scriptsize  $$}&{\scriptsize  ${\tilde f}^{12}_3=b$}& {\scriptsize  $(b\in \mathbb R-\{0\})$}& {\scriptsize  $n(X_4)=\displaystyle -\frac{1}{2}bn_2X_1+n_3X_2-n_2X_3+n_1X_4.$}\\
 \hline
 {\scriptsize $A \oplus A $}&{\scriptsize  ${\tilde f}^{34}_3=1$}&
 {\scriptsize  $X_1\wedge X_2 +X_3 \wedge X_4$} & {\scriptsize$\displaystyle n(X_1)=n_1X_1-n_2X_4,\quad n(X_2)=n_1X_2,$} \\
 
 {\scriptsize  $(III\oplus \mathbb R).iii$}&{\scriptsize  $$} & {\scriptsize  $$}&  {\scriptsize  $\displaystyle  n(X_3)=n_2X_2+n_3X_3,\quad n(X_4)=n_3X_4.$}   \\
 
 \hline

\end{tabular}

 {\footnotesize   \bf{Table 2.}} \label{TT}
 {\footnotesize (Continued.)}\\   \begin{tabular}{l | l | l l p{40mm} }
 	
 	\hline\hline
 	{\scriptsize $\mathfrak g$ }&{\scriptsize  $ \mbox{Non-zero structure }$}&{\scriptsize  $\mbox{Invertible $r$-matrix $r$r}$}
 	& {\scriptsize  $\mbox{Nijenhuis structures compatible with $r$}$}\\
 	{\scriptsize $\mathfrak g^*$ }&{\scriptsize  $\mbox{constants of $\mathfrak g^*$}$}&{\scriptsize  $$}
 	& {\scriptsize  $$}
 	\smallskip\\
 	\hline
 	\smallskip
 {\scriptsize $A_{4,5}^{-1,-1} $}&{\scriptsize  ${\tilde f}^{12}_1={\tilde f}^{23}_3=-1$}&
 {\scriptsize  $X_1\wedge X_2 +X_1 \wedge X_3+$} & {\scriptsize$\displaystyle n(X_1)=(n_2-n_1)X_1+n_5X_2+n_1X_4,$} \\
  
 {\scriptsize  $A_{4,5}^{-1,-1}.i$}&{\scriptsize  ${\tilde f}^{24}_4=1$} & {\scriptsize  $X_2 \wedge X_4$}&  {\scriptsize  $\displaystyle  n(x_2)=(n_3-n_1)X_2-n_1X_3,\quad n(X_3)=n_4X_2+n_2X_3,$}   \\
 
 {\scriptsize  $$}&{\scriptsize  $$} & {\scriptsize  $$}&  {\scriptsize  $ n(X_4)=(n_2-n_3-n_4)X_1+n_5X_2+n_5X_3+n_3X_4.$}   \\
 
 \hline
 {\scriptsize $A_{4,5}^{a,-a} $}&{\scriptsize  ${\tilde f}^{21}_2={\tilde f}^{13}_3=a$}&
 {\scriptsize  $X_2\wedge X_3 -X_1 \wedge X_4$} & {\scriptsize$\displaystyle n(X_1)=n_1X_1,\quad n(X_2)=-n_2X_1+n_4X_2,$} \\
 
 {\scriptsize  $A_{4,5}^{a,-a}.i$}&{\scriptsize  ${\tilde f}^{14}_4=1$} & {\scriptsize  $$}&  {\scriptsize  $\displaystyle  n(X_3)=n_3X_1+n_4X_3,\quad n(X_4)=n_3X_2+n_2X_3+n_1X_4.$}   \\
 
 \hline
 {\scriptsize $A_{4,5}^{-1,a} $}&{\scriptsize  ${\tilde f}^{13}_1={\tilde f}^{32}_2=1$}&
 {\scriptsize  $X_1\wedge X_2 -X_3 \wedge X_4$} & {\scriptsize$\displaystyle n(X_1)=n_1X_1+n_3X_3,\quad n(X_2)=n_1X_2+n_2X_3,$} \\
 
 {\scriptsize  $A_{4,5}^{-1,a}.i$}&{\scriptsize  ${\tilde f}^{34}_4=a$} & {\scriptsize  $$}&  {\scriptsize  $\displaystyle  n(X_3)=n_4X_3,\quad n(X_4)=n_2X_1-n_3X_2+n_4X_4.$}   \\
 
 \hline
 
 {\scriptsize $II \oplus \mathbb R$}&{\scriptsize  ${\tilde f}^{14}_3=1$}&
 {\scriptsize  $X_1 \wedge X_3-X_2 \wedge X_4$} & {\scriptsize$\displaystyle n(X_1)=n_1X_1+n_5X_4,\quad n(X_2)=-n_2X_1+n_4X_2+n_5X_3, $} \\
 
 {\scriptsize  $(II \oplus \mathbb R).xiv$}& {\scriptsize  $$} & {\scriptsize  $$} &  {\scriptsize  $n(X_3)=n_3X_2+n_1X_3+n_2X_4,\quad n(X_4)=n_3X_1+n_4X_4.$}   \\
 \hline
 
 {\scriptsize $VI_0\oplus \mathbb R$}&{\scriptsize  ${\tilde f}^{14}_1=1$}&
 {\scriptsize  $X_1 \wedge X_2+X_3 \wedge X_4$} & {\scriptsize$\displaystyle n(X_1)=n_1X_1+n_3X_4,\quad n(X_2)=n_1X_2+n_2X_4, $} \\
 
 {\scriptsize  $(VI_0\oplus \mathbb R).ix$}& {\scriptsize  ${\tilde f}^{24}_2=-1$} & {\scriptsize  $$} &  {\scriptsize  $n(X_3)=n_2X_1-n_3X_2+n_4X_3,\quad n(X_4)=n_4X_4.$}   \\
 
 \hline  
 {\scriptsize $VII_0 \oplus \mathbb R$}&{\scriptsize  ${\tilde f}^{14}_2=1$}&
 {\scriptsize  $X_1 \wedge X_2+X_3 \wedge X_4$}& {\scriptsize$\displaystyle n(X_1)=n_1X_1+n_3X_4,\quad n(X_2)=n_1X_2+n_2X_4 ,$} \\
 
 {\scriptsize  $(VII_0 \oplus \mathbb R).iv$}& {\scriptsize  ${\tilde f}^{24}_1=-1$} & {\scriptsize  $$}&  {\scriptsize  $n(X_3)=n_2X_1-n_3X_2+n_4X_3,\quad n(X_4)=n_4X_4.$}  \\
 
 \hline  	
\hline 	
 	\end{tabular}

\subsection {\bf  Classification of $r$-$n$ structures with invertible $r$-matrices}\label{class-r-n}
 
 According to Proposition \ref{Pro}, if we find all equivalence classes of Nijenhuis structures such that $(r,n')\sim_0 (r,n) $, then we have all equivalence classes of $r$-$n$ structures corresponding to the $r$-matrix $r$. So, we consider automorphism group element of the Lie algebra $A_{4,1}$ such that $\mathcal A r \mathcal A^t=r$, and we get the solution
 \begin{equation}\label{auto}
   a_{11} = 1,\; a_{12 }= a_7,\; a_{16} = 1, \;a_3 = -a_8+a_7^2.
   \end{equation}
The automorphism $\mathcal A$ in the new expression, given by (\ref{auto}) is
  \begin{equation}\label{auto2}
  \footnotesize{
  	\mathcal A=\left(\begin{array}{cccc}
  	1 & a_7  & a_7^2 -a_8& a_4\\ 0 & 1& a_7 & a_8\\0 & 0 & 1 & a_7\\ 0 & 0 & 0 & 1
  	\end{array} \right)}.
  \end{equation}
  Since $det(\mathcal A)=1\neq 0$ and does not depend on parameters $a_4,a_7,a_8$, these parameters can take any value.
  \noindent Now consider two structures $n$ and $n'$
  \[
  \footnotesize{
  	n=\left(\begin{array}{cccc}
  	n_1& -n_2& n_4& 0\\ 0& n_3& 0 & n_4\\ 0& 0& n_3 & n_2\\ 0& 0& 0 & n_1
  	\end{array} \right),\quad
  	n'=\left(\begin{array}{cccc}
  	n'_1& -n'_2& n'_4& 0\\ 0& n'_3& 0 & n'_4\\ 0& 0& n'_3 & n'_2\\ 0& 0& 0 & n'_1
  	\end{array} \right).}
  \]
  Applying the automorphism group element (\ref{auto2}) in the relation $n\circ \mathcal A-\mathcal A\circ n'=0$, we obtain
  \[
  n_1-n'_1= 0, \quad n_3-n'_3= 0, \quad n'_2-n_2+a_7n_3-a_7n'_1= 0,\quad n_2-n'_2+a_7n_1-a_7n'_3= 0,
  \]
  \[
  n_4-n'_4+(a_7^2-a_8)n_3-(a_7^2-a_8)n'_1+n'_2a_7= 0, \quad n_4-n'_4+a_7n_2+a_8n_1-n'_3a_8= 0,
  \]
  \[
  a_7n_4+(a_7^2-a_8)n_2+a_4n_1-a_4n'_1+a_8n'_2-a_7n'_4= 0,\quad a_7 n_3-a_7n'_3 = 0 ,
  \]

  \noindent  Since $n_1=n'_1$ and $n_3=n'_3$, it means two elements $n_1$ and $n_3$ are free parameters. Parameters $n_2$ and $n_4$ need to be determined. Inserting $n_1=n'_1$ and $n_3=n'_3$ in the above equations we get
  \[
  \begin{array}{rclcrclcrclcrcl}
  && (n_2-n'_2)+a_7(n_1-n_3)= 0, \\[2pt]
  &&(n_4-n'_4)+a_7n_2+a_8(n_1-n_3)=0, \\ [2pt]
  && a_7(n_4-n'_4)+a_7^2n_2-a_8(n_2-n'_2)=0,\\[2pt]
  &&(n_4-n'_4)+(a_7^2-a_8)(n_3-n_1)+n'_2a_7 = 0
  .\\[2pt]
  \end{array}
  \]
  The above equations, can be reduced to the two equations
  \[
  (n_2-n'_2)-a_7(n_1-n_3)= 0,\quad
  (n_4-n'_4)+(a_7^2-a_8)(n_3-n_1)+n'_2a_7= 0.
  \]
  First, if $n_1=n_3$, then  $n_2=n'_2$ and $(n_4-n'_4)+n_2a_7=0$. Since the elements of automorphism group are arbitrary, there are two possibilities, $n_2=0$ which implies $n_4=n'_4$; or $n_2\neq 0$ which implies $n_4=n'_4-n_2a_7$. Therefore, $n_4$ can be any arbitrary constant in the both cases.
  
  \noindent Second, if $n_1\neq n_3$, it is easy to see that $n_2$ and $n_4$ can be any arbitrary constant. Eventually, equivalence classes of Nijenhuis structures $n$ are classified as follows
  \[
  {\footnotesize n^{(1)}=	\left(\begin{array}{cccc}
  	n_1& 0& c_1 & 0\\ 0& n_1& 0 & c_1\\ 0& 0& n_1 & 0\\ 0& 0& 0 & n_1
  	\end{array} \right),\quad
  	n^{(2)}=	\left(\begin{array}{cccc}
  	n_1& -c_2& c_3 & 0\\ 0& n_1& 0 & c_3\\ 0& 0& n_1 & c_2\\ 0& 0& 0 & n_1
  	\end{array} \right),\quad 	
  	n^{(3)}=	\left(\begin{array}{cccc}
  	n_1& -c_4& c_5 & 0\\ 0& n_3& 0 & c_5\\ 0& 0& n_3 & c_4\\ 0& 0& 0 & n_1
  	\end{array} \right),}
  \]
  where $c_1,c_3,c_4,c_5 \in \mathbb R$ and $c_2\in {\mathbb R}-\{0\}$. 
 
  A list of equivalence classes of the given Nijenhuis structures in table 2, for other Lie algebras are given in table 3.
   
Note that, similar to the table 1, there are two different types of coefficients in this table, free parameters $n_i$ and arbitrary constants $c_i$. We mean by free parameters $n_i$ , the parameters for which different values get non-equivalent Nijenhuis structures belonging to the different equivalence classes; and by $c_i$, arbitrary constants such that for every different values of them, the corresponding Nijenhuis structures are equivalent belonging to the same class.

{\footnotesize   \bf{Table 3.}} \label{TT}
{\footnotesize Classification of Nijenhuis structures. }\\   \begin{tabular}{l | l l l l lp{40mm} }
	
	\hline\hline
	{\scriptsize $\mathfrak g$ }&{\scriptsize  $\mbox{Equivalence classes}$}&{\scriptsize $$}
	& {\scriptsize  $$}
	\smallskip\\
	\hline
	\smallskip

	
	{\scriptsize $A_{4,1}$}&	 {\scriptsize  $n(X_1)=n_1X_1, $}&	{\scriptsize  $n(X_2)=n_1X_2,$} & {\scriptsize  $c_1\in\mathbb R.$}\\
	
	{\scriptsize$$}&	 {\scriptsize  $ n(X_3)=c_1X_1+n_1X_3,$}&	{\scriptsize  $n(X_4)=c_1X_2+n_1X_4,$} &  {\scriptsize$$}\\
	\hline
	{\scriptsize $$}& {\scriptsize  $n(X_1)=n_1X_1, $}&	{\scriptsize  $n(X_2)=-c_2X_1+n_1X_2,$} &  {\scriptsize  $c_3\in\mathbb R,$}\\
	
	{\scriptsize $$}& {\scriptsize  $n(X_3)=c_3X_1+n_1X_3,$}&	{\scriptsize  $n(X_4)=c_3X_2+c_2X_3+n_1X_4,$} &	{\scriptsize $ c_2\in\mathbb R-\{0\}.$}\\
	\hline
	{\scriptsize $$}& {\scriptsize  $n(X_1)=n_1X_1, $}&	{\scriptsize  $n(X_2)=-c_4X_1+n_3X_2,$} &{\scriptsize  $c_4,c_5\in\mathbb R,$}\\
	
	{\scriptsize $$}& {\scriptsize  $n(X_3)=c_5X_1+n_3X_3,  $}&	{\scriptsize  $n(X_4)=c_5X_2+c_4X_3+n_1X_4,$} &  {\scriptsize$n_1\neq n_3.$} \\
	\hline
	
	{\scriptsize $A_{4,2}^{-1}$}& {\scriptsize  $n(X_1)=n_1X_1+n_2X_2 , $}&	{\scriptsize  $n(X_2)=n_3X_2 ,$} &{\scriptsize  $$}\\
	
	{\scriptsize $$}& {\scriptsize  $n(X_3)=n_4X_2+n_1X_3 ,  $}&	{\scriptsize  $n(X_4)=n_4X_1-n_2X_3+n_3X_4.$} &  {\scriptsize$$}\\
	\hline

	{\scriptsize $A_{4,3}$}& {\scriptsize  $n(X_1)=n_1X_1, $}&	{\scriptsize  $n(X_2)=n_2X_1+n_4X_2,$} &{\scriptsize  $c_3\in\mathbb R,$}\\
	
	{\scriptsize $$}& {\scriptsize  $n(X_3)=c_3X_1+n_4X_3,  $}&	{\scriptsize  $n(X_4)=c_3X_2-n_2X_3+n_1X_4,$} &  {\scriptsize$n_2\in\mathbb R^{+} (or, n_2\in\mathbb R^{-}).$} \\
	
	\hline
	
	{\scriptsize $A_{4,6}^{a,0}$}& {\scriptsize  $n(X_1)=n_1X_1, $}&	{\scriptsize  $n(X_2)=n_2X_1+n_4X_2,$} &{\scriptsize  $$}\\
	
	{\scriptsize $$}& {\scriptsize  $n(X_3)=n_3X_1+n_4X_3,  $}&	{\scriptsize  $n(X_4)=-bn_3X_2+bn_2X_3+n_1X_4,$} &  {\scriptsize $$} \\
	
	\hline	

	{\scriptsize $A_{4,7}$}& {\scriptsize  $n(X_1)=n_1X_1, $}&	{\scriptsize  $n(X_2)=n_2X_1+n_1X_2,$} &{\scriptsize  $c_3\in\mathbb R,$}\\
	
	{\scriptsize $$}& {\scriptsize  $n(X_3)=c_3X_1+n_1X_3,  $}&	{\scriptsize  $n(X_4)=-2c_3X_2+2n_2X_3+n_1X_4,$} &  {\scriptsize$n_2\in\mathbb R^{+} (or, n_2\in\mathbb R^{-}).$} \\
	
	\hline
	
	{\scriptsize $A_{4,9}^{-\frac{1}{2}}$}& {\scriptsize  $n(X_1)=n_2X_1, $}&	{\scriptsize  $n(X_2)=2c_4X_1+n_2X_2,$} &{\scriptsize  $c_4\in\mathbb R-\{0\}.$}\\
	
	{\scriptsize $$}& {\scriptsize  $n(X_3)=n_2X_3,  $}&	{\scriptsize  $n(X_4)=c_4X_3+n_2X_4,$} &  {\scriptsize$$} \\
	\hline
	
	{\scriptsize $$}& {\scriptsize  $n(X_1)=n_2X_1, $}&	{\scriptsize  $n(X_2)=n_2X_2,$} &{\scriptsize  $c_3\in\mathbb R-\{0\}.$}\\
	
	{\scriptsize $$}& {\scriptsize  $n(X_3)=-2c_3X_1+n_2X_3,  $}&	{\scriptsize  $n(X_4)=c_3X_2+n_2X_4,$} &  {\scriptsize$$} \\
	
	\hline
	
	{\scriptsize $$}& {\scriptsize  $n(X_1)=n_2X_1+c_1X_2, $}&	{\scriptsize  $n(X_2)=2c_4X_1+n_2X_2,$} &{\scriptsize  $c_3\in\mathbb R,$}\\
	
	{\scriptsize $$}& {\scriptsize  $n(X_3)=-2c_3X_1+n_2X_3+2c_1X_4,  $}&	{\scriptsize  $n(X_4)=c_3X_2+c_4X_3+n_2X_4,$} &  {\scriptsize$c_1,c_4\in \mathbb R-\{0\}.$} \\
	
	\hline
	
	{\scriptsize $A_{4,9}^1$}& {\scriptsize  $n(X_1)=n_1X_1, $}&	{\scriptsize  $n(X_2)=c_2X_1+n_1X_2,$} &{\scriptsize  $c_2,c_3\in\mathbb R-\{0\}.$}\\
	
	{\scriptsize $$}& {\scriptsize  $n(X_3)=c_3X_1+n_1X_3,  $}&	{\scriptsize  $n(X_4)=-2c_3X_2+2c_2X_3+n_1X_4,$} &  {\scriptsize$$} \\
	
	\hline
	
	{\scriptsize $$}& {\scriptsize  $n(X_1)=n_1X_1, $}&	{\scriptsize  $n(X_2)=n_2X_1+n_1X_2,$} &{\scriptsize  $n_2\;\mbox{or}\;n_3=0.$}\\
	
	{\scriptsize $$}& {\scriptsize  $n(X_3)=n_3X_1+n_1X_3,  $}&	{\scriptsize  $n(X_4)=-2n_3X_2+2n_2X_3+n_1X_4,$} &  {\scriptsize$$} \\
	
	\hline
		{\scriptsize $A_{4,9}^0$}& {\scriptsize  $n(X_1)=n_1X_1, $}&	{\scriptsize  $n(X_2)=-2n_2X_1+n_1X_2,$} &{\scriptsize  $$}\\
		
		{\scriptsize $$}& {\scriptsize  $n(X_3)=-n_3X_1+n_1X_3,  $}&	{\scriptsize  $n(X_4)=n_2X_1+n_3X_2-2n_2X_3+n_1X_4.$} &  {\scriptsize$$} \\
		
		\hline
		
\end{tabular}

{\footnotesize   \bf{Table 3.}} \label{TT}
{\footnotesize (Continued.) }\\   \begin{tabular}{l | l l l l lp{40mm} }
	
	\hline\hline
	{\scriptsize $\mathfrak g$ }&{\scriptsize  $\mbox{Equivalence classes}$}&{\scriptsize $$}
	& {\scriptsize  $$}
	\smallskip\\
	\hline
	\smallskip

		{\scriptsize $A_{4,9}^b$}& {\scriptsize  $n(X_1)=n_1X_1, $}&	{\scriptsize  $n(X_2)=c_2X_1+n_1X_2,$} &{\scriptsize  $c_2,c_3\in \mathbb R-\{0\}.$}\\
		
		{\scriptsize $$}& {\scriptsize  $n(X_3)=c_3X_1+n_1X_3,  $}&	{\scriptsize  $n(X_4)=-c_3(1+b)X_2+c_2(1+b)X_3+n_1X_4,$} &  {\scriptsize$$} \\
		
		\hline
		
		{\scriptsize $$}& {\scriptsize  $n(X_1)=n_1X_1, $}&	{\scriptsize  $n(X_2)=c_2X_1+n_1X_2,$} &{\scriptsize  $c_2\in \mathbb R-\{0\}.$}\\
		
		{\scriptsize $$}& {\scriptsize  $n(X_3)=n_1X_3,  $}&	{\scriptsize  $n(X_4)=c_2(1+b)X_3+n_1X_4,$} &  {\scriptsize$$} \\
		
		\hline
		
		{\scriptsize $$}& {\scriptsize  $n(X_1)=n_1X_1, $}&	{\scriptsize  $n(X_2)=n_1X_2,$} &{\scriptsize  $c_3\in \mathbb R-\{0\}.$}\\
		
		{\scriptsize $$}& {\scriptsize  $n(X_3)=c_3X_1+n_1X_3,  $}&	{\scriptsize  $n(X_4)=-c_3(1+b)X_2+n_1X_4,$} &  {\scriptsize$$} \\
		\hline
		
		{\scriptsize $A_{4,11}^b$}& {\scriptsize  $n(X_1)=n_1X_1, $}&	{\scriptsize  $n(X_2)=c_2X_1+n_1X_2,$} &{\scriptsize  $c_2,c_3\in \mathbb R,$}\\
		
		{\scriptsize $$}& {\scriptsize  $n(X_3)=c_3X_1+n_1X_3,  $}&	{\scriptsize  $n(X_4)=-2bc_3X_2+2bc_2X_3+n_1X_4,$} &  {\scriptsize$c_2\;\mbox{or} \; c_3 \neq 0.$} \\
		\hline
		
		{\scriptsize $A_{4,12}$}& {\scriptsize  $n(X_1)=n_1X_1-n_2X_2, $}&	{\scriptsize  $n(X_2)=n_2X_1+n_1X_2,$} &{\scriptsize  $$}\\
		
		{\scriptsize $$}& {\scriptsize  $n(X_3)=\displaystyle n_3X_1+\frac{1}{2}bn_2X_2+n_1X_3+n_2X_4,  $}&	{\scriptsize  $n(X_4)=\displaystyle -\frac{1}{2}bn_2X_1+n_3X_2-n_2X_3+n_1X_4.$} &  {\scriptsize$$} \\
		\hline
		{\scriptsize $A_2\oplus A_2$}& {\scriptsize  $n(X_1)=n_1X_1-n_2X_4, $}&	{\scriptsize  $n(X_2)=n_1X_2,$} &{\scriptsize  $$}\\
		
		{\scriptsize $$}& {\scriptsize  $n(X_3)=\displaystyle n_2X_2+n_3X_3,  $}&	{\scriptsize  $n(X_4)=n_3X_4.$} &  {\scriptsize$$} \\
		\hline
	{\scriptsize $A_{4,5}^{-1,-1}$}& {\scriptsize  $n(X_1)=(n_2-n_1)X_1+c_5X_2+n_1X_4, $}&	{\scriptsize  $n(X_2)=(n_3-n_1)X_2-n_1X_3,$} &{\scriptsize  $c_5 \in \mathbb R,$}\\
	
	{\scriptsize $$}& {\scriptsize  $n(X_3)=n_4X_2+n_2X_3,  $}&	{\scriptsize  $n(X_4)=(n_2-n_3-n_4)X_1+c_5X_2+c_5X_3+n_3X_4,$} &  {\scriptsize$n_1\neq 0.$} \\
	
	\hline
	{\scriptsize $$}& {\scriptsize  $n(X_1)=n_2X_1+c_5X_2, $}&	{\scriptsize  $n(X_2)=n_3X_2,$} &{\scriptsize  $c_4 \in \mathbb R,$}\\
	
	{\scriptsize $$}& {\scriptsize  $n(X_3)=c_4X_2+n_2X_3,  $}&	{\scriptsize  $n(X_4)=(n_2-n_3-c_4)X_1+c_5X_2+c_5X_3+n_3X_4,$} &  {\scriptsize$c_5\in \mathbb R-\{0\}.$} \\
	
	\hline
	{\scriptsize $$}& {\scriptsize  $n(X_1)=n_2X_1, $}&	{\scriptsize  $n(X_2)=n_3X_2,$} &{\scriptsize  $c_4 \in \mathbb R.$}\\
	
	{\scriptsize $$}& {\scriptsize  $n(X_3)=c_4X_2+n_2X_3,  $}&	{\scriptsize  $n(X_4)=(n_2-n_3-c_4)X_1+n_3X_4,$} &  {\scriptsize$$} \\
	
	\hline
	
	{\scriptsize $A_{4,5}^{a,-a}$}& {\scriptsize  $n(X_1)=n_1X_1, $}&	{\scriptsize  $n(X_2)=-n_2X_1+n_4X_2,$} &{\scriptsize  $n_2,n_3\neq 0.$}\\
	
	{\scriptsize $$}& {\scriptsize  $n(X_3)=n_3X_1+n_4X_3,  $}&	{\scriptsize  $n(X_4)=n_3X_2+n_2X_3+n_1X_4,$} &  {\scriptsize$\ast$} \\
	
	\hline
	
	{\scriptsize $$}& {\scriptsize  $n(X_1)=n_1X_1, $}&	{\scriptsize  $n(X_2)=n_4X_2,$} &{\scriptsize  $c_3 \in \mathbb R-\{0\}.$}\\
	
	{\scriptsize $$}& {\scriptsize  $n(X_3)=c_3X_1+n_4X_3,  $}&	{\scriptsize  $n(X_4)=c_3X_2+n_1X_4,$} &  {\scriptsize$$} \\
	
	\hline
	{\scriptsize $$}& {\scriptsize  $n(X_1)=n_1X_1, $}&	{\scriptsize  $n(X_2)=-c_2X_1+n_4X_2,$} &{\scriptsize  $c_2 \in \mathbb R-\{0\}.$}\\
	
	{\scriptsize $$}& {\scriptsize  $n(X_3)=n_4X_3,  $}&	{\scriptsize  $n(X_4)=c_2X_3+n_1X_4,$} &  {\scriptsize$$} \\
	
	\hline
	{\scriptsize $$}& {\scriptsize  $n(X_1)=n_1X_1, $}&	{\scriptsize  $n(X_2)=n_4X_2,$} &{\scriptsize  $n_1\neq n_4.$}\\
	
	{\scriptsize $$}& {\scriptsize  $n(X_3)=n_4X_3,  $}&	{\scriptsize  $n(X_4)=n_1X_4,$} &  {\scriptsize$$} \\
	
	\hline
		{\scriptsize $A_{4,5}^{-1,a}$}& {\scriptsize  $n(X_1)=n_1X_1+n_3X_3, $}&	{\scriptsize  $n(X_2)=n_1X_2+n_2X_3,$} &{\scriptsize  $n_2,n_3\neq 0.$}\\
		
		{\scriptsize $$}& {\scriptsize  $n(X_3)=n_4X_3,  $}&	{\scriptsize  $n(X_4)=n_2X_1-n_3X_2+n_4X_4,$} &  {\scriptsize$\ast$} \\
		
		\hline
		
		{\scriptsize $$}& {\scriptsize  $n(X_1)=n_1X_1+c_3X_3, $}&	{\scriptsize  $n(X_2)=n_1X_2,$} &{\scriptsize  $c_3\in \mathbb R-\{0\}.$}\\
		
		{\scriptsize $$}& {\scriptsize  $n(X_3)=n_4X_3,  $}&	{\scriptsize  $n(X_4)=-c_3X_2+n_4X_4,$} &  {\scriptsize$$} \\
		
		\hline
		{\scriptsize $$}& {\scriptsize  $n(X_1)=n_1X_1, $}&	{\scriptsize  $n(X_2)=n_1X_2+c_2X_3,$} &{\scriptsize  $c_2\in \mathbb R-\{0\}.$}\\
		
		{\scriptsize $$}& {\scriptsize  $n(X_3)=n_4X_3,  $}&	{\scriptsize  $n(X_4)=c_2X_1+n_4X_4,$} &  {\scriptsize$$} \\
		
		\hline
		{\scriptsize $$}& {\scriptsize  $n(X_1)=n_1X_1, $}&	{\scriptsize  $n(X_2)=n_1X_2,$} &{\scriptsize  $n_1\neq n_4.$}\\
		
		{\scriptsize $$}& {\scriptsize  $n(X_3)=n_4X_3,  $}&	{\scriptsize  $n(X_4)=n_4X_4,$} &  {\scriptsize$$} \\
		
		\hline
			{\scriptsize $II\oplus\mathbb R$}& {\scriptsize  $n(X_1)=n_1X_1+n_5X_4, $}&	{\scriptsize  $n(X_2)=-c_2X_1+n_4X_2+n_5X_3,$} &{\scriptsize  $n_5\neq 0, n_1\neq n_4,$}\\
			
			{\scriptsize $$}& {\scriptsize  $n(X_3)=n_3X_2+n_1X_3+c_2X_4,  $}&	{\scriptsize  $n(X_4)=n_3X_1+n_4X_4,$} &  {\scriptsize$c_2\in \mathbb R.$} \\
			
			\hline
	{\scriptsize $$}& {\scriptsize  $n(X_1)=n_1X_1+n_5X_4, $}&	{\scriptsize  $n(X_2)=-c_2X_1+n_1X_2+n_5X_3,$} &{\scriptsize  $n_3,n_5\neq 0,$}\\
	
	{\scriptsize $$}& {\scriptsize  $n(X_3)=n_3X_2+n_1X_3+c_2X_4,  $}&	{\scriptsize  $n(X_4)=n_3X_1+n_1X_4,$} &  {\scriptsize$c_2\in \mathbb R.$} \\
			\hline
			
	{\scriptsize $$}& {\scriptsize  $n(X_1)=n_1X_1+n_5X_4, $}&	{\scriptsize  $n(X_2)=-c_2X_1+n_1X_2+n_5X_3,$} &{\scriptsize  $n_5\neq 0,$}\\
			
	{\scriptsize $$}& {\scriptsize  $n(X_3)=n_1X_3+c_2X_4,  $}&	{\scriptsize  $n(X_4)=n_1X_4,$} &  {\scriptsize$c_2\in \mathbb R.$} \\			
			\hline

	{\scriptsize $$}& {\scriptsize  $n(X_1)=n_1X_1, $}&	{\scriptsize  $n(X_2)=-c_2X_1+n_4X_2,$} &{\scriptsize  $\quad n_1\neq n_4,$}\\
	
	{\scriptsize $$}& {\scriptsize  $n(X_3)=c_3X_2+n_1X_3+c_2X_4,  $}&	{\scriptsize  $n(X_4)=c_3X_1+n_4X_4,$} &  {\scriptsize$c_2,c_3\in \mathbb R.$} \\
	\hline
	{\scriptsize $$}& {\scriptsize  $n(X_1)=n_1X_1, $}&	{\scriptsize  $n(X_2)=-c_2X_1+n_1X_2,$} &{\scriptsize  $ c_2\in \mathbb R,$}\\
	
	{\scriptsize $$}& {\scriptsize  $n(X_3)=c_3X_2+n_1X_3+c_2X_4,  $}&	{\scriptsize  $n(X_4)=c_3X_1+n_1X_4,$} &  {\scriptsize$c_3\in \mathbb R-\{0\}.$} \\
	\hline	
	{\scriptsize $$}& {\scriptsize  $n(X_1)=n_1X_1, $}&	{\scriptsize  $n(X_2)=-c_2X_1+n_1X_2,$} &{\scriptsize  $ c_2\in \mathbb R-\{0\}.$}\\
	
	{\scriptsize $$}& {\scriptsize  $n(X_3)=n_1X_3+c_2X_4,  $}&	{\scriptsize  $n(X_4)=n_1X_4,$} &  {\scriptsize$$} \\
	\hline	
	
\end{tabular}
{\footnotesize   \bf{Table 3.}} \label{TT}
{\footnotesize (Continued.) }\\   \begin{tabular}{l | l l l l lp{40mm} }
	
	\hline\hline
	{\scriptsize $\mathfrak g$ }&{\scriptsize  $\mbox{Equivalence classes}$}&{\scriptsize $$}
	& {\scriptsize  $$}
	\smallskip\\
	\hline
	\smallskip
			
	{\scriptsize $VI_0\oplus \mathbb R$}& {\scriptsize  $n(X_1)=n_1X_1+n_3X_4, $}&	{\scriptsize  $n(X_2)=n_1X_2+n_2X_4,$} &{\scriptsize  $ n_2, n_3\neq 0,$}\\
	
	{\scriptsize $$}& {\scriptsize  $n(X_3)=n_2X_1-n_3X_2+n_4X_3,  $}&	{\scriptsize  $n(X_4)=n_4X_4,$} &  {\scriptsize$\ast$} \\		
	
	\hline
	
	{\scriptsize $$}& {\scriptsize  $n(X_1)=n_1X_1+c_3X_4, $}&	{\scriptsize  $n(X_2)=n_1X_2,$} &{\scriptsize  $ c_3\in \mathbb R-\{0\}.$}\\
	
	{\scriptsize $$}& {\scriptsize  $n(X_3)=-c_3X_2+n_4X_3,  $}&	{\scriptsize  $n(X_4)=n_4X_4,$} &  {\scriptsize$$} \\	
	\hline	
		{\scriptsize $$}& {\scriptsize  $n(X_1)=n_1X_1, $}&	{\scriptsize  $n(X_2)=n_1X_2+c_2X_4,$} &{\scriptsize  $ c_2\in \mathbb R-\{0\}.$}\\
		
		{\scriptsize $$}& {\scriptsize  $n(X_3)=c_2X_1+n_4X_3,  $}&	{\scriptsize  $n(X_4)=n_4X_4,$} &  {\scriptsize$$} \\	
		\hline
		{\scriptsize $$}& {\scriptsize  $n(X_1)=n_1X_1, $}&	{\scriptsize  $n(X_2)=n_1X_2,$} &{\scriptsize  $ n_1\neq n_4.$}\\
		
		{\scriptsize $$}& {\scriptsize  $n(X_3)=n_4X_3,  $}&	{\scriptsize  $n(X_4)=n_4X_4,$} &  {\scriptsize$$} \\	
		\hline	
	{\scriptsize $VII_0\oplus \mathbb R$}& {\scriptsize  $n(X_1)=n_1X_1+n_3X_4, $}&	{\scriptsize  $n(X_2)=n_1X_2+n_2X_4,$} &{\scriptsize  $ n_2\;\mbox{or}\; n_3\neq 0,$}\\
	
	{\scriptsize $$}& {\scriptsize  $n(X_3)=n_2X_1-n_3X_2+n_4X_3,  $}&	{\scriptsize  $n(X_4)=n_4X_4,$} &  {\scriptsize$\ast$} \\	
		
		\hline 
		
	{\scriptsize $$}& {\scriptsize  $n(X_1)=n_1X_1, $}&	{\scriptsize  $n(X_2)=n_1X_2,$} &{\scriptsize  $ n_1\neq  n_4.$}\\
	
	{\scriptsize $$}& {\scriptsize  $n(X_3)=n_4X_3,  $}&	{\scriptsize  $n(X_4)=n_4X_4,$} &  {\scriptsize$$} \\		
	\hline
	\hline		
	\end{tabular}

Note that, the Nijenhuis structures denoted by $(\ast)$ on the Lia algebras $A^{a,-a}_{4,5}$, $A^{-1,a}_{4,5}$ and $VI_0\oplus \mathbb R$ for which the multiplication $n_2n_3$ are the same value, are equivalent; and on the Lie algebra $VII_0\oplus \mathbb R$ the structures for which $n_2n_3=0$, are equivalent.

\section{Physical application}\label{app}
 We consider the method of constructing classical dynamical systems by using $r$-matrices (see \cite{Zh},\cite{ReSe},\cite{Egh}) for $r$-$n$ structures.

 Let manifold $M^{2n}$ with the symplectic structure $\omega_{ij}$ and local coordinate $\{x_i\}$ be as a phase space and let Lie group $G$ with Lie algebra $\mathfrak g$ be symmetry group of a dynamical system. The Poisson bracket $\{\cdot,\cdot\}$ defined by symplectic structure $\omega_{ij}$ for arbitrary functions $f,g\in C^{\infty}(M)$ is
 \[
 \{f,g\}=\Pi^{ij}\frac{\partial f}{\partial x_i}\frac{\partial g}{\partial x_j},
 \]
 where $\Pi^{ij}$ is the inverse of the matrix $\omega_{ij}$.

 \noindent Now, we consider a dynamical system with symmetry group $G$ for which dynamical variables $S_k=S_k(x_i)$, $(k=1,...,dim \mathfrak g)$ are constructed as functions on $M$ satisfying the relation
 $$ \{S_i,S_j\}=C^k_{ij}S_k,$$
 where $C^k_{ij}$ are structure constants of the Lie algebra $\mathfrak g$ of the symmetry group $G$.

  \noindent Consider an invertible $r$-matrix $r\in \wedge ^2\mathfrak g$,  where $r=r^{ij}X_i\wedge X_j$. If we choose an irreducible representation of the Lie algebra $\mathfrak g$ and take $\{T_i\}$ to be a basis of $\mathfrak g$ in this representation, then by $\mathfrak g$-valued functions
\begin{equation}\label{function}
Q(x)=S_ir^{ij}T_j,
\end{equation}
 one may define functions
 \begin{equation} \label{I}
 I_k=trace(Q^k), \quad k\in \mathbb N,
 \end{equation}
as constants of motion of the dynamical system (for more details see, \cite{Zh}). The dynamical system would be completely integrable (in the sense of Liouville) if there are $n$ independent constants of motion and would be superintegrable if there are additional independent invariants up to $2n-1$.

 Now consider an $r$-$n$ structure $(r,n) $ and $r$-matrix $n\circ r$, thus
\[
Q=S_in^i_kr^{kj}T_j\quad \Longrightarrow \quad I_1=S_in^i_kr^{kj}tr(T_j).
\]

\noindent Suppose ${r^{(l)}}=n^{(l)}\circ r$ be other solutions of the CYBE, so
\[
Q^{(l)}=S_i(n^l)^i_kr^{kj}T_j\quad \Longrightarrow \quad I_1^{(l)}=S_i(n^l)^i_kr^{kj}tr(T_j).
\]
\begin{remark}\label{last}
If we take $H:=I_1$ and $H^l:=I_1^{(l)}$ as Hamiltonian functions of dynamical systems, then $H=\sum H^{(l)}$, because we can take $n:=\sum n^{(l)}$.
\end{remark}
\subsection*{Example}

Consider the Euclidean space $\mathbb R^4$ as a phase space with the standard symplectic structure
 $$\omega=dx_1\wedge dx_2+dx_3\wedge dx_4,$$
  so the Poisson bracket is
\[
\{x_1,x_2\}=\{x_3,x_4\}=0, \quad \{x_1,x_3\}=\{x_2,x_4\}=1.
\]
Suppose Lie algebra $A_{4,1}$ be a symmetry group of a dynamical system. One can find functions $S_k(x_i)$ by using the realizations of the Lie algebra \cite{Ab}. Realizations of real solvable four-dimensional Lie algebras are listed in table $5$ of \cite{Reali}. We take the realizations
\[
X_1:=\partial _1,\quad X_2:=x_2\partial _1,\quad X_3:=\frac{1}{2}x_2^2\partial _1,\quad X_4:=-\partial _2,
\]
of the Lie algebra $A_{4,1}$ in that table, thus functions $S_k(x_i)$ can be regarded as
\[
S_1=-x_3,\quad S_2=-x_2x_3,\quad S_3=-\frac{1}{2}x_2^2x_3,\quad S_4=x_4.\quad
\]
\noindent Since Lie algebra $A_{4,1}$ is solvable, we can take a representation of four-dimensional triangular matrices for this Lie algebra as follows:
\[
 {\scriptsize
 	T_{1}=\left(\begin{array}{cccc}
 0& 1& 0& 1 \\ 0& 0& 0& 0  \\ 0& 0& 0& 0  \\0& 0& 0& 1\\
 	\end{array}\right),\quad
 	T_{2}=\left(\begin{array}{cccc}
  0& 1& 0& 2 \\ 0& 0& 0& 0  \\ 0& 0& 0& 0  \\0& 0& 0& 0\\ \end{array} \right),\quad
 T_{3}=\left(\begin{array}{cccc}
 1& 1& 0& 0 \\ 0& 1& 0& 1 \\ 0& 0& 0& 0  \\0& 0& 0& 0\\ \end{array} \right),\quad
T_{4}=\left(\begin{array}{cccc}
 0& 0& 0& 1 \\ 0& 1& 0& 1  \\ 0& 0& 0& 0  \\0& 0& 0& 0\\ \end{array} \right)},
 \]
 where $[T_i,T_j]=C_{ij}^kT_k$ and $C^k_{ij}$ are structure constants of Lie algebra $A_{4,1}$. From (\ref{function}) for the r-matrix $r=X_1 \wedge X_4-X_2 \wedge X_3$ we have
\[
Q(x)=-S_4T_1+S_3T_2-S_2T_3+S_1T_4.
\]
Using (\ref{I}), constants of motion are
\[
I_1 = 2x_2x_3-x_3-x_4,\quad 
I_2=x_2^2x_3^2+(x_2x_3-x_3)^2+x_4^2,\quad
I_3= x_2^3x_3^3+(x_2x_3-x_3)^3-x_4^3,
\]
which means the dynamical system is superintegrable. We consider the following representatives of other equivalence classes of $r$-matrices for this algebra
\[
r^{(1)}= X_1 \wedge X_2 \quad  r^{(2)}=X_3 \wedge X_1 ,\quad
r^{(3)}= X_1 \wedge X_4-X_1\wedge X_3,
\quad r^{(4)}= X_3 \wedge X_2.
\]
One can find functions $Q^{(l)}$ as
\[
\begin{array}{rclcrclcrclcrcl}
Q^{(1)}&=& S_1T_2-S_2T_1,&&
Q^{(2)}&=& S_3T_1-S_1T_3,\\[4pt]
Q^{(3)}&=& (S_3-S_4)T_1-S_1T_3+S_1T_4,&&
Q^{(4)}&=& S_3T_2-S_2T_3,\\
\end{array}
\]
for the solutions $r^{(1)}$, $r^{(2)}$, $r^{(3)}$ and $r^{(4)}$, respectively; their corresponding constants of motion are
\[
\begin{array}{rclcrclcrcl}
I_1^{(1)}&=& x_2x_3,\\[6pt]
I_1^{(2)}&=& 2x_3-\frac{1}{2}x_2^2x_3,&&
I_2^{(2)}&=& 2x_3^2+(\frac{1}{2}x_2^2x_3)^2,\\[6pt]
I_1^{(3)}&=& x_3-\frac{1}{2}x_2^2x_3-x_4,&&
I_2^{(3)} &=& x_3^2+(\frac{1}{2}x_2^2x_3+x_4)^2,\\[6pt]
I_1^{(4)}&=& 2x_2x_3.\\
\end{array}
\]

\noindent They are constants of motion of four dynamical systems which have the same symmetry group $A_{4,1}$. Dynamical systems related to the second and third one are completely integrable.

Since $r$-matrices $r$ and $r^{(l)}$ are compatible, there exist Nijenhuis structures $n^{(l)}:=r^{(l)}\circ r^{-1}$ such that the couples $(r,n^{(l)})$ are compatible $r$-$n$ structures. In fact, $n^{(l)}$'s are the following elements of the equivalence classes of Nijenhuis structures given in table $3$ corresponding to $r$-$n$ structures $(r,n)$ on the Lie algebra $A_{4,1}$
\[
\begin{array}{rclcrclcrclcrclcrclcrclcrclcrcl}
n^{(1)}(X_1)&=& 0, && n^{(1)}(X_2)&=& 0,&& n^{(1)}(X_3)&=&  X_1, && n^{(1)}(X_4)&=&  X_2,\\

n^{(2)}(X_1)&=&  0,&& n^{(2)}(X_2)&=&  X_1,&& n^{(2)}(X_3)&=&  0, && n^{(2)}(X_4)&=&  -X_3,\\

n^{(3)}(X_1)&=&  X_1, && n^{(3)}(X_2)&=& X_1,&& n^{(3)}(X_3)&=&  0, && n^{(3)}(X_4)&=&  -X_3+X_4,\\
n^{(4)}(X_1)&=&  0, && n^{(4)}(X_2)&=& X_2,&& n^{(4)}(X_3)&=&  X_3, && n^{(4)}(X_4)&=&  0.\\
\end{array}
\] 

\noindent  According to Remark \ref{last}, if we take $H^{l}:=I^{(l)}_1$ as Hamiltonian functions of dynamical systems, then $$H:=\sum H^{(l)}:=I_{1}=3x_3+3x_2x_3-x_2^2x_3-x_4$$
 is a Hamiltonian function for the dynamical system which corresponds to $r$-matrix $n\circ r$, such that $n:=\sum n^{(l)}$ is
\[
n(X_1)=X_1,\quad n(X_2)=2X_1+X_2, \quad n(X_3)=X_1+X_3,\quad n(X_4)=X_2-2X_3+X_4.
\]
This dynamical system is superintegrable and two others constants of motion are
\[
\begin{array}{rcl}
I_2 &:=& (x_2x_3+2x_3)^2+(x_2x_3+x_3)^2+(x_2x_3-x_4-x_2^2x_3)^2,\\[3pt]
I_3 &:=&(x_2x_3+2x_3)^3+(x_2x_3+x_3)^3+(x_2x_3-x_4-x_2^2x_3)^3.
\end{array}
\]


\section{Concluding remarks}
In this paper we consider invariant Poisson-Nijenhuis structures on Lie groups. We characterized the infinitesimal counterpart of such structures and showed that, under a certain condition, they are in one-to-one correspondence with compatible solutions of the classical Yang-Baxter equation. As an important result, an invertible $r$-matrix can be used to produce compatible $r$-matrices by using the theory of $r$-$n$ structures.

 Given a phase space with a Lie group as a symmetry group of a dynamical system, using the results of Section \ref{Section4} and the classification procedure in Section \ref{method}, one can explicitly find a number of dynamical systems on phase space which can be constructed using $r$-matrices.

We listed all $r$-matrices and all $r$-$n$ structures with invertible $r$, up to a natural equivalence, on four-dimensional symplectic real Lie algebras; equivalently we have, up to an equivalence, all invariant Poisson structures, and all invariant Poisson-Nijenhuis structures with invertible Poisson structures on the corresponding Lie groups.

It is wroth mentioning that having right invariant Poisson structures, one can produce multiplicative Poisson structures which are the Poisson structures for which the group multiplication is a Poisson epimorphism \cite{Va}. Under the same hypotheses as in Corollary \ref{hierarchy-right-inv}, we deduce that  
\[
\Pi_k = \lvec{r_k} - \rvec{r_k}
\]
is a coboundary multiplicative Poisson structure on $G$, for every $k \in \mathbb{N}$. Moreover, since $\rvec{r_k}$ and $\rvec{r_l}$ are compatible right-invariant Poisson structures, it follows that $[r_k, r_l] = 0$ and, thus,
\[
[\Pi_k, \Pi_l] = \lvec{[r_k, r_l]} + \rvec{[r_k, r_l]} = 0
\]
(note that $[\lvec{r_k}, \rvec{r_l}] = 0$). Therefore, $\Pi_k$ and $\Pi_l$ are compatible multiplicative Poisson structures on $G$.

A generalization of the Poisson coalgebra approach to bi-Hamiltonian systems was shown in \cite {BaMaRa}, and a construction of integrable deformations of bi-Hamiltonian systems, based on the theory of multiplicative Poisson structures on the Lie groups, was presented (for more details, see \cite{BaBlMu}, \cite{BaMaRa}, \cite{BaRa}).

 As an application of our work, one can investigate the deformation of completely integrable Hamiltonian and bi-Hamiltonian systems using the hierarchies of multiplicative Poisson structures associated with right-invariant $P$-$N$ structures on the Lie groups, and the relation between the obtained results on the Lie algebras in the present paper and the underlying geometric structure of such systems on the Lie groups.
  
 It would be also of interest to get insight into the relation between $r$-$n$ structures and quantum $r$-matrices and study the procedure along quantization, we hope to present this elsewhere.

 \subsection*{Acknowledgments}

 The first author is greatly indebted to Professor Juan Carlos Marrero for his support and valuable discussions and remarks. She would also like to express her gratitude to University of La Laguna for the hospitality in a visit. Her warmest thanks to Professor Janusz Grabowski, for his patience at carefully reading the manuscript and helpful suggestions to improve significantly the presentation of the paper, and Institute of Mathematics Polish Academy of Sciences for the hospitality in a visit where a part of this work was done. She also would like to thank Professor Angel Ballesterose for his useful comment.

\subsection*{Appendix}
\begin{center}
{\footnotesize   Table 4.} {\footnotesize  
 	Automorphisms groups of four-dimensional symplectic real Lie algebras.}\\
 \end{center}
\begin{tabular}{| l | l | l | l | p{15mm} }
\hline\hline
\vspace{-1mm}
{\scriptsize Lie algebra }&{\scriptsize Automorphisms group}
& {\scriptsize Lie algebra }& {\scriptsize Automorphisms group}  \smallskip\\
\hline


{\scriptsize $II \oplus \mathbb R$} & {\scriptsize $\left(\begin{array}{cccc}
	a_{11}a_6 - a_{10}a_7 & a_2& a_3& a_4\\
	0& a_6& a_7& 0\\   0& a_{10}& a_{11} & 0\\   0& a_{14}& a_{15}& a_{16}\\
	\end{array} \right)$}&

{\scriptsize $VI_{0}\oplus \mathbb R$}&

{\scriptsize $\left(\begin{array}{cccc}
	a_1 & 0& a_3& 0\\
	0& a_6& a_7& 0\\   0& 0& 1 & 0\\   0& 0& a_{15}& a_{16}\\
	\end{array} \right)$}\\

\hline


{\scriptsize $VII_{0}\oplus \mathbb R$} & {\scriptsize $\left(\begin{array}{cccc}
	a_1 & a_2& a_3& 0\\
	-a_2& a_1& a_7& 0\\   0& 0& 1 & 0\\   0& 0& a_{15}& a_{16}\\
	\end{array} \right)$}&

{\scriptsize $A_2\oplus A_2$}&
{\scriptsize $\left(\begin{array}{cccc}
	1 & 0&  0&  0\\
	a_5 & a_6& 0& 0\\0 & 0& 1& 0\\0 & 0& a_{15} & a_{16}\\
	\end{array} \right)$}\\
\hline


{\scriptsize $A_{4,2}^{-1}$}&

{\scriptsize $\left(\begin{array}{cccc}
	a_1& 0& 0 &a_4\\
	0& a_6& 0& a_8\\    0& 0& a_6& 0\\  0& 0 & 0& 1\\
	\end{array} \right)$}&

{\scriptsize $A_{4,3}$} & {\scriptsize $\left(\begin{array}{cccc}
	a_1& 0& 0& a_4\\
	0& a_6 & a_7& a_8\\  0 & 0& a_6& a_{12}\\    0& 0& 0 & 1\\
	\end{array} \right)$}\\

\hline


{\scriptsize $A_{4,5}^{a,-a}$} & {\scriptsize $\left(\begin{array}{cccc}
	a_1& 0& 0& a_4\\
	0& a_6& 0& a_8\\  0 & 0& a_{11}& a_{12}\\    0& 0& 0 & 1\\
	\end{array} \right)$}&

{\scriptsize $A_{4,5}^{-1,-1}$} & {\scriptsize $\left(\begin{array}{cccc}
	a_1& 0& 0& a_4\\
	0& a_6& a_7& a_8\\  0 & a_{10}& a_{11}& a_{12}\\    0& 0& 0 & 1\\
	\end{array} \right)$}\\

\hline


{\scriptsize $A_{4,6}^{a,0}$} & {\scriptsize $\left(\begin{array}{cccc}
	a_1& 0& 0& a_4\\
	0& a_6 & a_7& a_8\\  0 & -a_7& a_6& a_{12}\\   0& 0& 0 & 1\\
	\end{array} \right)$}&

{\scriptsize $A_{4,5}^{-1,a}$} & {\scriptsize $\left(\begin{array}{cccc}
	a_1& 0& 0& a_4\\
	0& a_6& 0& a_8\\  0 & 0& a_{11}& a_{12}\\    0& 0& 0 & 1\\
	\end{array} \right)$}\\

\hline


{\scriptsize $A_{4,9}^{-\frac{1}{2}}$} & {\scriptsize $\left(\begin{array}{cccc}
	a_{11}a_6& 2a_{12}a_6& a_8a_{11}& a_{4}\\
	0& a_6& 0& a_8\\ 0& 0& a_{11}& a_{12}\\  0& 0& 0& 1\\
	\end{array} \right)$}&

{\scriptsize $A_{4,1}$}&
{\scriptsize $\left(\begin{array}{cccc}
	a_{11}	a_{16}^{2}& a_7a_{16}& a_3& a_4\\
	0 & a_{11}a_{16} & a_7& a_8\\0 & 0& a_{11}& a_{12}\\0& 0& 0& a_{16}\\
	\end{array} \right)$}\\

\hline


{\scriptsize $A_{4,9}^{0}$} & {\scriptsize $\left(\begin{array}{cccc}
	a_{11}a_6 & a_2& a_8a_{11}& a_4\\
	0 & a_6 & 0& a_8\\ 0& 0& a_{11}& 0\\ 0& 0& 0& 1\\
	\end{array} \right)$}&

{\scriptsize $A_{4,9}^{b}$} & {\scriptsize $\left(\begin{array}{cccc}
	a_{11}a_6 & -\displaystyle\frac{a_{12}a_6}{b}& a_8a_{11}& a_4\\
	0 & a_6 & 0& a_8\\ 0& 0& a_{11}& a_{12}\\ 0& 0& 0& 1\\
	\end{array} \right)$}\\

\hline


{\scriptsize $A_{4,12}$} & {\scriptsize $\left(\begin{array}{cccc}
	a_1& a_2& a_3& a_4\\
	-a_2& a_1& a_4& -a_3\\ 0& 0& 1& 0\\  0& 0& 0& 1\\
	\end{array} \right)$}&
{\scriptsize $A_{4,7}$}&

{\scriptsize $\left(\begin{array}{cccc}
	a_{6}^{2}& -a_{12}a_6& a_6a_8-a_{12}(a_6+a_7) & a_4\\
	0& a_6& a_7& a_8\\   0& 0& a_6& a_{12}\\  0& 0 & 0& 1\\
	\end{array} \right)$}\\
\hline


{\scriptsize $A_{4,11}^{b}$}&
{\scriptsize $\left(\begin{array}{cccc}
	a_{6}^{2}+ a_{7}^{2}& \displaystyle\frac{-A}{1+b^2}&  \displaystyle\frac{-B}{1+b^2} & a_4\\ 
	0& a_6& a_7& a_8\\
	
	0&	-a_7& a_6& a_{12}\\  0& 0& 0& 1 \\
	\end{array} \right)$}&

{\scriptsize $A_{4,9}^{1}$}&
{\scriptsize $\left(\begin{array}{cccc}
	a_{11}a_6 - a_{10}a_7& C& D& a_4\\ 0& a_6& a_7& a_8\\
	0& a_{10}& a_{11} & a_{12}\\   0& 0& 0& 1\\
	\end{array} \right)$}\\

{\scriptsize $$}& {\scriptsize$A:=a_6(ba_{12}+a_8)+a_7(ba_8-a_{12})$}&{\scriptsize $$}&{\scriptsize $C:=a_{10}a_8-a_{12}a_6$}\\
{\scriptsize $$}& {\scriptsize$B:=a_6(a_{12}-ba_8)+a_7(ba_{12}+a_{8})$}&{\scriptsize $$}& {\scriptsize$D:=a_8a_{11}-a_7a_{12}$}\\

\hline
\end{tabular}
\newpage

 \end{document}